\documentclass[11pt]{article}

\input{Style.sty}

\usepackage{graphicx}
\graphicspath{{fig/}}

\title{Sparse Quantum State Preparation with Sublinear T-Count}
\author{Jingquan Luo\thanks{luojq25@mail2.sysu.edu.cn} }
\author{Lvzhou Li\thanks{Corresponding author: lilvzh@mail.sysu.edu.cn}}
\affil{Institute of Quantum Computing  and Software, School of Computer Science and Engineering, Sun Yat-sen University, Guangzhou 510006, China}

\author{}
\date{ }

\begin{document}
\maketitle

\begin{abstract}
We study the fault-tolerant cost of preparing sparse quantum states, measured by $T$-count in the Clifford+$T$ model. Here an $n$-qubit state is called $s$-sparse if it is supported on at most $s$ computational-basis states. For arbitrary $n$-qubit states, the optimal $T$-count is $\Theta(\sqrt{2^n\log(1/\epsilon)}+\log(1/\epsilon))$, but for $s$-sparse states the best previous upper bounds remained linear in $s$. We show that any $n$-qubit $s$-sparse state can be prepared up to error $\epsilon$ using $\widetilde{O}(\min\{s,\ n^{3/4}\sqrt{s}\}+\sqrt{s\log(1/\epsilon)}+\log(1/\epsilon))$ $T$ gates, giving the first sublinear dependence on $s$ once the support is sufficiently large. Our approach is based on a support-aware synthesis theorem for sparse Boolean functions, which may be of independent interest. We also prove that, for every $0<\epsilon\le 1/6$ and $2\le s\le 2^{n/2}$, sparse-state preparation requires $\Omega(\min\{s,\sqrt{ns}\})$ $T$ gates, showing that linear dependence on $s$ is unavoidable in the small-support regime and substantially narrowing the gap between the known upper and lower bounds within this parameter range.
\end{abstract}

\section{Introduction}
\label{sec:introduction}
Quantum state preparation is a central primitive in quantum computing. It underlies the initialization of structured superpositions in tasks such as quantum linear systems, Hamiltonian simulation, and amplitude-encoded quantum machine learning \cite{harrow2009quantum,childs2017quantum,low2019hamiltonian,kerenidis2019q}. At the same time, canonical synthesis results for arbitrary states show that the task is expensive in the worst case, because specifying a generic $n$-qubit pure state requires $\Theta(2^n)$ real parameters \cite{shende2005synthesis,plesch2011quantum,iten2016quantum}. This makes state preparation especially interesting when the target state has additional structure that can be translated into a circuit advantage.

In the fault-tolerant setting, the relevant resource is usually not the total number of elementary gates but the number of non-Clifford gates, especially $T$ gates. The Clifford+$T$ gate set is universal, but it is also discrete, so exact synthesis is unavailable for generic target amplitudes. We therefore study approximate state preparation in the unitary Clifford+$T$ model~\cite{nielsen2010quantum}. For density operators $\rho$ and $\sigma$, let $D_{\mathrm{tr}}(\rho,\sigma):=\frac12\|\rho-\sigma\|_1$. For pure states $\ket{\psi}$ and $\ket{\phi}$, let $D_{\mathrm{tr}}(\ket{\psi},\ket{\phi}):=D_{\mathrm{tr}}(\ket{\psi}\!\bra{\psi},\ket{\phi}\!\bra{\phi})=\sqrt{1-\left|\braket{\psi|\phi}\right|^2}$. We say that a unitary Clifford+$T$ circuit $\epsilon$-approximately prepares an $n$-qubit state $\ket{\psi}$ if, for some $a\ge 0$, starting from $\ket{0^{n+a}}$ it outputs an $(n+a)$-qubit pure state $\ket{\phi}$ such that
\begin{align}
D_{\mathrm{tr}}\!\left(\ket{\phi},\ket{\psi}\otimes \ket{0^a}\right)\le \epsilon.
\end{align}
Within this model, Clifford operations are comparatively inexpensive to implement fault tolerantly, whereas $T$ gates are costly because they rely on magic-state injection and distillation \cite{litinski2019game, bravyi2012magic}. Throughout this paper, we therefore use $T$-count as the primary complexity measure for state preparation.

Recent work has clarified the $T$ complexity of general quantum state preparation. Let $N = 2^n$. Low et al. established explicit tradeoffs between ancillary workspace and $T$-count, and Gosset et al. showed that an arbitrary $n$-qubit state can be $\epsilon$-approximately prepared using $O(\sqrt{N\log(1/\epsilon)} + \log(1/\epsilon))$ $T$ gates and ancillas, and that this scaling is asymptotically optimal for general states in the stronger model of adaptive Clifford+$T$ circuits \cite{low2024trading,gosset2024quantum}. These results show that ancillary workspace can significantly reduce the cost of unstructured state preparation. However, the worst-case complexity is still exponential in $n$, because the target state is treated as completely generic.

A natural structured family is that of sparse quantum states. We call an $n$-qubit state $s$-sparse if its support in the computational basis has size at most $s$; equivalently, there exist distinct basis strings $x^{(0)},\ldots,x^{(s-1)} \in \{0,1\}^n$ and coefficients $\alpha_0,\ldots,\alpha_{s-1}\in\mathbb C$ such that
\begin{align}
|\psi\rangle = \sum_{j=0}^{s-1} \alpha_j |x^{(j)}\rangle,
\end{align}
where $\sum_{j=0}^{s-1} |\alpha_j|^2 = 1$. We denote its support by $\operatorname{supp}(\ket{\psi}) := \{x\in\{0,1\}^n : \langle x |\psi\rangle \neq 0\}$. When $s \ll 2^n$, the natural question is whether the synthesis cost is governed by the support size rather than by the ambient Hilbert-space dimension.

This question is supported by a substantial circuit-complexity literature on sparse state preparation~\cite{gleinig2021efficient, malvetti2021quantum, ramacciotti2023simple, mozafari2022efficient, de2020circuit, de2022double, zhang2022quantum, zhang2024circuit, wang2025faster, harrow2025randomized}. Earlier work gave efficient constructions for sparse states with circuit size $O(ns)$~\cite{gleinig2021efficient, malvetti2021quantum, ramacciotti2023simple, mozafari2022efficient, de2020circuit, de2022double}. Subsequent results sharpened this bound and developed refined space-time tradeoffs \cite{mao2024towards,luo2024circuit,luo2025space}. However, these works focus primarily on total gate count, depth, or space-time complexity rather than the fault-tolerant non-Clifford cost.

Only a small number of recent works speak directly to that metric~\cite{vilmart2025resource, rupprecht2026sparse}. In particular, Vilmart, Ty, and Mang~\cite{vilmart2025resource} gave a sparse-state preparation algorithm with $O(s)$ non-Clifford gates and $\max\{s-n, 0\}$ ancillas. Rupprecht and W\"olk give a sparse-state preparation method with Toffoli complexity $O(s)$ \cite{rupprecht2026sparse}; since each Toffoli admits a constant-size Clifford+$T$ implementation~\cite{nielsen2010quantum}, this yields an $O(s)$ contribution to the corresponding $T$-count. Combined with the optimal dense-state preparation bound on a $\lceil \log s \rceil$-qubit register \cite{gosset2024quantum}, this gives an $O(s + \log(1/\epsilon))$ upper bound for $\epsilon$-approximate preparation of $s$-sparse states. This leaves a basic question: can the $T$-count for preparing $s$-sparse states be made sublinear in $s$, at least in some parameter regime?

\subsection{Contribution}

Our main result is a new $T$-count upper bound that becomes sublinear in $s$ once the support reaches $\widetilde{\Omega}(n^{3/2})$, together with lower bounds showing that, for $0<\epsilon\le \frac{1}{6}$, linear dependence on $s$ is unavoidable for $s = O(n)$ and that every construction must still pay at least $\Omega(\sqrt{ns})$ for $\Omega(n) \le s \le 2^{n/2}$. Table~\ref{tab:intro-sqsp-complexity} and Figure~\ref{fig:intro-sparse-bounds} summarize this picture. We first state the constant-error form of the upper bound to highlight the dependence on $n$ and $s$.

\begin{theorem}[Informal version of \cref{thm:sqsp-main}]
\label{thm:intro-sqsp-informal}
    Any $n$-qubit $s$-sparse state can be prepared up to constant error by a Clifford+$T$ circuit starting with the all-zeros state using
    $\widetilde{O}\left(n^{3/4}\sqrt{s}\right)$
    $T$ gates and $\widetilde{O}(\sqrt{ns})$ ancillas\footnote{The $\widetilde{O}$ notation suppresses polylogarithmic factors, such as $\log n$ and $\log s$.}.
\end{theorem}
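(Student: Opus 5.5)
Prepare a dense state on a small register indexing the support, then map indices to support strings with a reversible circuit. Let $m=\lceil\log s\rceil$ and write $\ket{\psi}=\sum_{j<s}\alpha_j\ket{x^{(j)}}$.

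\textbf{Step 1 (dense preparation).} Using the result of Gosset et al.~\cite{gosset2024quantum} on an $m$-qubit register, prepare $\sum_j \alpha_j\ket{j}$ up to constant error. This costs $O(\sqrt{s})$ $T$ gates and $O(\sqrt{s})$ ancillas.

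\textbf{Step 2 (index-to-string map).} Define the Boolean function $f:\{0,1\}^m\to\{0,1\}^n$ by $f(j)=x^{(j)}$. Implement the map $\ket{j}\ket{0^n}\mapsto\ket{j}\ket{x^{(j)}}$ by computing the $n$ output bits $f_1,\dots,f_n$.

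\textbf{Step 3 (uncompute the index).} Clear the index register with $\ket{j}\ket{x^{(j)}}\mapsto\ket{0}\ket{x^{(j)}}$. This is legitimate because $j$ is determined by $x^{(j)}$. Two ways to do it:
\begin{itemize}
\item Choose an injective hash $h:\{0,1\}^n\to\{0,1\}^{O(\log s)}$ on the support, for instance a linear map over $\mathbb F_2$, which is Clifford. Relabel so that the index equals $h(x)$; then uncomputing costs only CNOTs.
\item Alternatively, view the inverse map as a Boolean function on $n$ bits supported on $s$ points, and apply a support-aware synthesis for sparse Boolean functions.
\end{itemize}
I will take the first route. A random linear $h$ into $2\log s+O(1)$ bits is injective on the support with constant probability. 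Relabeling the state as $\sum_x \alpha_x\ket{h(x)}$ on $O(\log s)$ qubits is then a sparse state on a small register. Handle it by preparing the dense state on $m$ qubits and permuting indices to hash values via a table lookup. Equivalently, define $f$ directly on the hash domain. Steps 1 and 2 are unaffected, up to polylogarithmic factors.

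\textbf{Main obstacle: computing $f$ cheaply in Step 2.} Naive QROM over $s$ entries costs $O(s)$ $T$ gates. With select-swap (Low et al.~\cite{low2024trading}) it costs $O(s/\lambda+\lambda)$ $T$ gates with $\lambda n$ ancillas. Choosing $\lambda=\sqrt{s}$ gives $O(\sqrt{s})$ $T$ gates but $n\sqrt{s}$ ancillas. That already beats the stated $T$-count bound, but needs more space and does not obviously clean up garbage at Clifford cost. Measurement-based uncomputation is excluded because the model is unitary. A fully unitary uncompute would cost roughly $T$-count $O(\sqrt{s}+\sqrt{ns})$ at this ancilla level.

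To meet $\widetilde O(\sqrt{ns})$ ancillas, trade off differently: split the $n$ output bits into $n/b$ blocks and run select-swap per block with $\lambda b\approx\sqrt{ns}$ ancillas. This gives a $T$-count of $\sum_{\text{blocks}}(s/\lambda+\lambda)$. Balancing over $b$ yields roughly $n^{3/4}\sqrt{s}$ when $\lambda=\sqrt{s}\,n^{-1/4}$ and $b=n^{3/4}$.

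The step I expect to be hardest is getting this block-wise select-swap accounting right, including the fully unitary uncomputation of the swap-network garbage, while staying within $\widetilde O(\sqrt{ns})$ workspace. If it fails, I would fall back on proving a general sparse Boolean-function synthesis lemma with the same $T$-count/ancilla tradeoff and invoking it for both Step 2 and Step 3.

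Constant error comes only from Step 1. Summing the contributions gives $\widetilde O(\sqrt{s}+n^{3/4}\sqrt{s})=\widetilde O(n^{3/4}\sqrt{s})$ $T$ gates and $\widetilde O(\sqrt{ns})$ ancillas.
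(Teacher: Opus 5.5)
\textbf{Verdict: there is a genuine gap, and it sits exactly where the paper's difficulty sits, namely erasing the label.}

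Your chosen route relabels the index register to hold $h(x)$, uncomputes it with CNOTs, and asserts that Steps 1 and 2 are ``unaffected, up to polylogarithmic factors.'' That assertion fails. With the birthday bound you invoke, $h$ needs $k=2\log s+O(1)$ output bits, so the hash domain has $2^k=\Theta(s^2)$ points. That is a polynomial blowup, not a polylogarithmic one.
\begin{itemize}
\item If you prepare $\sum_x\alpha_x\ket{h(x)}$ directly, it is an $s$-sparse state on $k$ qubits. The dense method of \cite{gosset2024quantum} then costs $O(\sqrt{2^k})=O(s)$. A lookup $h(x)\mapsto x$ over the whole hash domain costs $O(\sqrt{n2^k})=O(s\sqrt n)$ by \cref{lem:total-boolean}.
\item If instead you prepare $\sum_j\alpha_j\ket{j}$ and ``permute'' $j\mapsto h(x^{(j)})$ by table lookup, computing $h(x^{(j)})$ from $j$ is cheap. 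But clearing $j$ afterwards requires the inverse map $h(x^{(j)})\mapsto j$. This is again a label-erasure problem: a Boolean function supported on $s$ points of a $\Theta(s^2)$-size domain, for which a plain table lookup is again linear in $s$.
\end{itemize}
So the hash only relocates the obstacle. Your fallback, ``a general sparse Boolean-function synthesis lemma,'' is precisely the missing ingredient. It is the paper's main technical contribution (\cref{thm:sparse-total-boolean}, via prefix compression and support peeling), and you leave it unproved.

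\textbf{Your ``main obstacle'' paragraph targets the wrong step.} The map $j\mapsto x^{(j)}$ is a total function on $\lceil\log s\rceil$ bits. \cref{lem:total-boolean} implements it exactly and cleanly with $O(\sqrt{ns})$ Toffolis and ancillas, so there is no garbage problem and no block-wise balancing is needed. Your select-swap count also omits the $O(n\lambda)$ cost of the controlled-swap network, so $O(\sqrt s)$ is not attainable that way. More importantly, tuning Step 2 cannot pay for Step 3. In the paper, the $n^{3/4}$ comes entirely from erasing the label with a function on the $n$-bit data register. That recursion uses up to $n$ compression steps at $\widetilde O(\sqrt{s/c})$ each, against $O(c)$ peeling steps at $\widetilde O(\sqrt{ns/c})$ each, balanced at $c=\sqrt n$.

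\textbf{Your hash idea is still worth keeping in a reordered form.}
\begin{enumerate}
\item After Step 2 produces $\ket{j}\ket{x^{(j)}}$, XOR $h(x^{(j)})$ into a fresh $k$-qubit register with CNOTs.
\item Clear $j$ using the function $h(x^{(j)})\mapsto j$, set to zero elsewhere; only correctness on the support matters.
\item Uncompute $h(x^{(j)})$ with CNOTs.
\end{enumerate}
The only non-Clifford part of the erasure is then a sparse function on $k=O(\log s)$ input bits, rather than on $n$ bits. It still requires support-aware synthesis. But \cref{thm:sparse-total-boolean} applied with input length $k$ implements it with $\widetilde O(\sqrt s)$ Toffolis. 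That would give $\widetilde O(\sqrt{ns})$ $T$ gates overall, which is even better than the stated bound. Without such a synthesis result, however, your proposal does not establish any sublinear bound.
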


Combining Theorem~\ref{thm:intro-sqsp-informal} with the previous linear-in-$s$ construction~\cite{rupprecht2026sparse,vilmart2025resource} yields the following best known upper bound.

\begin{corollary}
\label{cor:intro-sqsp-combined}
Any $n$-qubit $s$-sparse state can be prepared up to constant error by a Clifford+$T$ circuit starting with the all-zeros state using
$\widetilde{O}(\min\{s,\ n^{3/4}\sqrt{s}\})$
$T$ gates.
\end{corollary}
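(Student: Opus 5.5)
The corollary follows by running whichever of two constructions is cheaper for the given pair $(n,s)$. The first construction is the linear-in-$s$ method of \cite{rupprecht2026sparse} (or \cite{vilmart2025resource}). In the Toffoli-based version, the circuit first prepares the $\lceil\log s\rceil$-qubit dense state $\sum_j \alpha_j\ket{j}$ using the optimal dense-state procedure of \cite{gosset2024quantum}. At constant error this costs $O(\sqrt{s})$ $T$ gates. The circuit then maps $\ket{j}$ to $\ket{x^{(j)}}$ and uncomputes the index register with $O(s)$ Toffolis. Each Toffoli has a constant-size Clifford+$T$ decomposition, so the total is $O(s)$ $T$ gates. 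The second construction is Theorem~\ref{thm:intro-sqsp-informal}, whose formal version is \cref{thm:sqsp-main}. It uses $\widetilde{O}(n^{3/4}\sqrt{s})$ $T$ gates at constant error.

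Both procedures start from the all-zeros state. Both output a state within the same constant trace distance $\epsilon$ of $\ket{\psi}\otimes\ket{0^a}$, after padding the ancilla count to a common $a$ by adding idle qubits. Since $(n,s)$ and the target state are known classically, we pick the construction with the smaller $T$-count before building the circuit. No controlled choice inside the circuit is needed, so the selection adds no $T$ gates.

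The resulting cost is $\min\{O(s),\widetilde{O}(n^{3/4}\sqrt{s})\}$. This equals $\widetilde{O}(\min\{s,n^{3/4}\sqrt{s}\})$, because a minimum of two quantities, each bounded up to polylogarithmic factors, is bounded by the minimum of their bounds up to the same factors. The only point to check is that the constant-error guarantees of the two results are stated in the same approximation model. They are: both use the trace-distance definition above with ancillas returned to $\ket{0^a}$. So there is no real obstacle; the corollary is a direct combination of the two bounds.
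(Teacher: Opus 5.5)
Your proposal is correct and matches the paper's argument. The paper obtains the corollary by taking, for each $(n,s)$, the cheaper of the linear $O(s)$ construction of \cite{rupprecht2026sparse} (with the dense preparation of \cite{gosset2024quantum} on the label register) and the $\widetilde{O}(n^{3/4}\sqrt{s})$ construction, which is exactly how Theorem~\ref{thm:sqsp-main} itself already branches at $s=n^{3/2}\log n$.
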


Thus the previous linear-in-$s$ construction remains the better upper bound for $s = \widetilde{O}(n^{3/2})$, while our new construction gives the first sublinear dependence on $s$ beyond that threshold.

% Table~\ref{tab:intro-sqsp-complexity} summarizes the upper bounds on the $T$-count and ancilla complexity for $\epsilon$-approximate preparation of sparse quantum states.

We complement this upper bound with lower bounds showing that neither a sublinear dependence on $s$ in the small-support regime nor a complete removal of the dependence on $n$ is possible in general.
% Let $\mathcal T^{\mathrm{spar}}(n,s)$ denote the worst-case $T$-count of $\epsilon$-approximately preparing an $n$-qubit $s$-sparse state.
\begin{theorem}[Informal version of \cref{thm:sparse-state-lower-bound}]
\label{thm:intro-lower-bound-informal}
For every $0<\epsilon\le \frac{1}{6}$ and every $2 \leq s \leq 2^{n/2}$, there exists an $n$-qubit $s$-sparse state $\ket{\psi}$ such that any Clifford+$T$ circuit that prepares $\ket{\psi}$ up to error $\epsilon$ costs $\Omega(\min\{s, \sqrt{ns}\})$ $T$ gates.
\end{theorem}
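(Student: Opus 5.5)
We argue by counting. We compare the number of distinguishable $s$-sparse states with the number of states that Clifford+$T$ circuits with $t$ $T$ gates can produce. First we fix a family of hard targets. Set $k=\lceil\log_2 s\rceil$, or $k=\lfloor\log_2 s\rfloor$ so that $2^k\le s$. For every set $S\subseteq\{0,1\}^n$ with $|S|=2^k$ take the uniform superposition $\ket{\psi_S}=|S|^{-1/2}\sum_{x\in S}\ket{x}$. Two such states satisfy $|\braket{\psi_S|\psi_{S'}}|=|S\cap S'|/|S|$. A Gilbert--Varshamov/greedy packing argument then gives a family of $\exp(\Omega(s\log(2^n/s)))$ sets with pairwise overlap at most $|S|/2$. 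Since $s\le 2^{n/2}$, we have $\log(2^n/s)\ge n/2$, so the family has size $2^{\Omega(ns)}$. Each pair in the family has trace distance at least $\sqrt{3}/2>2\epsilon$ for $\epsilon\le1/6$. By the triangle inequality, one circuit output $\ket{\phi}$ cannot be $\epsilon$-close to two of these targets (tensored with $\ket{0^a}$). Therefore the number of distinct circuit outputs that use $t$ $T$ gates must be at least $2^{\Omega(ns)}$.

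The second step, the upper count on outputs, is the main obstacle. The circuits may use unboundedly many Clifford gates and ancillas, so we cannot count gate sequences. Instead we use the standard normal form. A unitary Clifford+$T$ circuit with $t$ $T$ gates acting on $\ket{0^{n+a}}$ can be rewritten as $C\prod_{i=1}^{t}\exp(-i\frac{\pi}{8}P_i)\ket{0^{n+a}}$, with Pauli $P_i$ and Clifford $C$. The $t$ rotations act nontrivially only on a subspace of dimension at most $2^t$ up to stabilizer-state freedom. Equivalently, the output is $C\ket{\chi}$, where $\ket{\chi}=\ket{\chi_t}\otimes\ket{0}$ and $\ket{\chi_t}$ lies on $t$ qubits. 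The state $\ket{\chi_t}$ is drawn from a finite set determined by the $t$ Paulis restricted to $t$ qubits, giving at most $2^{O(t^2)}$ choices. The remaining freedom is the Clifford $C$. This is the delicate part. We only need the reduced output on the first $n$ qubits to be close to a pure sparse state with ancillas returned to $\ket{0}$, so only the part of $C$ that matters modulo the stabilizer of $\ket{0}$ on the idle qubits counts. Following Gosset et al.'s counting, this leaves roughly $2^{O(t^2+nt+n^2)}$ relevant choices. One can absorb $2^{O(n^2)}$ only if $ns\gg n^2$; otherwise we separately handle the regime $s=O(n)$.

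Comparing the two counts gives $t^2+nt\gtrsim ns$. This yields $t=\Omega(\min\{s,\sqrt{ns}\})$: $nt\gtrsim ns$ gives $t\gtrsim s$, and $t^2\gtrsim ns$ gives $\sqrt{ns}$. For the small regime, where the $n^2$ Clifford term swamps the count, we use a different hard family. Here we take states $\sqrt{1/s}\sum_j\omega_j\ket{x^{(j)}}$ with generic phases, or just the $s=2$ case $\cos\theta\ket{0}+\sin\theta\ket{x}$ with $\theta$ chosen far from the finite set of angles reachable with few $T$ gates. The goal there is to force $\Omega(s)$ through the stabilizer-rank argument. A stabilizer state has amplitudes that, up to normalization, all equal powers of $i$ on an affine support. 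Applying $t$ $T$ gates yields a stabilizer-rank-$2^t$ state, whose support amplitudes take at most $2^{O(t)}$ distinct magnitude patterns. A sparse state with $s$ generic, well-separated distinct magnitudes requires $2^{O(t)}\gtrsim s$ patterns. That only gives $\log s$ on its own, so instead we use the dimension count: the exact-magnitude bound of $s$ free real parameters against a finite $\epsilon$-net of size $2^{O(t^2+nt)}$ with a Clifford-reduction to $O(t)$ relevant qubits. This gives $t=\Omega(s)$ when $s\le n$. I expect making this Clifford-count rigorous, accounting for approximation and unbounded ancillas, to be the hardest step.
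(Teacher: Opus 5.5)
Your large-$s$ half is essentially the paper's argument. A greedy packing of uniform superpositions over subsets yields $2^{\Omega(ns)}$ targets pairwise more than $2\epsilon$ apart when $s\le 2^{n/2}$. The Gosset--Kothari--Wu bound $2^{O(n^2+t^2)}$ then forces $t=\Omega(\sqrt{ns})$ once $s\ge Cn$. Your extra $nt$ term adds nothing, since $nt\le n^2+t^2$.

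The gap is in the regime $s=O(n)$, which is exactly where the $\min\{s,\cdot\}$ branch of the statement lives, and your counting route cannot work there.

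\begin{itemize}
\item Taken literally, your final step compares $s$ free real parameters against a net of size $2^{O(t^2+nt)}$. At constant $\epsilon$ that is about $s\log(1/\epsilon)=O(s)$ bits. This gives only $t\gtrsim\min\{\sqrt{s},\,s/n\}$, which is $O(1)$ when $s\le n$, not $\Omega(s)$.
\item The conclusion $nt\gtrsim ns$ would need $ns$ bits of entropy (varying supports) together with an output count that has no $n^2$ term. No global count can drop that term: with $t=0$ there are already $2^{\Theta(n^2)}$ pairwise far-apart stabilizer states.
\item Your ``Clifford reduction to $O(t)$ relevant qubits'' would therefore have to be a new localized counting lemma for outputs near sparse states. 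You assert it but do not supply it.
\item The stabilizer-rank/magnitude-pattern idea, as you note yourself, gives only $\log s$.
\end{itemize}

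The missing idea is to abandon counting in this regime and prove a robust stabilizer-nullity bound for one explicit state. The paper takes $\ket{W_s}\otimes\ket{0^{n-s}}$, whose nullity is $s-1$.

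A unitary Clifford+$T$ circuit with $t$ $T$ gates outputs an $(n+a)$-qubit pure state $\ket{\phi}$ whose stabilizer group $A$ has rank at least $n+a-t$ (Beverland et al.). Since $\ket{\phi}$ lies in the code space of $A$, we have $\|\Pi_A\ket{\psi'}\|_2\ge|\langle\psi'|\phi\rangle|$ for the padded target $\ket{\psi'}=\ket{W_s}\otimes\ket{0^{n-s+a}}$. So it suffices to show that every stabilizer group $A$ with $\|\Pi_A\ket{\psi'}\|_2^2\ge 1-\epsilon^2$ has rank at most $n+a-(1-\epsilon^2)s+O(1)$.

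Split $A$ into its diagonal part $D=A\cap\mathcal Z$ and the rest.
\begin{itemize}
\item The diagonal part $D$ must fix at least $(1-\epsilon^2)s$ of the weight-one basis strings. This forces $\operatorname{rank}(D)\le n+a-(1-\epsilon^2)s+1$.
\item Every non-diagonal Pauli $g$ maps at most two of the $s$ support strings back into the support. Hence $|\bra{\psi'}g\ket{\psi'}|\le 2/s$.
\item Expanding the projector gives $\bra{\psi'}\Pi_A\ket{\psi'}\le 2^{-(\operatorname{rank}A-\operatorname{rank}D)}+2/s$. So the non-diagonal part contributes only $O(1)$ to the rank once $s\ge 4/(1-\epsilon^2)$.
\end{itemize}

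This yields $t\ge(1-\epsilon^2)s-O(1)=\Omega(s)$ for $s\le n$. Using $\ket{W_n}$ covers $n<s<Cn$.
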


In particular, the linear upper bound $O(s)$ is tight when $s = O(n)$, so no construction can achieve a sublinear dependence on $s$ in this regime. By contrast, for $\widetilde{\Omega}(n^{3/2}) \le s \le 2^{n/2}$, our new upper bound differs from the lower bound by at most a multiplicative $\widetilde{O}(n^{1/4})$ factor, showing that any sublinear dependence on $s$ in this range must inevitably come with a dependence on $n$. Figure~\ref{fig:intro-sparse-bounds} summarizes the relation among the previous upper bound, our new upper bound, and these lower bounds.

\begin{table}[htbp]
\centering
\small
\renewcommand{\arraystretch}{1.5}
\caption{Known constant-error bounds for approximate preparation of an $n$-qubit quantum state.}
\label{tab:intro-sqsp-complexity}
\begin{tabular}{|c|c|c|c|}
\hline
\textbf{Type} & \textbf{Source} & \textbf{$T$-count} & \textbf{Ancillas} \\
\hline
General state 
& \cite{gosset2024quantum} & $\theta(\sqrt{2^n})$ & $O(\sqrt{2^n})$ \\
\hline
\multirow{4}{*}{Sparse state}
& \cite{vilmart2025resource} & $O(s)$ & $\max\{s-n, 0\}$ \\
\cline{2-4}
& \cite{rupprecht2026sparse} & $O\!\left(s\right)$ & $O\!\left(\log s\right)$ \\
\cline{2-4}
& This work & $\widetilde{O}(n^{3/4}\sqrt{s})$ & $\widetilde{O}\!\left(\sqrt{ns} \right)$ \\
\cline{2-4}
& This work ($s \le 2^{n/2}$) & $\Omega(\min\{s, \sqrt{ns}\})$ & unbounded \\
\hline
\end{tabular}
\end{table}

\begin{figure}[htbp]
\centering
\includegraphics[width=0.7\textwidth]{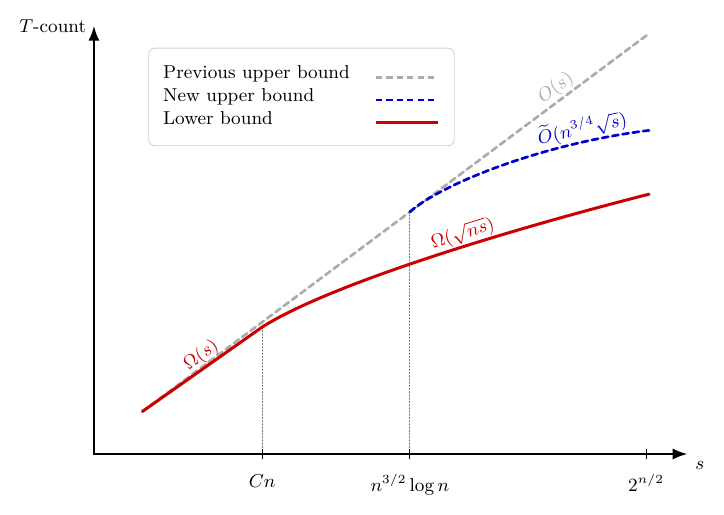}
\caption{Schematic relation between the known constant-error bounds on the worst-case $T$-count for preparing an $n$-qubit $s$-sparse state as a function of the sparsity $s$. Ignoring polylogarithmic factors, our new upper bound is $\widetilde{O}(n^{3/4}\sqrt{s})$, which crosses the previous linear upper bound around $s=n^{3/2}$ up to polylogarithmic factors. The lower bound is linear for $s = O(n)$ and becomes $\Omega(\sqrt{ns})$ for $s = \Omega(n)$ up to $s \le 2^{n/2}$.}
\label{fig:intro-sparse-bounds}
\end{figure}

\paragraph{Technique overview.}

The upper bound is based on a reduction from sparse-state preparation to sparse Boolean function synthesis. Writing $|\psi\rangle = \sum_{i=0}^{s-1}\alpha_i|x^{(i)}\rangle$, we first prepare the dense label state $\sum_{i=0}^{s-1}\alpha_i\ket{i}$ on a $\lceil \log s \rceil$-qubit index register and then coherently load the basis string $x^{(i)}$ into an $n$-qubit data register using the Boolean function $i \mapsto x^{(i)}$. The main difficulty is the final label-erasure step: we apply the zero-extension of the inverse labeling map, namely the Boolean function that maps $x^{(i)}$ back to $i$ on the support and outputs $0$ elsewhere. Unlike the first two steps, this operation acts on the full $n$-bit data register rather than on the $\lceil \log s \rceil$-qubit label register. This bottleneck motivates the following synthesis theorem for sparse Boolean functions, which may be of independent interest.

\begin{theorem}[Informal version of \cref{thm:sparse-total-boolean}]
\label{thm:intro-sparse-boolean-informal}
Let $F : \{0,1\}^n \to \{0,1\}^b$ be a Boolean function whose support
$\operatorname{supp}(F):=\{x\in\{0,1\}^n : F(x)\neq 0^b\}$
has size at most $s$. When $s \le n^{3/2}\log n$, $F$ can be implemented by a NOT/CNOT/Toffoli (NCT) circuit with $O(s+n)$ Toffoli gates \footnote{If we only require the function to be correct on the support, that is, to implement a partial Boolean function, then the number of Toffoli gates can be optimized to $O(s)$.} and ${O}(\log s)$ ancillas. When $s > n^{3/2}\log n$, $F$ can be implemented by an NCT circuit with $\widetilde{O}(n^{3/4}\sqrt{s} + n^{1/4}\sqrt{bs})$
Toffoli gates and 
$\widetilde{O}(n^{1/4}\sqrt{s} + n^{-1/4}\sqrt{bs})$
ancillas.
\end{theorem}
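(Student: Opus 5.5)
We prove the two regimes separately. Both rely on one reduction: to implement $\ket{x}\ket{y}\mapsto\ket{x}\ket{y\oplus F(x)}$, it suffices to find the index $j$ with $x=x^{(j)}$ (if any) and then XOR the stored value $F(x^{(j)})$ into the output register.

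\textbf{Small-support regime $s\le n^{3/2}\log n$.} Enumerate the support $x^{(0)},\dots,x^{(s-1)}$ in an order chosen so that consecutive strings can be handled cheaply. For each $j$ we must test $x=x^{(j)}$ and, if it holds, apply the controlled XOR of $F(x^{(j)})$. Doing this naively costs an $n$-fold AND per element, i.e.\ $O(ns)$ Toffolis. To get $O(s+n)$, we instead organize the support as a binary trie over the $n$ coordinates.
\begin{itemize}
\item The trie has at most $s$ leaves, so it has at most $s-1$ branching nodes.
\item Following the unary-iteration idea, we walk the trie while maintaining a single ``current path is consistent with $x$'' flag.
\item Descending along a non-branching chain multiplies in one literal per coordinate. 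Such chains are compressed by first computing, with Clifford gates, the XOR of $x$ with the fixed pattern. Their cost is then charged to the $n$ levels of depth plus one extra unit per branching node.
\item At each branching node, splitting on a bit costs $O(1)$ Toffolis.
\item At each leaf, the XOR of $F(x^{(j)})$ is a fan-out of CNOTs controlled by the flag, so it is Clifford given the flag.
\end{itemize}
Uncomputing mirrors computing. This gives $O(s+n)$ Toffolis. The ancilla count is $O(\log s)$ once we use the standard trick of storing only the flags along the path at branching nodes and recomputing them as needed. If only the partial function is required, the final check that the non-branching bits match can be dropped, which removes the $+n$ term.

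\textbf{Large-support regime.} Here we use a two-level scheme in the style of Low et al.\ and Gosset et al., with block parameters chosen by balancing.
\begin{enumerate}
\item \emph{Hash into buckets.} Choose a linear map $h:\{0,1\}^n\to\{0,1\}^m$, implemented by CNOTs and hence Clifford, that splits the support into $2^m\approx s/k$ buckets of size at most $O(k)$. A random matrix $h$ achieves this, with a loss of only logarithmic factors.
\item \emph{Select the bucket's data.} Given $h(x)$, coherently load a description of the bucket's contents. This step is a dense lookup table on $m$ bits whose entries have size $O(k(n+b))$. By the select--swap tradeoff (the table-lookup bound used by \cite{low2024trading}), it costs $\widetilde{O}(2^m/\lambda+\lambda k(n+b))$ Toffolis with about $\lambda k(n+b)$ ancillas.
\item \emph{Resolve within the bucket.} Compare $x$ against the $k$ loaded strings. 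This costs $O(kn)$ Toffolis via equality tests, and the $b$-bit output is XORed in under control of each match.
\end{enumerate}
Optimizing $k$ and $\lambda$ balances $s/(k\lambda)$, $\lambda k(n+b)$ and $kn$. We will tune the split between the $n$-part and the $b$-part, storing bucket keys more compactly (e.g.\ only $O(\log s)$ residual bits per key after hashing, so the comparison cost drops) to get $\widetilde{O}(n^{3/4}\sqrt s+n^{1/4}\sqrt{bs})$ Toffolis. The ancilla count then follows as $\widetilde{O}(n^{1/4}\sqrt s+n^{-1/4}\sqrt{bs})$.

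\textbf{Main obstacle.} The hardest step is this in-bucket resolution and the correctness off the support. With full $n$-bit keys the in-bucket cost is too large to reach the $n^{3/4}$ exponent. We expect to need a second hashing level: a perfect-hash-style linear map within each bucket, compressing keys to $O(\log s)$ bits. The full $n$-bit equality check would then be performed only once, against the single candidate. We will first try to verify that this candidate check costs $O(n)$ Toffolis and that uncomputation preserves these bounds.
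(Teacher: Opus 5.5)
\textbf{Small-support regime.} As described, the trie walk does not give $O(s+n)$ Toffolis. Non-branching chains cannot be ``charged to the $n$ levels of depth''. Take $s=2^\ell$ and $S=\{u\,w_u : u\in\{0,1\}^{\ell}\}$ with arbitrary tails $w_u$. Each of the $s$ leaves then has its own chain of length $n-\ell$, all at the same levels. Checking a chain still needs a zero test over its bits even after XORing the pattern in with CNOTs, so the walk costs $\Theta(ns)$.

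The fix is to skip chains during the walk entirely. Walk only the at most $s-1$ branching coordinates to write a label, XOR the labelled string back into the input by a second unary iteration, and do one global $n$-bit zero test. That is the paper's proof: the indexing circuit $W_S$ of \cref{lem:sparse-indexing}, one $n$-controlled flag, and unary iteration for $H$ on $\ell+1$ bits.

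Your $O(\log s)$ ancilla claim is also unsupported. Every node of the compressed trie branches, and its depth can be $s-1$: for $S=\{e_1,\dots,e_s\}$ with $s\le n$, every single-bit decision tree has depth $s-1$. Unary iteration then holds up to $s-1$ path flags, and ``recomputing them'' costs Toffolis you have not counted. The paper imports the $O(\log s)$ bound from \cite{rupprecht2026sparse}.

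\textbf{Large-support regime.} Hash-and-lookup is a genuinely different route from the paper's and can be made to work. However, your write-up never derives the bound, and it misplaces the difficulty.

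With $2^m\approx s/k$ buckets and entries of $B=k(n+b+1)$ bits, the table holds about $s(n+b)$ bits whatever $k$ is. Select--swap therefore costs $O(2^m/\lambda+\lambda B)$ Toffolis with $O(\lambda B)$ ancillas, i.e.\ about $s(n+b)/A+A$ Toffolis at ancilla budget $A$. In-bucket resolution costs only $O(k(n+b))$. So full $n$-bit keys are harmless. A second hashing level would change nothing, because the final check against the candidate again needs its full string, i.e.\ a table of about $sn$ bits.

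Nor does the ancilla bound ``follow''. The Toffoli-optimal $\lambda$ uses $\widetilde\Theta(\sqrt{s(n+b)})$ ancillas, which exceeds $\widetilde O(n^{1/4}\sqrt s)$ when $b\le n$.

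Three things are actually missing.
\begin{enumerate}
\item A genuine load bound. Pairwise independence of random $\mathbb F_2$-linear maps only bounds the largest of about $s$ buckets by $O(\sqrt s)$. You can cite Alon--Dietzfelbinger--Miltersen--Petrank--Tardos (expected largest bucket $O(\log s\log\log s)$). Alternatively, pick $h$ with at most $s/2$ colliding pairs, store only buckets of load at most $2$, recurse on the at most $s/2$ keys in heavier buckets, and XOR the $O(\log s)$ levels.
\item The unbalanced select--swap tradeoff of \cite{low2024trading}; \cref{lem:total-boolean} is only its balanced point. Taking $\lambda B\approx A:=n^{1/4}\sqrt s+n^{-1/4}\sqrt{bs}$ gives exactly $\widetilde O(n^{3/4}\sqrt s+n^{1/4}\sqrt{bs})$ Toffolis.
\item A different value step when $b$ exceeds roughly $s/\sqrt n$. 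Loading $k$ values and XORing them under control costs $kb$ Toffolis and ancillas, which breaks the budget there. Instead, compute the matching slot index and look the value up directly into the output.
\end{enumerate}

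\textbf{Comparison.} Completed this way, your route is stronger than the paper's. The paper never hashes. It fixes $r=\lceil\log(s/c)\rceil$ and recursively either relabels few active prefixes, dropping one input bit at cost $O(\sqrt{r2^r})$, or peels one point per active prefix. Up to $n$ relabelings against $O(c)$ peelings force $c=\lceil\sqrt n\rceil$ and the factor $n^{3/4}$.

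One free linear hash replaces all $n$ relabelings. Balancing $\lambda$ then gives $\widetilde O(\sqrt{s(n+b)}+n)$ Toffolis, and the stated bound is just the point $A\approx n^{1/4}\sqrt s$ on the tradeoff curve (Toffolis)$\cdot A\approx s(n+b)$. For $b=\lceil\log s\rceil$ the balanced point is $\widetilde O(\sqrt{ns})$. In the reduction of \cref{thm:sqsp-main} this would match the paper's $\Omega(\sqrt{ns})$ lower bound up to polylogarithmic factors.
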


For the state-preparation application, the relevant output length is $b=\lceil \log s \rceil \le n$. Substituting this into the large-support regime of Theorem~\ref{thm:intro-sparse-boolean-informal} gives Theorem~\ref{thm:intro-sqsp-informal}.

A naive implementation applies one multi-controlled Toffoli gate for each supported input and therefore uses $O(ns)$ Toffoli gates. This could be reduced to $O(s+n)$ by the technique proposed in~\cite{rupprecht2026sparse}, but the dependence on $s$ remains linear. Our theorem preserves that behavior in the small-support regime and improves the large-support regime to $\widetilde{O}(n^{3/4}\sqrt{s})$ when $b\le n$.

At a high level, the large-support algorithm fixes a prefix length $r=\lceil \log(s/\lceil\sqrt{n}\rceil)\rceil$ and recursively analyzes the active $r$-bit prefixes in the current support. If only a few such prefixes are active, then we compress them by injectively relabeling the active prefixes, which reduces the input length without changing the support size. If many prefixes are active, then we peel off one supported input from each active prefix; this peeled slice can be implemented cheaply, while the remaining recursive instance loses at least $2^{r-1}=\Omega(s/\sqrt{n})$ support points. Thus each recursive step either shortens the input length or removes a substantial chunk of the support. Choosing $r$ to balance these two effects yields the $\widetilde{O}(n^{3/4}\sqrt{s})$ bound when $b\le n$.

On the lower-bound side, for an $n$-qubit pure state $\ket{\psi}$, the stabilizer nullity is defined as $\nu(\ket{\psi}) := n-\rank(\stab(\ket{\psi}))$, and the exact $T$-count of $\ket{\psi}$ is at least $\nu(\ket{\psi})$~\cite{beverland2020lower}. Using this fact, \cite{vilmart2025resource} showed that exactly preparing $\ket{W_s}\ket{0^{n-s}}$ requires $\Omega(s)$ $T$ gates, since $\nu(\ket{W_s}\ket{0^{n-s}})=s-1$. We extend this linear lower bound to approximate preparation by proving that, for every state $\ket{\phi}$ with $D_{\mathrm{tr}}(\ket{\phi}, \ket{W_s}\ket{0^{n-s}})\le \epsilon$, the stabilizer subgroup satisfies $\rank(\stab(\phi)) \le n-(1-\epsilon^2)s+1+\log\frac{2}{1-\epsilon^2}$; this implies an $\Omega((1-\epsilon^2)s)$ lower bound. In the large-support regime $s=\Omega(n)$, we instead construct a large family of pairwise well-separated $s$-sparse states via a packing lemma, and combine it with the adaptive Clifford+$T$ counting theorem of \cite{gosset2024quantum}, which bounds the number of states preparable with $t$ $T$ gates by $2^{O(n^2+t^2)}$; comparing the two bounds gives an $\Omega(\sqrt{ns})$ lower bound. 

\paragraph{Discussion.}
Our results show that the previous linear $O(s)$ dependence is not inherent once the support is sufficiently large. The main remaining question is to close the gap between the upper bound $\widetilde{O}(n^{3/4}\sqrt{s})$ and the lower bound $\Omega(\sqrt{ns})$. It is also unclear whether the crossover around $s=n^{3/2}$ up to polylogarithmic factors and the factor $n^{3/4}$ are artifacts of the present algorithm or reflect inherent limitations of fault-tolerant sparse quantum state preparation.

Another direction is to exploit more directly the partial nature of the Boolean functions that arise in sparse state preparation. Our support-aware synthesis theorem implements a total Boolean function, whereas the state-preparation application only requires correctness on the support of the target state. The known $O(s)$ upper bound in the regime $s=O(n)$ already benefits from this relaxation. By contrast, when $n$ is sufficiently large, incorporating this relaxation into our construction does not improve the present asymptotic bound, and the construction therefore does not use it. Whether partial-function structure can be exploited in a different way remains an interesting problem.

\paragraph{Paper organization.}
The remainder of this paper is organized as follows. Section~\ref{sec:preliminaries} introduces notation, the synthesis model, and the basic tools that we use. Section~\ref{sec:sqsp} presents the sparse-state preparation algorithm and proves the main upper bound on the $T$-count. Section~\ref{sec:sparse-total-boolean} develops the support-aware Boolean-synthesis framework underlying the construction. Section~\ref{sec:lower-bound} proves the lower bounds for sparse-state preparation. Section~\ref{sec:conclusion-framework} concludes with a summary of the main results and their implications.

\section{Preliminaries}
\label{sec:preliminaries}

Throughout, logarithms are base $2$. For a positive integer $k$ we write
$[k] := \{0,1,\dots,k-1\}$.

\paragraph{Synthesis model.}
An \emph{Clifford+$T$ circuit} is a quantum circuit generated by H, S, CNOT, and T gates.
An \emph{NCT circuit} is a reversible circuit generated by NOT, CNOT, and Toffoli gates. 
Since each Toffoli gate has a constant-size Clifford+$T$ decomposition, any Toffoli-count bound implies the same asymptotic $T$-count bound. We therefore freely translate between Toffoli count and $T$-count for NCT circuits.

\paragraph{Boolean function.}
For a total Boolean function \(F:\{0,1\}^n\to\{0,1\}^b\), write
\begin{align}
\operatorname{supp}(F):=\{x\in\{0,1\}^n : F(x)\neq 0^b\}.
\end{align}

A reversible implementation of a Boolean function $F : \{0,1\}^n \to \{0,1\}^b$ is a unitary $U_F$ such that
\begin{align}
U_F |x\rangle |y\rangle |0^a\rangle
=
|x\rangle |y \oplus F(x)\rangle |0^a\rangle 
\end{align}
for every input $x \in \{0,1\}^n$, output register $y \in \{0,1\}^b$.

The most direct way to implement a Boolean function is to apply unary iteration~\cite{babbush2018encoding, khattar2025rise}, which requires $O(2^n)$ Toffoli gates and $O(n)$ ancillas, regardless of the output length $b$. This Toffoli bound could be improved when $b$ is small.

\begin{lemma}[Boolean-function synthesis {\cite{low2024trading}}]
\label{lem:total-boolean}
For every total Boolean function $F : \{0,1\}^n \to \{0,1\}^b$, there exists an NCT circuit implementing $F$ with
$O\left(\sqrt{b2^n}\right)$
Toffoli gates and ancillas.
\end{lemma}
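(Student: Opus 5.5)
The plan is the standard ``select-swap'' (QROAM-style) construction: trade Toffoli gates against ancilla width, then balance the two. Split the input as $x=(x_{\mathrm{hi}},x_{\mathrm{lo}})$, with $x_{\mathrm{hi}}\in\{0,1\}^k$ and $x_{\mathrm{lo}}\in\{0,1\}^{n-k}$, where $k$ is chosen later. For each fixed $x_{\mathrm{lo}}$ define the block
\[
T(x_{\mathrm{lo}}) := \bigl(F(z,x_{\mathrm{lo}})\bigr)_{z\in\{0,1\}^k}\in\{0,1\}^{b2^k}.
\]
This is the slice of the truth table of $F$ obtained by fixing the low-order bits.

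\textbf{Step 1 (load a slice).} Use unary iteration over the $n-k$ bits of $x_{\mathrm{lo}}$. It costs $O(2^{n-k})$ Toffoli gates and $O(n)$ ancillas. At each leaf a single flag qubit is set exactly when the register equals that value of $x_{\mathrm{lo}}$. From this flag, apply CNOTs into the positions of a fresh $b2^k$-qubit table register where $T(x_{\mathrm{lo}})$ has a $1$. These fan-outs are Clifford, so the table register ends up holding $T(x_{\mathrm{lo}})$ at no extra Toffoli cost.

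\textbf{Step 2 (select the entry).} Apply a swap network controlled by the bits of $x_{\mathrm{hi}}$. Level $j$ performs controlled swaps of $b$-bit blocks conditioned on bit $j$ of $x_{\mathrm{hi}}$, so that the block indexed by $x_{\mathrm{hi}}$ is routed to the first $b$-qubit slot. Each controlled swap costs one Toffoli plus Cliffords. The total number of swapped bits is $b(2^{k-1}+2^{k-2}+\dots)=O(b2^k)$, so this step uses $O(b2^k)$ Toffoli gates.

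\textbf{Step 3 (copy and clean).} CNOT the first slot, which now contains $F(x)$, into the output register $y$. Then run Steps 2 and 1 in reverse, which returns all ancillas to $\ket{0}$. The resulting unitary maps $\ket{x}\ket{y}\ket{0^a}\mapsto\ket{x}\ket{y\oplus F(x)}\ket{0^a}$, and it uses
\[
O\bigl(2^{n-k}+b2^k\bigr)\ \text{Toffoli gates},\qquad O\bigl(b2^k+n\bigr)\ \text{ancillas}.
\]

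\textbf{Choosing $k$.} If $b\le 2^n$, take $k=\bigl\lfloor \tfrac12\log(2^n/b)\bigr\rfloor\in[0,n]$, so that $2^k=\Theta(\sqrt{2^n/b})$. Both terms of the Toffoli count become $O(\sqrt{b2^n})$, and so does the ancilla count, because $n\le 2^n\le\sqrt{b2^n}$ up to constants. If $b>2^n$, take $k=0$. Then no swap network is needed, Steps 1 and 3 alone give $O(2^n)\le O(\sqrt{b2^n})$ Toffoli gates, and the ancilla count is $O(n)$.

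\textbf{Main obstacle.} The step needing most care is the accounting in Step 2. One must check that the swap network costs $O(b2^k)$ Toffoli gates rather than $O(bk2^k)$; the geometric halving across levels is exactly what gives this. One must also check that uncomputing through the swap network restores the table register exactly, so that Step 1 can be reversed to clear it. Both follow because every gate used is a permutation of basis states that depends only on the unchanged registers $x_{\mathrm{hi}}$ and $x_{\mathrm{lo}}$. The remaining work is routine rounding of $k$ and handling the degenerate regimes.
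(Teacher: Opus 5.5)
Your select-swap (QROAM-style) construction is correct and is essentially the standard data-lookup construction behind \cite{low2024trading}, which the paper cites without proof: unary iteration over $n-k$ bits plus a swap network over $k$ bits gives $O(2^{n-k}+b2^k)$ Toffolis and $O(b2^k+n)$ ancillas, and these balance at $2^k=\Theta(\sqrt{2^n/b})$. Two bookkeeping repairs are needed: the chain $n\le 2^n\le\sqrt{b2^n}$ is false when $b<2^n$ (argue instead $n=O(2^{n/2})$ and $2^{n/2}\le\sqrt{b2^n}$ since $b\ge 1$), and for $b>2^n$ you must fan out from the unary-iteration flags directly into $y$ rather than into a $b$-qubit table, because $b$ ancillas would exceed $\sqrt{b2^n}$ in that regime.
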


The sparse-state construction of \cite{rupprecht2026sparse} gives an efficient way to compress a sparse support set into a logarithmic-size label register, which is illustrated in \cref{fig:sparse-indexing-supported}.

\begin{lemma}[Sparse-support indexing {\cite{rupprecht2026sparse}}]
\label{lem:sparse-indexing}
Let $S \subseteq \{0,1\}^n$ with $|S| = s$, and let
$\ell := \lceil \log_2 s \rceil$. Then there exist an injection
$\iota:S\to\{0,1\}^{\ell}$ and an NCT circuit $W_S$ using
$O(s)$ Toffoli gates and $a=O(\log s)$ auxiliary qubits such that,
for every $x\in\{0,1\}^n$,
\[
W_S\ket{x}\ket{0^\ell}\ket{0^a}
=
\ket{u_x}\ket{t_x}\ket{0^a}
\]
for some $u_x\in\{0,1\}^n$ and $t_x\in\{0,1\}^{\ell}$, and, for every
$x\in S$, $(u_x, t_x) = (0^n, \iota(x)).$
\end{lemma}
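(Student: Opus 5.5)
The plan is to build $W_S$ in three stages. Stage one compresses $S$ into a short prefix using only Clifford operations. Stage two writes the label and clears the suffix with one sparse unary iteration. Stage three clears the prefix with a dense unary iteration over the $\ell$-bit labels. Only the Toffoli count matters, so CNOT networks are free, and we use this as much as possible.

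\textbf{Linear hashing.} First, pick an invertible matrix $A\in\mathrm{GL}_n(\mathbb F_2)$ such that the first $m$ coordinates of $Ax$ are pairwise distinct on $S$, where $m=2\ell+O(1)$. Such an $A$ exists by a counting argument. For a uniformly random linear map $\mathbb F_2^n\to\mathbb F_2^m$, each pair $x\neq y$ collides with probability $2^{-m}$, so the expected number of colliding pairs is at most $s^2 2^{-m}<1$. Any linear map that is injective on $S$ can then be completed to an invertible $A$. Applying $x\mapsto Ax$ costs only CNOTs. Write $Ax=(p_x,q_x)$ with prefix $p_x\in\{0,1\}^m$ and suffix $q_x\in\{0,1\}^{n-m}$. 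Since $p$ is injective on $S$, there is a function $g$ on $P:=\{p_x:x\in S\}$ with $q_x=g(p_x)$. We also fix the injection $\iota(x):=$ the rank of $p_x$ in $P$, which lies in $\{0,1\}^\ell$.

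\textbf{Sparse unary iteration on the prefix.} Next, run unary iteration over the binary trie of the set $P\subseteq\{0,1\}^m$. This is the usual compute/uncompute ladder of AND gates on an ancilla stack of depth $m=O(\log s)$, restricted to the branches that actually occur in $P$. At the leaf for $p\in P$, apply controlled-CNOT fan-outs, which cost no Toffolis beyond the leaf's AND. These fan-outs XOR $g(p)$ into the suffix register and $\iota(p)$ into the label register. For $x\in S$ this leaves the data register as $(p_x,0^{n-m})$ and the label register as $\iota(x)$. For $x\notin S$, some garbage $(u_x,t_x)$ is produced, which the lemma allows. All ancillas are returned to $\ket{0}$ by the uncomputation half of the ladder.

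\textbf{Clearing the prefix.} Finally, run a standard unary iteration over the $\ell$-bit label register. This uses $O(2^\ell)=O(s)$ Toffolis and $O(\log s)$ ancillas. At label $\iota(p)$, CNOT-fan-out the string $p$ into the prefix. On $S$ this zeros the prefix, giving $(u_x,t_x)=(0^n,\iota(x))$, as required.

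\textbf{Main obstacle: the Toffoli count of the sparse trie iteration.} A trie with $s$ leaves and depth $m$ can have up to $\Theta(s\log s)$ nodes, which would give only $O(s\log s)$ Toffolis. To reach $O(s)$, I would first try to exploit the fact that the top $\ell-O(1)$ levels of the trie are essentially dense. Those levels can be handled by ordinary unary iteration at cost $O(2^\ell)=O(s)$. Each resulting subtree then contains $O(1)$ points on average and has only $O(1)$ remaining levels, so its branching cost is $O(1)$ amortized. To make the second claim hold, I would strengthen the choice of $A$ so that the top-$\ell$-bit buckets have bounded total $\sum_{\text{buckets}}(\text{bucket size})\cdot(\text{residual depth})=O(s)$. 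A pairwise-independent hash bounds $\sum_b |B_b|^2$ by $O(s)$ in expectation, which suffices if each bucket is resolved by $O(|B_b|)$ Toffolis with $O(1)$ residual bits. If this amortization fails, the fallback is the pairwise-merging strategy of Gleinig--Hoefler type. In that strategy, CNOTs first make two current strings differ in a single position, and each merge is controlled only on a small set of distinguishing bits. The accounting there would again need to keep each merge at $O(1)$ amortized Toffolis.
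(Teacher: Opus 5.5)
The paper gives no proof of this lemma; it imports it from \cite{rupprecht2026sparse}, so your argument has to stand on its own. Most of it does. The CNOT-only hash, the trie pass, and the dense unary iteration over the label are all correct. The trie pass treats the prefix as read-only while XORing $g(p)$ into the suffix and $\iota(p)$ into the label, so the AND ladder uncomputes on every input and whatever happens for $x\notin S$ is allowed garbage. The dense unary iteration then XORs $p$ back into the prefix. One small repair: a random linear map that is injective on $S$ need not have full row rank. Replace it by a row basis, which has the same kernel and is therefore still injective on $S$ (this also covers $n<m$), before extending to an invertible $A$.

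The genuine gap is the step you flag yourself: you have no $O(s)$ bound for the trie pass, and the amortization you sketch is false. With $m=2\ell+O(1)$, after the top $\ell-O(1)$ levels every subtree still has $m-\ell+O(1)=\ell+O(1)=\Theta(\log s)$ levels, not $O(1)$. Consequences:
\begin{itemize}
\item $\sum_b |B_b|\cdot(\text{residual depth})=s(\ell+O(1))$ for every choice of $A$, so it cannot be made $O(s)$.
\item The hypothesis ``$O(1)$ residual bits'', under which $\sum_b|B_b|^2=O(s)$ would help, never holds.
\item The Gleinig--Hoefler fallback is not developed far enough to give $O(1)$ amortized Toffolis per merge.
\end{itemize}
As written, you only get $O(s\log s)$ Toffolis. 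For a typical random hash the uncompressed trie really has $\Theta(s\log s)$ nodes, since below depth about $\log s$ a constant fraction of the points sit alone on paths of length about $\ell$.

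The missing idea is to iterate over the compressed (Patricia) trie of $P$. At a node with a single child, do not test the bit; pass the parent's control wire to the child unchanged. Only branching nodes do work: one Toffoli computes $c\wedge\neg x_i$, a CNOT from $c$ turns it into $c\wedge x_i$ for the other branch, and one Toffoli uncomputes it. A binary trie with $s$ leaves has exactly $s-1$ branching nodes, so the pass costs $O(s)$ Toffolis. The nesting depth is at most $m=O(\log s)$, so the ancilla bound is unchanged.

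Correctness on $S$ survives. For $x\in S$ and $q\in P\setminus\{p_x\}$, the node where the paths of $q$ and $p_x$ diverge is a branching node on $q$'s path. There $q$'s branch is tested and rejected, so exactly the leaf $p_x$ fires. For $x\notin S$ at most one leaf fires, which is harmless because the prefix is never written during this stage.

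Your bucketing idea can also be repaired along the same lines of testing only separating bits. The $k$ points of a bucket are pairwise separated by at most $k-1$ of the residual coordinates, so a trie on just those coordinates costs $O(k^2)$. With $A$ chosen so that $\sum_b|B_b|^2=O(s)$ and prefix injectivity hold simultaneously (a union bound over the two bad events), the total is then $O(s)$.
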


\begin{figure}[htbp]
\centering
\resizebox{0.35\textwidth}{!}{\begin{tikzpicture}[scale=0.850000,x=1pt,y=1pt]
\filldraw[color=white] (0.000000, -7.000000) rectangle (58.000000, 35.000000);
% Drawing wires
% Line 6: x W \ket{x \in S} \ket{0^n}
\draw[color=black] (0.000000,28.000000) -- (58.000000,28.000000);
\draw[color=black] (0.000000,28.000000) node[left] {$\ket{x \in S}$};
% Line 7: lab W \ket{0^\ell} \ket{\iota(x)}
\draw[color=black] (0.000000,14.000000) -- (58.000000,14.000000);
\draw[color=black] (0.000000,14.000000) node[left] {$\ket{0^\ell}$};
% Line 8: anc W \ket{0^a} \ket{0^a}
\draw[color=black] (0.000000,0.000000) -- (58.000000,0.000000);
\draw[color=black] (0.000000,0.000000) node[left] {$\ket{0^a}$};
% Done with wires; drawing gates
% Line 10: x lab anc G:width=38 $W_S$
\draw (29.000000,28.000000) -- (29.000000,0.000000);
\begin{scope}
\draw[fill=white] (29.000000, 14.000000) +(-45.000000:26.870058pt and 28.284271pt) -- +(45.000000:26.870058pt and 28.284271pt) -- +(135.000000:26.870058pt and 28.284271pt) -- +(225.000000:26.870058pt and 28.284271pt) -- cycle;
\clip (29.000000, 14.000000) +(-45.000000:26.870058pt and 28.284271pt) -- +(45.000000:26.870058pt and 28.284271pt) -- +(135.000000:26.870058pt and 28.284271pt) -- +(225.000000:26.870058pt and 28.284271pt) -- cycle;
\draw (29.000000, 14.000000) node {$W_S$};
\end{scope}
% Done with gates; drawing ending labels
\draw[color=black] (58.000000,28.000000) node[right] {$\ket{0^n}$};
\draw[color=black] (58.000000,14.000000) node[right] {$\ket{\iota(x)}$};
\draw[color=black] (58.000000,0.000000) node[right] {$\ket{0^a}$};
% Done with ending labels; drawing cut lines and comments
% Done with comments
\end{tikzpicture}}
\caption{Action of the sparse-support indexing circuit $W_S$ on supported inputs. For every $x\in S$, the circuit maps $\ket{x}\ket{0^\ell}\ket{0^a}$ to $\ket{0^n}\ket{\iota(x)}\ket{0^a}$, where $\iota:S\to\{0,1\}^{\ell}$ is injective.}
\label{fig:sparse-indexing-supported}
\end{figure}

\section{Sparse Quantum State Preparation}
\label{sec:sqsp}
In this section, we derive the sparse state preparation upper bound. We first record the best known bound for preparing a general quantum state below.
\begin{lemma}[General state preparation {\cite{gosset2024quantum}}]
\label{lem:general-state-preparation}
Any $n$-qubit state can be prepared up to error $\epsilon$ by a Clifford+$T$ circuit starting with the all-zeros state using $O(\sqrt{2^n\log(1/\epsilon)} + \log(1/\epsilon))$
$T$ gates and ancillas.
\end{lemma}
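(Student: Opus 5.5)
Since this lemma is quoted from \cite{gosset2024quantum}, we only sketch the construction we would use to re-derive it. It is a Grover--Rudolph style preparation with classically precomputed angles, where each level's angles are loaded by a select--swap lookup (the mechanism behind \cref{lem:total-boolean}). Write $\ket{\psi}=\sum_{x}\alpha_x\ket{x}$. For each level $k\in\{0,\dots,n-1\}$ and each prefix $p\in\{0,1\}^k$, define $\theta_{k,p}$ as the angle splitting the weight of prefix $p$ between the extensions $p0$ and $p1$. After the last level, apply a final phase layer $x\mapsto \arg\alpha_x$. Each level is performed in three steps:
\begin{enumerate}
\item Coherently load a $b_k$-bit approximation of $\theta_{k,p}$ into an ancilla register, controlled on the first $k$ qubits.
\item Apply the controlled $R_y$ rotation on qubit $k+1$ using a phase-gradient state and a $b_k$-bit adder.
\item Uncompute the loaded angle.
\end{enumerate}
By \cref{lem:total-boolean}, the load and unload at level $k$ cost $O(\sqrt{b_k 2^k})$ Toffoli gates and ancillas. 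The rotation costs $O(b_k)$ $T$ gates, plus a one-time $O(b)$ cost for the phase-gradient state.

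The key step is the precision allocation. We would use the fact that the level-$k$ rotations form a block-diagonal unitary acting on a normalized state. Hence rounding every angle at level $k$ to precision $\delta_k$ perturbs the state by $O(\delta_k)$ in $\ell_2$ norm, and these errors add over the levels by the triangle inequality. A uniform choice $b_k=\log(n/\epsilon)$ would give $O(\sqrt{2^n\log(n/\epsilon)})$, which carries a spurious $\log n$. Instead we take $\delta_k=\epsilon\,2^{-(n-k)/4}$, i.e.\ $b_k=\log(1/\epsilon)+O(n-k)$. Then $\sum_k\delta_k=O(\epsilon)$, and
\[
\sum_{k}\sqrt{2^k\bigl(\log(1/\epsilon)+n-k\bigr)}=O\Bigl(\sqrt{2^n\log(1/\epsilon)}\Bigr),
\]
because the sum is geometric, dominated by its top terms $k\approx n$. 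The final phase layer is handled the same way with precision $\epsilon$, using one more lookup of $O(\sqrt{2^n\log(1/\epsilon)})$ cost and one phase-gradient addition. Since the data loads are exact NCT circuits and all approximation error comes from rounding, the trace-distance bound follows from the $\ell_2$ bound.

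The main obstacle is the additive term. Done naively, the rotation steps cost $\sum_k O(b_k)=O(n\log(1/\epsilon)+n^2)$. This is absorbed into $\sqrt{2^n\log(1/\epsilon)}$ whenever $\log(1/\epsilon)\lesssim 2^n/n^2$, but not in the extremely-high-precision regime, where the target is $O(\log(1/\epsilon))$ with no factor $n$. In that regime we would first try the following. Load, by a single lookup, a high-precision description of all $O(2^n)$ angles, each to $\log(1/\epsilon)$ bits; this is affordable because $2^n\le\log(1/\epsilon)$. Then perform the rotations at the high-order levels with only $O(n)$-bit precision, and concentrate the $\log(1/\epsilon)$-bit work into one phase-gradient addition on a single register that accumulates the correction. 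The goal is for the total adder length to be $O(\log(1/\epsilon)+2^n)$ rather than $n\log(1/\epsilon)$. Making this accumulation argument rigorous is the step we expect to require the most care; the rest is a routine combination of \cref{lem:total-boolean} with standard phase-gradient rotations.
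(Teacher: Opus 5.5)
\paragraph{Verdict.} The paper gives no proof of this lemma; it imports it from \cite{gosset2024quantum}. As a stand-alone re-derivation, your sketch has a genuine gap. It is located exactly where the statement has content beyond the dense term, namely the additive $O(\log(1/\epsilon))$.

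\paragraph{First gap: the additive term.} Write $L:=\log(1/\epsilon)$. Taken at face value, your construction costs $O(\sqrt{2^nL}+nL)$: the lookups, plus $n$ adders of length $b_k\ge L$. On top of that comes the phase-gradient preparation. Now $nL=O(\sqrt{2^nL}+L)$ holds only when $L\lesssim 2^n/n^2$. No choice of the schedule $\delta_k$ helps, because $\sum_k\delta_k=O(\epsilon)$ forces $\delta_k=O(\epsilon)$ at every level.

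Your fallback does not close this gap, for two reasons.
\begin{itemize}
\item Its premise $2^n\le L$ leaves the window $2^n/n^2\lesssim L<2^n$ uncovered. There your adders already exceed $\sqrt{2^nL}$ by a factor $\Theta(n\sqrt{L/2^n})$, which can be as large as about $n$.
\item The accumulation step cannot work as described. The residual between a coarse $R_y(\tilde\theta_{k,p})$ and the exact $R_y(\theta_{k,p})$ is itself an $R_y$ rotation on qubit $k+1$. It must act before the level-$(k+1)$ rotations that are controlled on that qubit. It is neither diagonal nor commuting with those later steps. So residuals from different levels cannot be deferred into one register and discharged by a single phase-gradient addition. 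Only the final, diagonal phase layer has that structure.
\end{itemize}
Removing the $n\log(1/\epsilon)$ term needs an idea that avoids paying for $\epsilon$-accuracy separately at each of the $n$ levels. That is what the citation supplies, and your sketch leaves it open.

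\paragraph{Second gap: the phase-gradient state.} This gap is independent of the first. The state cannot be prepared exactly by Clifford+$T$. Its preparation error therefore enters the output directly, contrary to your claim that ``all approximation error comes from rounding.'' It must be $O(\epsilon)$-accurate on its roughly $\log(1/\epsilon)$ nontrivial qubits.

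The standard route synthesizes those single-qubit factors separately. That costs $\Theta(L\log(L/\epsilon))=\Omega(\log^2(1/\epsilon))$ $T$ gates, not the $O(b)$ you charge. This already exceeds $\sqrt{2^nL}$ once $L\gtrsim 2^{n/3}$, and you give no construction achieving $O(b)$.

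You can sidestep the catalyst as follows. Instead of angles, load classically precomputed Clifford+$T$ words from single-qubit synthesis. Apply them with register-controlled $H$, $S$, $T$ syllables, keeping track of the approximants' global phases, which become relative phases between prefixes. This costs $O(\sqrt{2^k\log(1/\delta_k)}+\log(1/\delta_k))$ per level. However, it only moves the validity range back to $L\lesssim 2^n/n^2$.

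\paragraph{What the argument actually establishes.} As written, it proves the lemma roughly for $\log(1/\epsilon)\lesssim 2^{n/3}$. That covers the constant-error uses in the paper. It does not cover the $\epsilon$-dependence claimed in \cref{thm:sqsp-main}.
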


Another tool we need is the following sparse support Boolean function synthesis theorem. We defer its proof to \cref{sec:sparse-total-boolean}.
 
\begin{theorem}
\label{thm:sparse-total-boolean}
Let $F:\{0,1\}^n\to\{0,1\}^b$ be a Boolean function with $1 \le |\operatorname{supp}(F)| = s$.
If $s \le n^{3/2}\log n$, then $F$ can be implemented by an NCT circuit with $O(s+n)$ Toffoli gates and $O(\log s)$ ancillas.
If $s>n^{3/2}\log n$, then $F$ can be implemented by an NCT circuit with
$O(n^{3/4}\sqrt{s\log\frac{s}{\sqrt{n}}}+n^{1/4}\sqrt{bs})$
Toffoli gates and
$O(n\log\frac{s}{\sqrt{n}}+n^{1/4}\sqrt{s}+n^{-1/4}\sqrt{bs})$
ancillas.
\end{theorem}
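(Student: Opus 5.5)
We handle the two regimes of \cref{thm:sparse-total-boolean} separately. For the small-support regime $s\le n^{3/2}\log n$, we use \cref{lem:sparse-indexing} with $S=\operatorname{supp}(F)$. Applying $W_S$ to the input register maps each supported $x$ to $\ket{0^n}\ket{\iota(x)}$ at cost $O(s)$ Toffolis. We then apply the function $G:\{0,1\}^\ell\to\{0,1\}^b$ defined by $G(\iota(x))=F(x)$ (and $0$ elsewhere), controlled on the $n$-qubit residue register being $0^n$; we finish by uncomputing with $W_S^{-1}$. The zero-test of the residue register is an $n$-controlled NOT, costing $O(n)$ Toffolis with $O(1)$ reused ancillas. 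The table lookup of $G$ is done by unary iteration over the at most $s$ labels, costing $O(s)$ Toffolis with $O(\log s)$ ancillas; output bits are copied with CNOTs, which are free. Correctness off the support holds because every unsupported $x$ either has nonzero residue $u_x$ or lands, after zero-testing, on a pair $(0^n,t)$ with $t\notin\iota(S)$. The latter case cannot occur, since $W_S$ is a bijection and $(0^n,\iota(x))$ is already taken by supported points. This gives $O(s+n)$ Toffolis and $O(\log s)$ ancillas.

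For the large-support regime, we follow the recursion sketched in the technique overview with $r=\lceil\log(s/\lceil\sqrt n\rceil)\rceil$. We maintain an instance $(F',n')$ on an $n'$-bit register with support size $s'\le s$ and branch on $P$, the set of active $r$-bit prefixes.

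\emph{Compression step:} if $|P|\le 2^{r-1}$, we relabel injectively the active prefixes into $(r-1)$-bit strings. This is done by a reversible permutation of $\{0,1\}^r$ mapping $P$ into $\{0\}\times\{0,1\}^{r-1}$, realized with \cref{lem:total-boolean} by computing the relabeled prefix into fresh ancillas and uncomputing the old prefix, for $O(\sqrt{r2^r})$ Toffolis. After this step the leading bit is constantly $0$ on the support, so $n'$ drops by one. An alternative is to gate on that bit and recurse, without dropping it.

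\emph{Peeling step:} if $|P|>2^{r-1}$, we choose one supported input per active prefix. This defines a slice $F_1$ with $|P|$ support points, at most one per prefix, so $F_1(x)=[x_{\text{suffix}}=h(x_{\text{prefix}})]\cdot g(x_{\text{prefix}})$ for functions $h,g$ on $r$ bits. We implement $F_1$ by computing $h$ and $g$ on the prefix using \cref{lem:total-boolean}, at cost $O(\sqrt{(n+b)2^r})$, then comparing the suffix with $h$ at cost $O(n)$ and conditionally XORing $g$. The remainder $F-F_1$ (XOR of disjoint supports) loses more than $2^{r-1}=\Omega(s/\sqrt n)$ support points.

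For the accounting, there are at most $n$ compression steps (bounded by input length) and $O(s/2^{r-1})=O(\sqrt n)$ peeling steps. Compression costs total $n\cdot O(\sqrt{r2^r})=O(n\sqrt{r s/\sqrt n})=O(n^{3/4}\sqrt{s\log(s/\sqrt n)})$. Peeling costs total $\sqrt n\cdot O(\sqrt{(n+b)s/\sqrt n}+n)=O(n^{3/4}\sqrt s+n^{1/4}\sqrt{bs}+n^{3/2})$, and $n^{3/2}$ is dominated by the first term when $s>n^{3/2}\log n$. The relabelings must be undone at the end, which only doubles the count. The compression ancillas accumulate $O(r)$ per step, giving $O(n\log\frac{s}{\sqrt n})$ ancillas, while the workspace of \cref{lem:total-boolean} is reusable and gives the $O(\sqrt{(r+n+b)2^r})$ term. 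The base case, when support is exhausted or $n'\le r$, is handled by \cref{lem:total-boolean} directly.

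The main obstacle I expect is keeping the relabelings reversible and total, so that inputs off the support are not corrupted. Compressing prefixes must be a genuine permutation of all $r$-bit strings, not just an injection on $P$; this requires computing the new prefix out of place and uncomputing the old one. To do so I would build the inverse permutation via \cref{lem:total-boolean} as well, so that correctness holds for every $x$, not merely for supported ones.
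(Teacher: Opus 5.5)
Your small-support construction is the paper's, and your large-support recursion has the paper's skeleton (same $r$, same compress-or-peel dichotomy, same accounting). However, the compression step as you wrote it does not implement a total function. After the in-place permutation $\pi$ you drop the leading bit. Because $\pi$ is a bijection of $\{0,1\}^r$ sending every active prefix into $\{0\}\times\{0,1\}^{r-1}$, each string $1w$ equals $\pi(u)$ for some inactive $u$. Suppose $\pi(u')=0w$ with $u'$ active and $u'v$ supported. Then the off-support input $uv$ is processed by the rest of the recursion exactly like $u'v$, so that level XORs $F'(u'v)\neq 0^b$ into the output where it should XOR $0^b$. This happens at every compression step, not only in corner cases.

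Your last paragraph puts the difficulty in making the relabeling a genuine permutation, but that does not help. The loss of injectivity comes from discarding the bit, not from the relabeling. The fallback ``gate on that bit and recurse, without dropping it'' is not worked out. If the bit stays in the input, $n'$ does not decrease and your bound of $n$ compression steps is lost. If it becomes an external control, you must say how the rest of the recursion is controlled. Controlling each level naively (controlled gates, or compute--copy--uncompute) compounds over up to $n$ levels.

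The paper avoids this with a reserved null code rather than a permutation:
\begin{itemize}
\item It compresses only when $p_G\le 2^{r-1}-1$ and chooses an injection $\gamma_G$ of the active prefixes into $\{0,1\}^{r-1}\setminus\{0^{r-1}\}$.
\item It extends this to a total, non-injective map $\bar\gamma_G$ that sends every inactive prefix to $0^{r-1}$.
\item It computes $\bar\gamma_G(u)$ out of place into a fresh $(r-1)$-bit register, leaving $u$ untouched.
\item It recurses on $\widetilde G(yv)$, which equals $G(uv)$ if $y=\gamma_G(u)$ and $uv\in S_G$, and $0^b$ otherwise.
\end{itemize}
Since $0^{r-1}$ never occurs as a compressed prefix of the support, every input with an inactive prefix correctly yields $0^b$, and no inverse map is needed. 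You could instead repair your version by carrying a single validity flag, at $O(1)$ Toffolis and one fresh bit per compression. You would then fold that flag into the address of each peeling lookup and of the final leaf.

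A smaller point: conditionally XORing a precomputed $g(u)$ costs $b$ Toffolis per peel, so $O(b\sqrt n)$ in total. That exceeds $n^{1/4}\sqrt{bs}$ once $b\gg s/\sqrt n$. Putting the match flag into the lookup address, as the paper does with $H_G(z,u)$, avoids this.
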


We now state the sparse-state preparation theorem.
\begin{theorem}
\label{thm:sqsp-main}
For every $n$-qubit $s$-sparse state and every $\epsilon>0$, there exists a Clifford+$T$ circuit starting with the all-zeros state that prepares it up to error $\epsilon$ with the following resource bounds.
\begin{itemize}
    \item If $s\le n^{3/2}\log n$, the circuit uses $O(s+\log(1/\epsilon))$ $T$ gates and $O(\log s+\log(1/\epsilon))$ ancillas.
    \item If $s>n^{3/2}\log n$, the circuit uses $O(n^{3/4}\sqrt{s\log\frac{s}{\sqrt{n}}}+\sqrt{s\log(1/\epsilon)}+\log(1/\epsilon))$ $T$ gates and $O(n\log\frac{s}{\sqrt{n}}+\sqrt{ns}+\sqrt{s\log(1/\epsilon)}+\log(1/\epsilon))$ ancillas.
\end{itemize}

\end{theorem}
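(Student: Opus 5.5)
The plan follows the three-step reduction from the technique overview. Write $\ket{\psi}=\sum_{i=0}^{s-1}\alpha_i\ket{x^{(i)}}$ and let $\ell:=\lceil\log s\rceil$. Define the labeling map $G:\{0,1\}^\ell\to\{0,1\}^n$ by $G(i)=x^{(i)}$ for $i<s$ and $G(i)=0^n$ otherwise. Define the erasure map $F:\{0,1\}^n\to\{0,1\}^\ell$ by $F(x^{(i)})=i$ and $F(x)=0^\ell$ for $x\notin\operatorname{supp}(\ket{\psi})$.

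The circuit has three stages:
\begin{enumerate}
\item Use \cref{lem:general-state-preparation} on the $\ell$-qubit label register to prepare a state $\epsilon$-close to $\ket{\phi_{\mathrm{lab}}}=\sum_i\alpha_i\ket{i}$.
\item Apply $U_G$ to load the strings, giving $\sum_i\alpha_i\ket{i}\ket{x^{(i)}}$.
\item Apply $U_F$ with the label register as output. This yields $\sum_i\alpha_i\ket{0^\ell}\ket{x^{(i)}}$, the target tensored with zeros.
\end{enumerate}
Stages 2 and 3 are exact permutations acting on basis states. Trace distance is therefore preserved, and the total error is at most $\epsilon$. A minor subtlety arises if $\alpha_i$ could be $0$ or some label equals $0^\ell$. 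Because $F(x^{(0)})=0$, the erasure still works, since XOR with $0$ is trivial on the label $0$. Support-size issues are handled by taking $s=|\operatorname{supp}|$ exactly. If $s=1$, the target is a basis state prepared by NOT gates alone.

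\textbf{Cost of stage 1.} \cref{lem:general-state-preparation} on $\ell$ qubits costs $O(\sqrt{s\log(1/\epsilon)}+\log(1/\epsilon))$ $T$ gates and ancillas, since $2^\ell<2s$.

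\textbf{Cost of stage 2.} I would apply \cref{lem:total-boolean} to $G$, which gives $O(\sqrt{n2^\ell})=O(\sqrt{ns})$ Toffolis and ancillas. This is within the large-support budget. For the small-support regime I would instead implement $G$ by unary iteration over the $\ell$-bit register: $O(2^\ell)=O(s)$ Toffolis and $O(\log s)$ ancillas, with the output bits fanned out by CNOTs.

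\textbf{Cost of stage 3.} This is the main cost and is exactly \cref{thm:sparse-total-boolean} applied to $F$, with $|\operatorname{supp}(F)|\le s$ and $b=\ell\le n$.
\begin{itemize}
\item When $s\le n^{3/2}\log n$, it gives $O(s+n)$ Toffolis. The $n$ term is the one point needing care. When $s\ge n$ it is absorbed into $O(s)$. When $s<n$, I would instead use \cref{lem:sparse-indexing} in reverse for erasure: $W_S$ maps $\ket{x}\ket{0^\ell}$ to $\ket{0^n}\ket{\iota(x)}$ on the support. One can choose the labels $i:=\iota(x^{(i)})$ from the start, so that running $W_S^\dagger$ turns $\ket{0^n}\ket{i}$ into $\ket{x^{(i)}}\ket{0^\ell}$. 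This combines loading and erasing in $O(s)$ Toffolis and $O(\log s)$ ancillas and avoids $G$ entirely. This is likely the cleanest route for the whole small-support case, and I expect reconciling the $+n$ term to be the only delicate bookkeeping.
\item When $s>n^{3/2}\log n$, substituting $b\le n$ gives $n^{1/4}\sqrt{bs}\le n^{3/4}\sqrt{s}$. The Toffoli count is then $O(n^{3/4}\sqrt{s\log(s/\sqrt n)})$. The ancilla count is $O(n\log(s/\sqrt n)+n^{1/4}\sqrt s+n^{1/4}\sqrt s)$, dominated together with stage 2 by $O(n\log(s/\sqrt n)+\sqrt{ns})$.
\end{itemize}

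\textbf{Totals.} Summing the three stages, each Toffoli costs $O(1)$ $T$ gates, and ancillas are reused across stages. This gives both bullets of the theorem. In the small case, $\sqrt{s\log(1/\epsilon)}\le s+\log(1/\epsilon)$ by AM–GM, so the state-preparation cost is absorbed into $O(s+\log(1/\epsilon))$. Likewise the $O(\sqrt{s\log(1/\epsilon)})$ ancillas of stage 1 fall within $O(s+\log(1/\epsilon))$. The stated $O(\log s+\log(1/\epsilon))$ ancilla bound, however, would require a lower-ancilla variant of \cref{lem:general-state-preparation}, such as a rotation-based preparation. That is the other point I would check against the paper.
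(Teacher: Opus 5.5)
Your three-stage reduction and the whole large-support analysis coincide with the paper's proof: the label state via \cref{lem:general-state-preparation}, loading via \cref{lem:total-boolean} at $O(\sqrt{ns})$, and erasure via \cref{thm:sparse-total-boolean} with $b=\ell\le n$. Your small-support route (b), which prepares $\sum_i\alpha_i\ket{\iota(x^{(i)})}$ and then applies $W_S^\dagger$, is exactly the Rupprecht--W\"olk construction the paper invokes. It gives the $O(s+\log(1/\epsilon))$ $T$-count for every $s\le n^{3/2}\log n$, so route (a) and the $+n$ bookkeeping are unnecessary.

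The gap is the ancilla bound $O(\log s+\log(1/\epsilon))$ in the first bullet, which your argument does not establish. \cref{lem:general-state-preparation} on the $\ell$-qubit label register only guarantees $O(\sqrt{s\log(1/\epsilon)}+\log(1/\epsilon))$ ancillas. For constant $\epsilon$ this is polynomially larger than the target bound.

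The remedy you suggest does not work. A rotation-based preparation of the label state uses $\Theta(2^\ell)=\Theta(s)$ arbitrary-angle rotations. Each must be synthesized to precision about $\epsilon/s$, at a cost of $\Theta(\log(s/\epsilon))$ $T$ gates. The total is $O(s\log(s/\epsilon))$ $T$ gates, which already exceeds the $O(s+\log(1/\epsilon))$ budget at constant $\epsilon$.

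The missing ingredient is a space-efficient dense-state preparation in which the precision cost is paid additively, once, rather than per rotation. The paper takes this from \cite{gosset2024quantum}: it asserts that the label state can be prepared with $O(s+\log(1/\epsilon))$ $T$ gates and only $O(\log(1/\epsilon))$ ancillas. The idea is that in the small-support regime the $T$ budget is already linear in $s$. One can therefore trade away the $\sqrt{s\log(1/\epsilon)}$-qubit workspace of the optimal-$T$-count construction for extra $T$ gates. Combined with the $O(\log s)$ ancillas of $W_S^\dagger$, this yields the stated bound.
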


\begin{proof}
If $s \le n^{3/2}\log n$, then the claim follows from the sparse-state preparation construction of \cite{rupprecht2026sparse}, together with the observation that the dense-state preparation step can be implemented using $O(s)$ $T$ gates and $O(\log(1/\epsilon))$ ancillas~\cite{gosset2024quantum}.

It remains to prove the case where $s > n^{3/2}\log n$. We count the reversible subroutines below in Toffoli gates and convert them to $T$-count at the end. The construction has three steps: prepare a label state, coherently load the support strings, and then erase the label register via the inverse labeling map.

Write the target state as
\begin{align}
|\psi\rangle = \sum_{j=0}^{s-1} \alpha_j |x^{(j)}\rangle,
\end{align}
where $x^{(0)},\ldots,x^{(s-1)} \in \{0,1\}^n$ are distinct basis strings, and let
$\ell := \lceil \log_2 s \rceil$.

We first prepare the $\ell$-qubit label state
\begin{align}
|\phi\rangle := \sum_{j=0}^{s-1} \alpha_j |j\rangle.
\end{align}
By \cref{lem:general-state-preparation}, there is a Clifford+$T$ circuit that prepares an $\epsilon$-approximation of $|\phi\rangle$ from $|0^\ell\rangle$ using
\begin{align}
O\!\left(\sqrt{2^\ell\log(1/\epsilon)} + \log(1/\epsilon)\right)
= O\!\left(\sqrt{s\log(1/\epsilon)} + \log(1/\epsilon)\right)
\end{align}
$T$ gates and ancillas.

Next we load the support strings into an $n$-qubit data register. Define a Boolean function
\begin{align}
F : \{0,1\}^{\ell} \to \{0,1\}^{n}
\end{align}
by
\begin{align}
F(j) &= x^{(j)} \qquad \text{for every } j=0,\ldots,s-1, \\
F(y) &= 0^n \qquad \text{for every } y\notin \{0,\ldots,s-1\}.
\end{align}
Then $|\operatorname{supp}(F)| \le s$. By \cref{lem:total-boolean}, $F$ has an exact NCT implementation using
$O(\sqrt{n\,2^\ell})= O(\sqrt{ns})$
Toffoli gates and $O\!\left(\sqrt{ns}\right)$ ancillas, again since $s\le 2^\ell < 2s$. Applied to the label register and a data register initialized to $|0^n\rangle$, this circuit maps
\begin{align}
\sum_{j=0}^{s-1} \alpha_j |j\rangle |0^n\rangle
\longmapsto
\sum_{j=0}^{s-1} \alpha_j |j\rangle |x^{(j)}\rangle.
\end{align}

It remains to erase the label register. Define the zero-extension of the inverse labeling map by
\begin{align}
L : \{0,1\}^n \to \{0,1\}^{\ell},
\qquad
L(x)
:=
\begin{cases}
 j, & x=x^{(j)} \text{ for some } j=0,\ldots,s-1,\\
0^\ell, & \text{otherwise}.
\end{cases}
\end{align}
Again $|\operatorname{supp}(L)| \le s$. \cref{thm:sparse-total-boolean} gives an exact NCT implementation of $L$ using
\begin{align}
O\!\left(
n^{3/4}\sqrt{s\log\frac{s}{\sqrt{n}}}
+
n^{1/4}\sqrt{\ell s}
\right) = 
O\!\left(
n^{3/4}\sqrt{s\log\frac{s}{\sqrt{n}}}
\right)
\end{align}
Toffoli gates and
\begin{align}
O\!\left(
n\log\frac{s}{\sqrt{n}}
+
n^{1/4}\sqrt{s} + n^{-1/4}\sqrt{\ell s}
\right) =
O\!\left(
n\log\frac{s}{\sqrt{n}}
+
n^{1/4}\sqrt{s}
\right)
\end{align}
ancillas. 
Therefore, applying $L$ with the data register as input and the label register as target yields
\begin{align}
\sum_{j=0}^{s-1} \alpha_j |j\rangle |x^{(j)}\rangle
\longmapsto
\sum_{j=0}^{s-1} \alpha_j |0^\ell\rangle |x^{(j)}\rangle.
\end{align}
Because $F$ and $L$ are implemented exactly, applying them preserves the trace-distance error.

The Toffoli count satisfies 
\begin{align}
&O\!\left(
n^{3/4}\sqrt{s\log\frac{s}{\sqrt{n}}}
 + \sqrt{ns} + \sqrt{s\log(1/\epsilon)} + \log(1/\epsilon)
\right) \\
=\; & O\!\left(n^{3/4}\sqrt{s\log\frac{s}{\sqrt{n}}} + \sqrt{s\log(1/\epsilon)} + \log(1/\epsilon)\right).
\end{align}
The ancilla count satisfies 
\begin{align}
&O\!\left(n\log\frac{s}{\sqrt{n}}+n^{1/4}\sqrt{s} + \sqrt{ns}+\sqrt{s\log(1/\epsilon)} + \log(1/\epsilon)\right) \\
=\; &O\!\left(n\log\frac{s}{\sqrt{n}} + \sqrt{ns}+\sqrt{s\log(1/\epsilon)} + \log(1/\epsilon)\right).
\end{align}
Translating the Toffoli count to $T$-count gives the claimed bound.
\end{proof}

% \begin{corollary}[Combined upper bound]
% \label{cor:sqsp-combined}
% For every $n$-qubit $s$-sparse state and every $\epsilon>0$, there exists a unitary Clifford+$T$ circuit starting with the all-zeros state that prepares it up to error $\epsilon$ using
% \begin{align}
% \widetilde{O}\!\left(
% \min\{s,\ n^{3/4}\sqrt{s}\}
% +
% \sqrt{s\log(1/\epsilon)}
% +
% \log(1/\epsilon)
% \right)
% \end{align}
% $T$ gates.
% \end{corollary}

\section{Support-Aware Synthesis of Sparse Boolean Functions}
\label{sec:sparse-total-boolean}
% In this section, we prove support-aware synthesis bounds for Boolean functions with sparse nonzero support, i.e., \cref{thm:sparse-total-boolean}.
% The small-support and large-support regimes require different ideas.
% In the small-support regime, we compress the supported inputs into logarithmic-size labels and reduce the problem to the task of implementing another function on a logarithmic-size register.
% In the large-support regime, we fix a prefix length $r$ and recursively inspect the active $r$-bit prefixes of the current support: if there are few such prefixes, we compress them and shorten the input length by one; if there are many, we peel one support point from each active prefix and thereby remove a large chunk of the support.

In this section, we prove \cref{thm:sparse-total-boolean}, stated earlier in \cref{sec:sqsp}.
The argument splits according to the support size.
In the small-support regime, we compress the supported inputs into logarithmic-size labels, reducing the problem to Boolean synthesis on an $O(\log s)$-bit register.
In the large-support regime, we fix a prefix length $r$ and recursively inspect the active $r$-bit prefixes of the current support.
When only a few prefixes are active, we injectively relabel them and shorten the input length by one; when many prefixes are active, we peel one support point from each active prefix and thereby remove a large part of the support.
Thus every recursive step replaces the current instance by a simpler one, either with shorter input length or with smaller support.
We establish the two regimes in \cref{lem:sparse-supp-boolean,lem:sparse-total-boolean-large}; together they imply \cref{thm:sparse-total-boolean}.

\begin{proof}[Proof of \cref{thm:sparse-total-boolean}]
If $s\le n^{3/2}\log n$, the claimed bound is exactly \cref{lem:sparse-supp-boolean}.
If $s>n^{3/2}\log n$, the claimed bound is exactly \cref{lem:sparse-total-boolean-large}.
Combining these two lemmas yields the theorem.
\end{proof}

\subsection{Small-support regime}
When the support size is small, we can compress the supported inputs into short labels and reduce the problem to a Boolean function on only $O(\log s)$ input bits.

\begin{lemma}
\label{lem:sparse-supp-boolean}
For every Boolean function $F : \{0,1\}^n \to \{0,1\}^b$ with $1 \le |\operatorname{supp}(F)| = s$, there exists an NCT circuit implementing $F$ with $O(s + n)$ Toffoli gates and $O(\log s)$ ancillas.
\end{lemma}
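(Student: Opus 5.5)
The plan is to compose the sparse-support indexing circuit of \cref{lem:sparse-indexing} with a dense Boolean function on the short label register, followed by uncomputation. Let $S:=\operatorname{supp}(F)$, $\ell:=\lceil\log s\rceil$, and let $W_S$ and $\iota$ be as in \cref{lem:sparse-indexing}. First apply $W_S$ to the input register together with a fresh $\ell$-qubit label register and $a=O(\log s)$ ancillas. For $x\in S$ this yields $\ket{0^n}\ket{\iota(x)}$, and for general $x$ it yields $\ket{u_x}\ket{t_x}$. Next apply a reversible implementation of the function $G:\{0,1\}^\ell\to\{0,1\}^b$ defined by $G(\iota(x))=F(x)$ on $\iota(S)$ and $G=0$ elsewhere. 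Finally apply $W_S^{-1}$, which restores the input and returns the label register to $\ket{0^\ell}$. On inputs $x\in S$ the output register then receives $F(x)$, as required.

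The main obstacle is that $F$ is \emph{total}: for $x\notin S$ we need the output register to be unchanged, but $t_x$ could collide with some $\iota(x')$. To handle this, I would add an $n$-controlled flag that tests whether $u_x=0^n$, and use it to control $G$. For $x\in S$ the flag is set because $u_x=0^n$. For $x\notin S$, injectivity of $W_S$ on the $(n+\ell)$-bit register together with $W_S\ket{x'}\ket{0^\ell}=\ket{0^n}\ket{\iota(x')}$ for all $x'\in S$ forces $(u_x,t_x)\ne(0^n,\iota(x'))$ for every $x'\in S$. So either $u_x\ne 0^n$, in which case the flag is off, or $u_x=0^n$ and $t_x\notin\iota(S)$, in which case $G(t_x)=0$. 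In both cases nothing is written to the output. The zero test on $n$ bits costs $O(n)$ Toffoli gates with one flag qubit, using a standard multi-controlled Toffoli decomposition with $O(1)$ borrowed or dirty qubits (for example, the $\ell$ label qubits or the $b$ output qubits can serve as dirty ancillas). This is where the $+n$ term comes from. Computing the flag and uncomputing it afterwards costs $O(n)$ in total.

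It remains to implement the flag-controlled version of $G$ with $O(s)$ Toffoli gates and $O(\log s)$ ancillas. I would use unary iteration over the $2^\ell<2s$ label values, with the flag included as an extra control at the root. This costs $O(2^\ell)=O(s)$ Toffoli gates and $O(\ell)$ ancillas, and at each leaf it applies CNOTs from the leaf's indicator qubit to the output bits where $G$ equals $1$. These fanouts are Clifford, so the bound does not depend on $b$. Adding the costs gives $O(s)$ for each of $W_S$ and $W_S^{-1}$, $O(s)$ for the unary iteration, and $O(n)$ for the flag, for a total of $O(s+n)$ Toffoli gates. The ancillas are $\ell+a+O(\ell)+1=O(\log s)$, and all of them are returned to $\ket{0}$.
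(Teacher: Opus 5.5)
Your proposal is correct and follows the paper's proof essentially step for step. You apply $W_S$, then compute an $O(n)$-Toffoli flag testing whether the transformed input register equals $0^n$. Next you apply the flag-controlled label function by unary iteration over $2^{\ell}<2s$ values; the paper packages this as a single function $H(z,t)$ on $\ell+1$ bits. Finally you uncompute the flag and $W_S$. Your treatment of inputs $x\notin S$ with $u_x=0^n$ is also the paper's argument: injectivity of $W_S$ on computational basis states, with the auxiliary register returned to $0^a$ for every $x$ as Lemma~\ref{lem:sparse-indexing} guarantees.
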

\begin{proof}
Let $S:=\operatorname{supp}(F)$, and let $\ell:=\lceil \log s\rceil$.
The construction has two stages.
First, \cref{lem:sparse-indexing} assigns each supported input a unique label while mapping the original $n$-bit input register to $0^n$.
Second, a single flag bit records whether this transformed input register is all zero, so the remaining task is just a Boolean function on the label together with the flag.

By \cref{lem:sparse-indexing}, there is an injection $\iota:S\to\{0,1\}^{\ell}$ and an NCT circuit $W_S$ using $O(s)$ Toffoli gates and $a=O(\log s)$ ancillas such that
\begin{align}
W_S|x\rangle |0\rangle^{\otimes \ell}|0\rangle^{\otimes a}
=
\ket{0^n}|\iota(x)\rangle |0\rangle^{\otimes a}
\end{align}
for every $x\in S$.

Define a Boolean function $H:\{0,1\}^{\ell+1}\to\{0,1\}^b$
by
\begin{align}
H(z,t):=
\begin{cases}
F(x), & z=1 \text{ and } t=\iota(x)\text{ for some }x\in S,\\
0^b, & \text{otherwise}.
\end{cases}
\end{align}
This is well defined because $\iota$ is injective.
By unary iteration~\cite{babbush2018encoding, khattar2025rise}, there exists an NCT circuit $U_H$ implementing $H$ with $O(2^{\ell+1}) = O(s)$ Toffoli gates and $O(\log s)$ ancillas.

We implement $F$ as follows, which is illustrated in \cref{fig:sparse-small-support-boolean}. Suppressing the ancillas used internally by $W_S$ and $U_H$, we start from $|x\rangle |y\rangle |0\rangle^{\otimes \ell}|0\rangle$, where the registers store, respectively, the input, the output, the label, and the flag bit.
We first apply $W_S$ to the input register and the label register. Next we compute a flag bit $z$ indicating whether the input register equals $0^{n}$. This flag can be computed by an $n$-controlled Toffoli gate with open controls on the input bits; together with its uncomputation, it requires $O(n)$ Toffoli gates and $O(1)$ ancillas~\cite{khattar2025rise}. We then apply $U_H$ to the flag and label registers and the output register, and finally uncompute the flag and apply $W_S^\dagger$.

If $x\in S$, then after applying $W_S$ and computing this flag we obtain $|0\rangle^{\otimes n}|y\rangle |\iota(x)\rangle |1\rangle$ on the input, output, label, and flag registers. The circuit $U_H$ therefore maps the output register to $|y\oplus H(1,\iota(x))\rangle=|y\oplus F(x)\rangle$. Uncomputing the flag and $W_S$ returns all ancillas to $|0\rangle$, so the overall action is $|x\rangle |y\rangle \longmapsto |x\rangle |y\oplus F(x)\rangle$.

Now suppose $x\notin S$, and, after suppressing the clean auxiliary bits of $W_S$, write
\begin{align}
W_S|x\rangle |0\rangle^{\otimes \ell}
= |u\rangle |t\rangle .
\end{align}
If the flag is $0$, then $H$ outputs $0^b$ by definition, so the output register is unchanged. If the flag is $1$, then $u=0^n$. We claim that $t\notin \iota(S)$. Indeed, if $t=\iota(x')$ for some $x'\in S$, then
\begin{align}
W_S|x\rangle |0\rangle^{\otimes \ell}
=
\ket{0^n}|\iota(x')\rangle
=
W_S|x'\rangle |0\rangle^{\otimes \ell},
\end{align}
contradicting the injectivity of the action of $W_S$ on computational-basis states. Hence $H(1,t)=0^b$, and the output register again remains unchanged. Therefore the circuit computes $F$ exactly on all inputs.

The Toffoli cost is the sum of the costs of $W_S$ and $W_S^\dagger$, the flag computation and its uncomputation, and $U_H$, namely, $O(s + n)$.
The ancilla count is $O(\log s)$. This proves the lemma.
\end{proof}

\begin{figure}[htbp]
\centering
\resizebox{0.8\textwidth}{!}{\begin{tikzpicture}[scale=0.850000,x=1pt,y=1pt]
\filldraw[color=white] (0.000000, -7.000000) rectangle (232.000000, 49.000000);
% Drawing wires
% Line 6: x W \ket{x} \ket{x}
\draw[color=black] (0.000000,42.000000) -- (232.000000,42.000000);
\draw[color=black] (0.000000,42.000000) node[left] {$\ket{x}$};
% Line 7: lab W \ket{0^\ell} \ket{0^\ell}
\draw[color=black] (0.000000,28.000000) -- (232.000000,28.000000);
\draw[color=black] (0.000000,28.000000) node[left] {$\ket{0^\ell}$};
% Line 8: z W \ket{0} \ket{0}
\draw[color=black] (0.000000,14.000000) -- (232.000000,14.000000);
\draw[color=black] (0.000000,14.000000) node[left] {$\ket{0}$};
% Line 9: y W \ket{y} \ket{y\oplus F(x)}
\draw[color=black] (0.000000,0.000000) -- (232.000000,0.000000);
\draw[color=black] (0.000000,0.000000) node[left] {$\ket{y}$};
% Done with wires; drawing gates
% Line 11: x lab G:width=42 $W_S$
\draw (31.000000,42.000000) -- (31.000000,28.000000);
\begin{scope}
\draw[fill=white] (31.000000, 35.000000) +(-45.000000:29.698485pt and 18.384776pt) -- +(45.000000:29.698485pt and 18.384776pt) -- +(135.000000:29.698485pt and 18.384776pt) -- +(225.000000:29.698485pt and 18.384776pt) -- cycle;
\clip (31.000000, 35.000000) +(-45.000000:29.698485pt and 18.384776pt) -- +(45.000000:29.698485pt and 18.384776pt) -- +(135.000000:29.698485pt and 18.384776pt) -- +(225.000000:29.698485pt and 18.384776pt) -- cycle;
\draw (31.000000, 35.000000) node {$W_S$};
\end{scope}
% Line 12: TOUCH
% Line 13: +z -x
\draw (75.000000,42.000000) -- (75.000000,14.000000);
\begin{scope}
\draw[fill=white] (75.000000, 14.000000) circle(3.000000pt);
\clip (75.000000, 14.000000) circle(3.000000pt);
\draw (72.000000, 14.000000) -- (78.000000, 14.000000);
\draw (75.000000, 11.000000) -- (75.000000, 17.000000);
\end{scope}
\draw[fill=white] (75.000000, 42.000000) circle(2.250000pt);
% Line 14: TOUCH
% Line 15: lab z y G:width=36 $U_H$
\draw (116.000000,28.000000) -- (116.000000,0.000000);
\begin{scope}
\draw[fill=white] (116.000000, 14.000000) +(-45.000000:25.455844pt and 28.284271pt) -- +(45.000000:25.455844pt and 28.284271pt) -- +(135.000000:25.455844pt and 28.284271pt) -- +(225.000000:25.455844pt and 28.284271pt) -- cycle;
\clip (116.000000, 14.000000) +(-45.000000:25.455844pt and 28.284271pt) -- +(45.000000:25.455844pt and 28.284271pt) -- +(135.000000:25.455844pt and 28.284271pt) -- +(225.000000:25.455844pt and 28.284271pt) -- cycle;
\draw (116.000000, 14.000000) node {$U_H$};
\end{scope}
% Line 16: TOUCH
% Line 17: +z -x
\draw (157.000000,42.000000) -- (157.000000,14.000000);
\begin{scope}
\draw[fill=white] (157.000000, 14.000000) circle(3.000000pt);
\clip (157.000000, 14.000000) circle(3.000000pt);
\draw (154.000000, 14.000000) -- (160.000000, 14.000000);
\draw (157.000000, 11.000000) -- (157.000000, 17.000000);
\end{scope}
\draw[fill=white] (157.000000, 42.000000) circle(2.250000pt);
% Line 18: TOUCH
% Line 19: x lab G:width=42 $W_S^\dagger$
\draw (201.000000,42.000000) -- (201.000000,28.000000);
\begin{scope}
\draw[fill=white] (201.000000, 35.000000) +(-45.000000:29.698485pt and 18.384776pt) -- +(45.000000:29.698485pt and 18.384776pt) -- +(135.000000:29.698485pt and 18.384776pt) -- +(225.000000:29.698485pt and 18.384776pt) -- cycle;
\clip (201.000000, 35.000000) +(-45.000000:29.698485pt and 18.384776pt) -- +(45.000000:29.698485pt and 18.384776pt) -- +(135.000000:29.698485pt and 18.384776pt) -- +(225.000000:29.698485pt and 18.384776pt) -- cycle;
\draw (201.000000, 35.000000) node {$W_S^\dagger$};
\end{scope}
% Done with gates; drawing ending labels
\draw[color=black] (232.000000,42.000000) node[right] {$\ket{x}$};
\draw[color=black] (232.000000,28.000000) node[right] {$\ket{0^\ell}$};
\draw[color=black] (232.000000,14.000000) node[right] {$\ket{0}$};
\draw[color=black] (232.000000,0.000000) node[right] {$\ket{y\oplus F(x)}$};
% Done with ending labels; drawing cut lines and comments
% Done with comments
\end{tikzpicture}}
\caption{High-level circuit for the small-support implementation of $F$. The open-control CNOTs compute and uncompute a flag $z$ indicating whether the input register equals $0^n$. The box $U_H$ then uses the flag and the label register to update the output register.}
\label{fig:sparse-small-support-boolean}
\end{figure}

\subsection{Large-support regime}
When the support is larger, the direct sparse-support construction is no longer competitive, so we switch to a recursive argument.

\begin{lemma}
\label{lem:sparse-total-boolean-large}
Let $F:\{0,1\}^n\to\{0,1\}^b$ be a Boolean function with $|\operatorname{supp}(F)| = s$, and assume that $s>n^{3/2}\log n$.
Then $F$ can be implemented by an NCT circuit with
$O(n^{3/4}\sqrt{s\log\frac{s}{\sqrt{n}}}+n^{1/4}\sqrt{bs})$
Toffoli gates and $O(n\log\frac{s}{\sqrt{n}}+n^{1/4}\sqrt{s}+n^{-1/4}\sqrt{bs})$ ancillas.
\end{lemma}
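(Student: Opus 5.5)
The plan is a recursion that keeps the prefix length fixed at $r:=\lceil \log(s/\lceil\sqrt n\rceil)\rceil$, so that $2^r=\Theta(s/\sqrt n)$. The hypothesis $s>n^{3/2}\log n$ guarantees $r\ge \log n$. A recursive instance consists of four pieces of data.
\begin{itemize}
\item A current function $G$ on $m\le n$ \emph{free} input qubits, with $|\operatorname{supp}(G)|\le s$.
\item A single \emph{flag} qubit $f$.
\item The promise that the original $F(x)$ equals $G$ evaluated on the free qubits when $f=1$, and equals $0^b$ when $f=0$.
\item The invariant that the free qubits and the flag were obtained from $x$ by a reversible NCT circuit, which is undone at the end.
\end{itemize}
Initially $G=F$, $m=n$ and $f=1$ (one NOT). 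Every Boolean-function implementation invoked from \cref{lem:total-boolean} is made controlled on $f$ by adding $f$ as one extra input bit. This only changes constants, since the cost is $O(\sqrt{b2^{k+1}})$.

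Let $P\subseteq\{0,1\}^r$ be the set of $r$-bit prefixes (first $r$ free bits) of points in $\operatorname{supp}(G)$. The recursion splits on the size of $P$.

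\textbf{Compression step ($|P|\le 2^{r-1}$).} Choose a permutation $\pi$ of $\{0,1\}^r$ that sends $P$ into the strings with top bit $0$. I would implement $\pi$ in place on the prefix bits in three stages:
\begin{itemize}
\item compute $\pi(p)$ into $r$ fresh qubits using \cref{lem:total-boolean};
\item erase $p$ by XOR-ing in $\pi^{-1}(\pi(p))$, again using \cref{lem:total-boolean};
\item swap the two registers.
\end{itemize}
This costs $O(\sqrt{r2^r})$ Toffolis and $O(\sqrt{r2^r})$ temporary ancillas. Next I would replace $f$ by $f\wedge\overline{\text{top bit}}$, computed into a new flag qubit with one Toffoli. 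The old flag and the top bit then become frozen: they are no longer free, but they must be kept until the uncomputation at the end. The new instance has $m-1$ free bits and the same support size. Because the input length shrinks each time, there are at most $n$ such steps. Their total cost is
\[
O\!\left(n\sqrt{r\,s/\sqrt n}\right)=O\!\left(n^{3/4}\sqrt{s\log\tfrac{s}{\sqrt n}}\right).
\]
The frozen flags, together with any per-level bookkeeping of width $O(r)$, account for the $O(n\log\frac{s}{\sqrt n})$ ancilla term.

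\textbf{Peeling step ($|P|>2^{r-1}$).} Pick one representative $x_p=(p,w_p)$ of $\operatorname{supp}(G)$ for each $p\in P$, and let $T$ be the set of these representatives. Write $G=G_T\oplus G'$, where $G_T$ agrees with $G$ on $T$ and is zero elsewhere. This decomposition is valid because the two supports are disjoint and the outputs are combined by XOR. The function $G_T$ is implemented in five stages:
\begin{itemize}
\item compute $p\mapsto w_p$ (an $(m-r)$-bit output) on the prefix, controlled by $f$, using \cref{lem:total-boolean};
\item compare $w_p$ with the actual suffix by CNOTs and an $O(m)$-Toffoli multi-controlled gate, producing an equality bit $e$;
\item uncompute $w_p$;
\item apply the function $(p,e)\mapsto e\cdot G(x_p)$, controlled by $f$, into the output register;
\item uncompute $e$ by repeating the first three stages.
\end{itemize}
Each peel costs $O(\sqrt{(m+b)2^r}+m)$ Toffolis and $O(\sqrt{(m+b)2^r})$ reusable ancillas. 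The remaining instance $G'$ loses more than $2^{r-1}=\Omega(s/\sqrt n)$ support points, so there are $O(\sqrt n)$ peels in total. Their combined cost is
\[
O\!\left(\sqrt n\,\sqrt{(n+b)s/\sqrt n}\right)=O\!\left(n^{3/4}\sqrt s+n^{1/4}\sqrt{bs}\right).
\]

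\textbf{Termination and accounting.} The recursion stops when $G$ becomes zero, or when $m\le r$. In the latter case I apply the controlled \cref{lem:total-boolean} once, at cost $O(\sqrt{b\,2^{r}})=O(\sqrt{bs})$. After that, all compressions are uncomputed in reverse order, which at most doubles the cost. The peak ancilla usage is the frozen flag and bookkeeping qubits plus the largest single temporary workspace, namely
\[
O\!\left(\sqrt{(n+b)s/\sqrt n}\right)=O\!\left(n^{1/4}\sqrt s+n^{-1/4}\sqrt{bs}\right),
\]
which matches the claimed ancilla bound.

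I expect the main obstacle to be correctness on \emph{all} inputs, since $F$ must be implemented as a total function and not only on its support. Two points need care. First, after a compression the inputs whose prefix is inactive must be killed by the flag; this is why the flag is maintained instead of being discarded. Second, in a peel the equality test must ensure that only the representative itself fires. I would check by induction that at every level the invariant "$F(x)=f\cdot G(\text{free bits})$" holds, and that every temporary register is restored to $0$.
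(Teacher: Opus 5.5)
Your proposal is correct and follows essentially the same route as the paper. Both use a fixed prefix length $r=\lceil\log(s/\lceil\sqrt n\rceil)\rceil$, compress when few $r$-bit prefixes are active (at most $n$ times, $O(\sqrt{r2^r})$ Toffolis each), peel one representative per active prefix otherwise (at most $O(\sqrt n)$ times), and balance the costs the same way. The differences are only in implementation, and none changes the stated bounds: you discard inactive prefixes with a flag qubit and an in-place permutation where the paper reserves the code $0^{r-1}$ and writes $\bar\gamma_G(u)$ into a fresh $(r-1)$-bit register, and you recurse down to $m=r$ instead of stopping at the small-support leaf.
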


\begin{proof}
Let $c$ be a parameter satisfying $1\le c < s$, to be chosen later, and set $r:=\left\lceil \log(s/c)\right\rceil$.
We fix this value of $r$ for the entire recursion.
At a generic recursive call, let $G:\{0,1\}^m\to\{0,1\}^b$ be the current Boolean function, let $S_G:=\operatorname{supp}(G)$, and let $t:=|S_G|$.
Initially, $(G,m,t)=(F,n,s)$.

If $t\le n^{3/2}\log n$, we stop the recursion and apply \cref{lem:sparse-supp-boolean} to $G$. Since $m\le n$, this leaf costs $O(t+m)=O(n^{3/2}\log n)$ Toffoli gates and $O(\log n)$ ancillas.

If $t>n^{3/2}\log n$ and $m=r$, we stop the recursion and apply \cref{lem:total-boolean}. Since $2^r\le 2s/c$, this leaf costs $O(\sqrt{b\,2^r})=O(\sqrt{\frac{bs}{c}})$ Toffoli gates and ancillas.

Assume from now on that $t>n^{3/2}\log n$ and $m>r$. Write each input as $x=uv$, where $u\in\{0,1\}^r$ and $v\in\{0,1\}^{m-r}$, and let $U_G:=\{u\in\{0,1\}^r:\exists v,\ uv\in S_G\}$ and $p_G:=|U_G|$.
The set $U_G$ records which $r$-bit prefixes are active on the current support. The recursive step now chooses between prefix compression and support peeling according to the size of $U_G$.

\paragraph{Case 1: few active prefixes.}
Assume that $p_G\le 2^{r-1}-1$.
This is exactly the regime in which the active prefixes fit into the nonzero strings of length $r-1$.
We therefore reserve $0^{r-1}$ as a default code for inactive prefixes and use the remaining labels to relabel the active prefixes injectively.

Choose an injection $\gamma_G:U_G\to \{0,1\}^{r-1}\setminus\{0^{r-1}\}$, and extend it to a total map $\bar{\gamma}_G:\{0,1\}^r\to\{0,1\}^{r-1}$ by setting $\bar{\gamma}_G(u)=0^{r-1}$ for $u\notin U_G$. The point of excluding $0^{r-1}$ from the image of $\gamma_G$ is that the all-zero label becomes a safe default for inactive prefixes.

Define the compressed function $\widetilde{G}:\{0,1\}^{m-1}\to\{0,1\}^b$ by
\begin{align}
\widetilde{G}(yv)
:=
\begin{cases}
G(uv), & \text{if } y=\gamma_G(u)\text{ and }uv\in S_G,\\
0^b, & \text{otherwise}.
\end{cases}
\end{align}
Because the map $uv\mapsto \gamma_G(u)v$ is injective on $S_G$ (if $\gamma_G(u)v=\gamma_G(u')v'$, then $v=v'$, and injectivity of $\gamma_G$ on $U_G$ gives $u=u'$), the definition of $\widetilde{G}$ is unambiguous. Consequently, $\operatorname{supp}(\widetilde{G})=\{\gamma_G(u)v:uv\in S_G\}$, and hence $\widetilde{G}$ has support size exactly $t$.

To implement $G$, we compute $\bar{\gamma}_G(u)$ into a fresh $(r-1)$-bit register, apply the recursive circuit for $\widetilde{G}$ to the logical input $\bar{\gamma}_G(u)v$, and then uncompute $\bar{\gamma}_G(u)$.
This prefix-compression step is illustrated in \cref{fig:sparse-large-support-case1}.

\begin{figure}[htbp]
\centering
\resizebox{0.7\textwidth}{!}{\begin{tikzpicture}[scale=0.850000,x=1pt,y=1pt]
\filldraw[color=white] (0.000000, -7.000000) rectangle (218.000000, 49.000000);
% Drawing wires
% Line 6: u W \ket{u} \ket{u}
\draw[color=black] (0.000000,42.000000) -- (218.000000,42.000000);
\draw[color=black] (0.000000,42.000000) node[left] {$\ket{u}$};
% Line 7: g W \ket{0^{r-1}} \ket{0^{r-1}}
\draw[color=black] (0.000000,28.000000) -- (218.000000,28.000000);
\draw[color=black] (0.000000,28.000000) node[left] {$\ket{0^{r-1}}$};
% Line 8: v W \ket{v} \ket{v}
\draw[color=black] (0.000000,14.000000) -- (218.000000,14.000000);
\draw[color=black] (0.000000,14.000000) node[left] {$\ket{v}$};
% Line 9: y W \ket{y} \ket{y\oplus G(uv)}
\draw[color=black] (0.000000,0.000000) -- (218.000000,0.000000);
\draw[color=black] (0.000000,0.000000) node[left] {$\ket{y}$};
% Done with wires; drawing gates
% Line 11: u g G:width=52 $\bar{\gamma}_G$
\draw (36.000000,42.000000) -- (36.000000,28.000000);
\begin{scope}
\draw[fill=white] (36.000000, 35.000000) +(-45.000000:36.769553pt and 18.384776pt) -- +(45.000000:36.769553pt and 18.384776pt) -- +(135.000000:36.769553pt and 18.384776pt) -- +(225.000000:36.769553pt and 18.384776pt) -- cycle;
\clip (36.000000, 35.000000) +(-45.000000:36.769553pt and 18.384776pt) -- +(45.000000:36.769553pt and 18.384776pt) -- +(135.000000:36.769553pt and 18.384776pt) -- +(225.000000:36.769553pt and 18.384776pt) -- cycle;
\draw (36.000000, 35.000000) node {$\bar{\gamma}_G$};
\end{scope}
% Line 12: TOUCH
% Line 13: g v y G:width=54 $U_{\widetilde{G}}$
\draw (109.000000,28.000000) -- (109.000000,0.000000);
\begin{scope}
\draw[fill=white] (109.000000, 14.000000) +(-45.000000:38.183766pt and 28.284271pt) -- +(45.000000:38.183766pt and 28.284271pt) -- +(135.000000:38.183766pt and 28.284271pt) -- +(225.000000:38.183766pt and 28.284271pt) -- cycle;
\clip (109.000000, 14.000000) +(-45.000000:38.183766pt and 28.284271pt) -- +(45.000000:38.183766pt and 28.284271pt) -- +(135.000000:38.183766pt and 28.284271pt) -- +(225.000000:38.183766pt and 28.284271pt) -- cycle;
\draw (109.000000, 14.000000) node {$U_{\widetilde{G}}$};
\end{scope}
% Line 14: TOUCH
% Line 15: u g G:width=52 $\bar{\gamma}_G^\dagger$
\draw (182.000000,42.000000) -- (182.000000,28.000000);
\begin{scope}
\draw[fill=white] (182.000000, 35.000000) +(-45.000000:36.769553pt and 18.384776pt) -- +(45.000000:36.769553pt and 18.384776pt) -- +(135.000000:36.769553pt and 18.384776pt) -- +(225.000000:36.769553pt and 18.384776pt) -- cycle;
\clip (182.000000, 35.000000) +(-45.000000:36.769553pt and 18.384776pt) -- +(45.000000:36.769553pt and 18.384776pt) -- +(135.000000:36.769553pt and 18.384776pt) -- +(225.000000:36.769553pt and 18.384776pt) -- cycle;
\draw (182.000000, 35.000000) node {$\bar{\gamma}_G^\dagger$};
\end{scope}
% Done with gates; drawing ending labels
\draw[color=black] (218.000000,42.000000) node[right] {$\ket{u}$};
\draw[color=black] (218.000000,28.000000) node[right] {$\ket{0^{r-1}}$};
\draw[color=black] (218.000000,14.000000) node[right] {$\ket{v}$};
\draw[color=black] (218.000000,0.000000) node[right] {$\ket{y\oplus G(uv)}$};
% Done with ending labels; drawing cut lines and comments
% Done with comments
\end{tikzpicture}}
\caption{Prefix-compression step in Case 1 of the large-support recursion. The circuit computes $\bar{\gamma}_G(u)$ into a fresh $(r-1)$-bit register, applies the recursive implementation $U_{\widetilde{G}}$ to the logical input $\bar{\gamma}_G(u)v$, and then uncomputes $\bar{\gamma}_G(u)$. This reduces the input length from $m$ to $m-1$ while preserving the support size.}
\label{fig:sparse-large-support-case1}
\end{figure}

This works because:
\begin{itemize}
\item if $uv\in S_G$, then $\bar{\gamma}_G(u)=\gamma_G(u)$ and $\widetilde{G}(\bar{\gamma}_G(u)v)=G(uv)$;
\item if $uv\notin S_G$ but $u\in U_G$, then $\widetilde{G}(\gamma_G(u),v)=0^b$: otherwise there would exist $u'\in U_G$ with $\gamma_G(u')=\gamma_G(u)$ and $u'v\in S_G$, and injectivity of $\gamma_G$ would force $u'=u$, contradicting $uv\notin S_G$;
\item if $u\notin U_G$, then $\bar{\gamma}_G(u)=0^{r-1}$, and this compressed prefix never appears on the support of $\widetilde{G}$.
\end{itemize}

Thus Case 1 strictly decreases the input length, while keeping the support size unchanged.
The only non-recursive cost is the synthesis of $\bar{\gamma}_G$, which by \cref{lem:total-boolean} costs $O(\sqrt{(r-1)\,2^r})=O(\sqrt{\frac{s}{c}\log\frac{s}{c}})$ Toffoli gates and ancillas.

\paragraph{Case 2: many active prefixes.}
Assume now that $p_G\ge 2^{r-1}$.
In this regime there are many active prefixes.
Since every $u\in U_G$ supports at least one point of $S_G$, we can choose one representative from each active prefix and remove all of them in one shot.
For each $u\in U_G$, choose one representative suffix $\tau_G(u)\in\{0,1\}^{m-r}$ such that $u\tau_G(u)\in S_G$. Define $A_G:=\{u\tau_G(u):u\in U_G\}$ and $S_{\mathrm{rem}}:=S_G\setminus A_G$.
Thus $A_G$ contains exactly one support point for each active prefix, so $|A_G|=p_G$.
Define
\begin{equation}
\begin{aligned}
G_{\mathrm{peel}}(x)&:=
\begin{cases}
G(x), & x\in A_G,\\
0^b, & x\notin A_G,
\end{cases}
\\
G_{\mathrm{rem}}(x)&:=
\begin{cases}
G(x), & x\in S_{\mathrm{rem}},\\
0^b, & x\notin S_{\mathrm{rem}}.
\end{cases}
\end{aligned}
\end{equation}
Then $G=G_{\mathrm{rem}}\oplus G_{\mathrm{peel}}$ because $A_G\cap S_{\mathrm{rem}}=\varnothing$, and $|\operatorname{supp}(G_{\mathrm{rem}})|=t-p_G$.

To implement $G_{\mathrm{peel}}$, it suffices to detect whether the suffix equals the chosen representative for the current prefix and, if so, output the corresponding value.
For this purpose, extend $\tau_G$ to a total map $\bar{\tau}_G:\{0,1\}^r\to\{0,1\}^{m-r}$ by setting $\bar{\tau}_G(u)=0^{m-r}$ for $u\notin U_G$.
Also define $H_G:\{0,1\}^{r+1}\to\{0,1\}^b$ by
\begin{align}
H_G(z,u):=
\begin{cases}
G_{\mathrm{peel}}(u\tau_G(u)), & z=1 \text{ and } u\in U_G,\\
0^b, & \text{otherwise}.
\end{cases}
\end{align}

We implement $G_{\mathrm{peel}}$ as follows on input $x=uv$. First XOR $\bar{\tau}_G(u)$ into the suffix register. Next compute a flag bit $z$ indicating whether the modified suffix equals $0^{m-r}$. Then apply the circuit for $H_G$ to the pair $(z,u)$ and the output register. Finally uncompute $z$ and undo the XOR by $\bar{\tau}_G(u)$.
The peeling step, followed by the recursive implementation of $G_{\mathrm{rem}}$, is illustrated in \cref{fig:sparse-large-support-case2}.

\begin{figure}[htbp]
\centering
\resizebox{0.85\textwidth}{!}{\begin{tikzpicture}[scale=0.850000,x=1pt,y=1pt]
\filldraw[color=white] (0.000000, -9.500000) rectangle (272.000000, 66.500000);
% Drawing wires
% Line 6: u W \ket{u} \ket{u}
\draw[color=black] (0.000000,57.000000) -- (272.000000,57.000000);
\draw[color=black] (0.000000,57.000000) node[left] {$\ket{u}$};
% Line 7: v W \ket{v} \ket{v}
\draw[color=black] (0.000000,38.000000) -- (272.000000,38.000000);
\draw[color=black] (0.000000,38.000000) node[left] {$\ket{v}$};
% Line 8: z W \ket{0} \ket{0}
\draw[color=black] (0.000000,19.000000) -- (272.000000,19.000000);
\draw[color=black] (0.000000,19.000000) node[left] {$\ket{0}$};
% Line 9: y W \ket{y} \ket{y\oplus G(uv)}
\draw[color=black] (0.000000,0.000000) -- (272.000000,0.000000);
\draw[color=black] (0.000000,0.000000) node[left] {$\ket{y}$};
% Done with wires; drawing gates
% Line 11: u v G:width=35 $\bar{\tau}_G$
\draw (27.500000,57.000000) -- (27.500000,38.000000);
\begin{scope}
\draw[fill=white] (27.500000, 47.500000) +(-45.000000:24.748737pt and 24.748737pt) -- +(45.000000:24.748737pt and 24.748737pt) -- +(135.000000:24.748737pt and 24.748737pt) -- +(225.000000:24.748737pt and 24.748737pt) -- cycle;
\clip (27.500000, 47.500000) +(-45.000000:24.748737pt and 24.748737pt) -- +(45.000000:24.748737pt and 24.748737pt) -- +(135.000000:24.748737pt and 24.748737pt) -- +(225.000000:24.748737pt and 24.748737pt) -- cycle;
\draw (27.500000, 47.500000) node {$\bar{\tau}_G$};
\end{scope}
% Line 12: TOUCH
% Line 13: +z -v
\draw (68.000000,38.000000) -- (68.000000,19.000000);
\begin{scope}
\draw[fill=white] (68.000000, 19.000000) circle(3.000000pt);
\clip (68.000000, 19.000000) circle(3.000000pt);
\draw (65.000000, 19.000000) -- (71.000000, 19.000000);
\draw (68.000000, 16.000000) -- (68.000000, 22.000000);
\end{scope}
\draw[fill=white] (68.000000, 38.000000) circle(2.250000pt);
% Line 14: TOUCH
% Line 15: u G:width=35 $H_G$  z y G:width=35 $H_G$
\draw (108.500000,57.000000) -- (108.500000,0.000000);
\begin{scope}
\draw[fill=white] (108.500000, 57.000000) +(-45.000000:24.748737pt and 11.313708pt) -- +(45.000000:24.748737pt and 11.313708pt) -- +(135.000000:24.748737pt and 11.313708pt) -- +(225.000000:24.748737pt and 11.313708pt) -- cycle;
\clip (108.500000, 57.000000) +(-45.000000:24.748737pt and 11.313708pt) -- +(45.000000:24.748737pt and 11.313708pt) -- +(135.000000:24.748737pt and 11.313708pt) -- +(225.000000:24.748737pt and 11.313708pt) -- cycle;
\draw (108.500000, 57.000000) node {$H_G$};
\end{scope}
\begin{scope}
\draw[fill=white] (108.500000, 9.500000) +(-45.000000:24.748737pt and 24.748737pt) -- +(45.000000:24.748737pt and 24.748737pt) -- +(135.000000:24.748737pt and 24.748737pt) -- +(225.000000:24.748737pt and 24.748737pt) -- cycle;
\clip (108.500000, 9.500000) +(-45.000000:24.748737pt and 24.748737pt) -- +(45.000000:24.748737pt and 24.748737pt) -- +(135.000000:24.748737pt and 24.748737pt) -- +(225.000000:24.748737pt and 24.748737pt) -- cycle;
\draw (108.500000, 9.500000) node {$H_G$};
\end{scope}
% Line 16: TOUCH
% Line 17: +z -v
\draw (149.000000,38.000000) -- (149.000000,19.000000);
\begin{scope}
\draw[fill=white] (149.000000, 19.000000) circle(3.000000pt);
\clip (149.000000, 19.000000) circle(3.000000pt);
\draw (146.000000, 19.000000) -- (152.000000, 19.000000);
\draw (149.000000, 16.000000) -- (149.000000, 22.000000);
\end{scope}
\draw[fill=white] (149.000000, 38.000000) circle(2.250000pt);
% Line 18: TOUCH
% Line 19: u v G:width=35 $\bar{\tau}_G$
\draw (189.500000,57.000000) -- (189.500000,38.000000);
\begin{scope}
\draw[fill=white] (189.500000, 47.500000) +(-45.000000:24.748737pt and 24.748737pt) -- +(45.000000:24.748737pt and 24.748737pt) -- +(135.000000:24.748737pt and 24.748737pt) -- +(225.000000:24.748737pt and 24.748737pt) -- cycle;
\clip (189.500000, 47.500000) +(-45.000000:24.748737pt and 24.748737pt) -- +(45.000000:24.748737pt and 24.748737pt) -- +(135.000000:24.748737pt and 24.748737pt) -- +(225.000000:24.748737pt and 24.748737pt) -- cycle;
\draw (189.500000, 47.500000) node {$\bar{\tau}_G$};
\end{scope}
% Line 20: TOUCH
% Line 21: u v G:width=35 $G_{\mathrm{rem}}$  y G:width=35 $G_{\mathrm{rem}}$
\draw (244.500000,57.000000) -- (244.500000,0.000000);
\begin{scope}
\draw[fill=white] (244.500000, 47.500000) +(-45.000000:24.748737pt and 24.748737pt) -- +(45.000000:24.748737pt and 24.748737pt) -- +(135.000000:24.748737pt and 24.748737pt) -- +(225.000000:24.748737pt and 24.748737pt) -- cycle;
\clip (244.500000, 47.500000) +(-45.000000:24.748737pt and 24.748737pt) -- +(45.000000:24.748737pt and 24.748737pt) -- +(135.000000:24.748737pt and 24.748737pt) -- +(225.000000:24.748737pt and 24.748737pt) -- cycle;
\draw (244.500000, 47.500000) node {$G_{\mathrm{rem}}$};
\end{scope}
\begin{scope}
\draw[fill=white] (244.500000, 0.000000) +(-45.000000:24.748737pt and 11.313708pt) -- +(45.000000:24.748737pt and 11.313708pt) -- +(135.000000:24.748737pt and 11.313708pt) -- +(225.000000:24.748737pt and 11.313708pt) -- cycle;
\clip (244.500000, 0.000000) +(-45.000000:24.748737pt and 11.313708pt) -- +(45.000000:24.748737pt and 11.313708pt) -- +(135.000000:24.748737pt and 11.313708pt) -- +(225.000000:24.748737pt and 11.313708pt) -- cycle;
\draw (244.500000, 0.000000) node {$G_{\mathrm{rem}}$};
\end{scope}
% Done with gates; drawing ending labels
\draw[color=black] (272.000000,57.000000) node[right] {$\ket{u}$};
\draw[color=black] (272.000000,38.000000) node[right] {$\ket{v}$};
\draw[color=black] (272.000000,19.000000) node[right] {$\ket{0}$};
\draw[color=black] (272.000000,0.000000) node[right] {$\ket{y\oplus G(uv)}$};
% Done with ending labels; drawing cut lines and comments
% Done with comments
\end{tikzpicture}}
\caption{Support-peeling step in Case 2 of the large-support recursion. The circuit XORs $\bar{\tau}_G(u)$ into the suffix register, computes a flag indicating whether the modified suffix equals $0^{m-r}$, applies $H_G$ to contribute $G_{\mathrm{peel}}$, uncomputes the flag and suffix update, and then applies the recursive implementation of $G_{\mathrm{rem}}$. The two connected boxes labeled $H_G$ emphasize that this operation depends on $(z,u)$ while the suffix register is bypassed.}
\label{fig:sparse-large-support-case2}
\end{figure}

This circuit is correct:
\begin{itemize}
\item if $x=u\tau_G(u)\in A_G$, then the modified suffix becomes $0^{m-r}$, so $z=1$ and the output changes by $H_G(1,u)=G_{\mathrm{peel}}(u\tau_G(u))=G_{\mathrm{peel}}(x)$;
\item if $u\in U_G$ but $x\notin A_G$, then $v\neq\tau_G(u)$, so the modified suffix is nonzero and hence $z=0$;
\item if $u\notin U_G$, then $\bar{\tau}_G(u)=0^{m-r}$ and $H_G(z,u)=0^b$ for both values of $z$.
\end{itemize}
Thus the circuit contributes exactly $G_{\mathrm{peel}}$, and composing it with the recursive circuit for $G_{\mathrm{rem}}$ yields a circuit for $G$.

The non-recursive work in Case 2 consists of synthesizing $\bar{\tau}_G$, synthesizing $H_G$, and computing the flag on the suffix register. By \cref{lem:total-boolean}, $\bar{\tau}_G$ costs $O(\sqrt{(m-r)\,2^r})$ and $H_G$ costs $O(\sqrt{b\,2^{r+1}})$.
The flag computation is an $(m-r)$-controlled Toffoli with open controls on the suffix bits, which can be decomposed into $O(m-r)=O(m)$ Toffoli gates using $O(1)$ ancillas~\cite{khattar2025rise}. Hence the total non-recursive Toffoli cost of Case 2 is
$O(\sqrt{(m-r)\,2^r}+\sqrt{b\,2^r}+m)=O(\sqrt{\frac{ns}{c}}+\sqrt{\frac{bs}{c}}+n).$

\paragraph{Progress and termination.}
Case 1 sends the state $(m,t)$ to $(m-1,t)$.
Case 2 keeps $m$ fixed and decreases the support size by at least $p_G\ge 2^{r-1}\ge \frac{s}{2c}$.
Therefore Case 2 can occur at most $\lceil \frac{s}{2^{r-1}}\rceil = O(c)$ times along any recursion path.
Similarly, Case 1 can occur at most $n-r\le n$ times because each such step decreases $m$ by one.
The recursion terminates when the current support size drops to at most $n^{3/2}\log n$, in which case we apply \cref{lem:sparse-supp-boolean}, or when the input length reaches $m=r$, in which case we apply \cref{lem:total-boolean}.

\paragraph{Toffoli count.}
Each Case-1 step contributes $O\!\left(\sqrt{\frac{s}{c}\log\frac{s}{c}}\right)$, so all Case-1 steps together contribute $O\!\left(n\sqrt{\frac{s}{c}\log\frac{s}{c}}\right)$.
Each Case-2 step contributes $O\!\left(\sqrt{\frac{ns}{c}}+\sqrt{\frac{bs}{c}}+n\right)$, and there are $O(c)$ such steps, so all Case-2 steps together contribute
$O(\sqrt{nsc}+\sqrt{bsc}+nc)$.

There are two possible leaves.
The small-support leaf contributes $O(n^{3/2}\log n)$, and the leaf at $m=r$ contributes $O(\sqrt{bs/c})$.
The latter is absorbed by $O(\sqrt{bsc})$ because $c\ge 1$.
Hence the total Toffoli count satisfies
$O(n\sqrt{\frac{s}{c}\log\frac{s}{c}}+\sqrt{nsc}+\sqrt{bsc}+nc+n^{3/2}\log n)$.
Now set $c:=\lceil \sqrt{n}\rceil$. Then, the total Toffoli count is
$O(n^{3/4}\sqrt{s\log\frac{s}{\sqrt{n}}}+n^{1/4}\sqrt{bs})$.

\paragraph{Ancillas.}
We separate the ancilla count into persistent registers and reusable scratch space.
With the choice $c=\lceil\sqrt{n}\rceil$, the persistent registers come only from Case 1: in each prefix-compression step, the newly computed $(r-1)$-bit string $\bar{\gamma}_G(u)$ becomes part of the logical input to the recursive call and must therefore remain available until that call returns.
Since there are at most $n-r$ Case-1 steps, these persistent registers contribute $O((n-r)(r-1))=O\!\left(n\log\frac{s}{\sqrt{n}}\right)$ ancillas.

All remaining work bits are reusable.
The temporary ancillas used to synthesize $\bar{\gamma}_G$, $\bar{\tau}_G$, and $H_G$ are uncomputed before the next recursive step starts, and the same is true for the ancillas used to compute the suffix flag in Case 2.
The terminal call to \cref{lem:sparse-supp-boolean}, when it occurs, uses only $O(\log n)$ ancillas and can also share this scratch space.
These Boolean-function subroutines are instantiated using \cref{lem:total-boolean}, so the reusable part is bounded by
\begin{align}
O\!\left(\max\left\{\sqrt{(r-1)2^r},\sqrt{(n-r)2^r},\sqrt{b2^r}\right\}+\log n\right)
=O\!\left(n^{1/4}\sqrt{s}+n^{-1/4}\sqrt{bs}+\log n\right),
\end{align}
where the final $O(\log n)$ term comes from the possible terminal call to \cref{lem:sparse-supp-boolean}; the suffix-flag computation in Case 2 contributes only $O(1)$ reusable scratch space~\cite{khattar2025rise}.

Consequently, the recursive framework uses $O\!\left(n\log\frac{s}{\sqrt{n}}+n^{1/4}\sqrt{s}+n^{-1/4}\sqrt{bs}\right)$ ancillas. 

\end{proof}

\section{Lower Bounds}
\label{sec:lower-bound}
In this section we prove lower bounds on the worst-case $T$-count for approximately preparing sparse quantum states.
For an $n$-qubit pure state $\ket{\psi}$, let $\mathcal T_\epsilon(\ket{\psi})$ denote the minimum number of $T$ gates in a unitary Clifford+$T$ circuit that, starting from $\ket{0^{n+a}}$ for some $a\ge 0$, prepares an $(n+a)$-qubit pure state $\ket{\phi}$ with
$D_{\mathrm{tr}}(\ket{\phi},\ket{\psi}\otimes \ket{0^a})\le \epsilon$.
Define
\begin{align}
\mathcal T_\epsilon^{\mathrm{spar}}(n,s)
=
\max_{\ket{\psi}:\,|\operatorname{supp}(\ket{\psi})|\le s}
\mathcal T_\epsilon(\ket{\psi}).
\end{align}

The main theorem in this section is as follows:
\begin{theorem}
\label{thm:sparse-state-lower-bound}
For every $2\le s\le 2^n$ and every $0<\epsilon<1$, there exists an $n$-qubit $s$-sparse state $\ket{\psi}$ such that
$\mathcal T_\epsilon(\ket{\psi})
=
\Omega(\sqrt{s\log(1/\epsilon)}+\log(1/\epsilon))$.
Moreover, for every $0<\epsilon\le \frac{1}{6}$ and every $2\le s\le 2^{n/2}$, there exists an $n$-qubit $s$-sparse state $\ket{\psi}$ such that
$\mathcal T_\epsilon(\ket{\psi})
=
\Omega(\min\{s,\sqrt{ns}\})$.
\end{theorem}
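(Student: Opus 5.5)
The plan has three parts, matching the three terms in the statement.

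\emph{First claim, term $\Omega(\sqrt{s\log(1/\epsilon)}+\log(1/\epsilon))$.} Here I reduce to the dense-state lower bound of \cite{gosset2024quantum}. Let $k:=\lfloor\log s\rfloor\ge 1$. Every $k$-qubit state $\ket{\chi}$ gives an $s$-sparse $n$-qubit state $\ket{\chi}\ket{0^{n-k}}$. A circuit preparing $\ket{\chi}\ket{0^{n-k}}$ up to error $\epsilon$ is also a circuit preparing $\ket{\chi}$ up to error $\epsilon$, once the last $n-k$ qubits are regarded as ancillas. So $\mathcal T_\epsilon^{\mathrm{spar}}(n,s)$ is at least the worst-case $T$-count for $k$-qubit states. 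That quantity is $\Omega(\sqrt{2^k\log(1/\epsilon)}+\log(1/\epsilon))=\Omega(\sqrt{s\log(1/\epsilon)}+\log(1/\epsilon))$ by the optimality part of \cite{gosset2024quantum}. The $\log(1/\epsilon)$ term already holds for single-qubit states, so $s\ge 2$ suffices.

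\emph{Small-support branch, target $\Omega(s)$ for $s\le n$.} I use the stabilizer-nullity route sketched in the overview. Take $\ket{\psi}=\ket{W_s}\ket{0^{n-s}}$. The main obstacle is the robust nullity bound. Any state $\ket{\phi}$ on $n+a$ qubits output by a circuit with $t$ $T$ gates has $\rank(\stab(\phi))\ge n+a-t$ \cite{beverland2020lower}. It therefore suffices to show that every $\ket{\phi}$ $\epsilon$-close to $\ket{\psi}\ket{0^a}$ has $\rank\stab(\phi)\le n+a-(1-\epsilon^2)s+O(\log\frac{2}{1-\epsilon^2})$.

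To prove this, let $G=\stab(\phi)$ be an abelian group of signed Paulis of size $2^g$. Then $\ket{\phi}\bra{\phi}\le \Pi_G=2^{-g}\sum_{P\in G}P$, a projector of rank $2^{n+a-g}$. Hence $1-\epsilon^2\le|\langle\psi 0|\phi\rangle|^2\le\bra{\psi0}\Pi_G\ket{\psi0}=2^{-g}\sum_{P\in G}\bra{\psi0}P\ket{\psi0}$. The key computation concerns Pauli expectations on $\ket{W_s}$. A Pauli $P$ has nonzero expectation only if its $X$-part flips either no qubits or exactly two qubits of the $W$ block. In the latter case the expectation is at most $2/s$ in absolute value. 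Diagonal Paulis have expectation $1-2|\mathrm{supp}_Z\cap[s]|/s$ up to sign.

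I then count how many elements of $G$ can have large expectation. Project $G$ onto the $s$ $W$-qubits; the number of elements with expectation $\ge\delta$ is at most $2^{g-(\text{something}\approx s)}\cdot\mathrm{poly}$. The cleaner route I would try first is an entropic one. Let $p_\phi(P)=2^{-(n+a)}\langle P\rangle_\phi^2$ and note that $\langle P\rangle_\phi=\pm1$ on $G$. Continuity of Pauli expectations then transfers this to $\psi$: $\sum_{P\in G}\langle P\rangle_{\psi0}\ge 2^g(1-\epsilon^2)$, while $\sum_P\langle P\rangle_{\psi0}^2=2^{n+a}$ up to purity. Combining these with the structure of $\ket{W_s}$, I want to show that a group-restricted average $\ge 1-\epsilon^2$ forces $g\le n+a-(1-\epsilon^2)s+1+\log\frac{2}{1-\epsilon^2}$. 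If this holds, choosing $\epsilon\le1/6$ gives $t=\Omega(s)$.

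\emph{Large-support branch, target $\Omega(\sqrt{ns})$ for $n\le s\le 2^{n/2}$.} This is a counting argument. Build a packing of $s$-sparse states that are pairwise at trace distance $>2\epsilon$. Choose supports as $s$-subsets of $\{0,1\}^n$ with equal amplitudes $s^{-1/2}$, and require pairwise symmetric difference $>s/2$ or so. A Gilbert--Varshamov/greedy argument gives $\binom{2^n}{s}/\binom{2^n}{s}^{c}$-type bounds, i.e., $2^{\Omega(s\log(2^n/s))}=2^{\Omega(ns)}$ states, using $s\le2^{n/2}$. A sign or phase packing adds $2^{\Omega(s)}$ more. Each circuit with $t$ $T$ gates $\epsilon$-prepares at most one packing element. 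On the other hand, by \cite{gosset2024quantum} the number of distinct states reachable with $t$ $T$ gates, after reduction to $O(n+t)$ relevant qubits, is $2^{O(n^2+t^2)}$. Hence $n^2+t^2=\Omega(ns)$. For $s\ge Cn$ this yields $t=\Omega(\sqrt{ns})$. Combining with the previous branch for $s=O(n)$ gives $\Omega(\min\{s,\sqrt{ns}\})$.
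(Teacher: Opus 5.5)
Your first claim and your large-support branch follow the paper's route, and they are fine. The first embeds a $\lfloor\log s\rfloor$-qubit hard instance from \cite{gosset2024quantum}. The second greedily packs uniform subset states $s^{-1/2}\sum_{j\in S}\ket{j}$ and compares with the $2^{O(n^2+t^2)}$ count.

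The gap is in the small-support branch. You correctly reduce to showing that every stabilizer group $G$ on $n+a$ qubits with $\bra{\psi0}\Pi_G\ket{\psi0}\ge c^2:=1-\epsilon^2$ has $\rank(G)\le n+a-c^2s+1+\log_2(2/c^2)$. You also compute the right Pauli expectations. But you then only say you ``want to show'' this bound and continue ``if this holds.'' This robust nullity bound is the whole content of the $\Omega(s)$ lower bound, so that branch is unproved as written.

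Neither of your suggested routes closes it:
\begin{itemize}
\item A bound of the form $2^{g-\Theta(s)}$ on the number of elements of $G$ with large expectation is false. For $G=\langle Z_{s+1},\dots,Z_{n+a}\rangle$, every element has expectation $1$.
\item The identity $\sum_P\langle P\rangle^2=2^{n+a}$ holds for every pure state, so it carries no information specific to $\ket{W_s}$.
\end{itemize}

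The missing idea is to split $G$ along its diagonal subgroup $D:=G\cap\{\pm Z_1^{a_1}\cdots Z_{n+a}^{a_{n+a}}\}$, of rank $d$.

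\emph{Bounding $g-d$.} We have $2^{-g}\sum_{P\in D}P=2^{-(g-d)}\Pi_D$. Each $P\in G\setminus D$ has $|\bra{\psi0}P\ket{\psi0}|\le 2/s$ by your two-flip computation. Hence $c^2\le 2^{-(g-d)}\bra{\psi0}\Pi_D\ket{\psi0}+2/s\le 2^{-(g-d)}+2/s$. Once $s\ge 4/c^2$, this gives $g-d\le\log_2(2/c^2)$.

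\emph{Bounding $d$.} Since $D\le G$, we have $\Pi_G\le\Pi_D$, so $\bra{\psi0}\Pi_D\ket{\psi0}\ge c^2$. Because $\Pi_D$ is a $0/1$ diagonal projector, $\bra{\psi0}\Pi_D\ket{\psi0}=r/s$, where $r$ is the number of strings $e_j0^{n-s+a}$ fixed by every element of $D$; hence $r\ge c^2 s$. The diagonal Paulis fixing $e_{j_1},\dots,e_{j_r}$ form the group generated by $-Z_{j_1}\cdots Z_{j_r}$ and the $Z_k$ with $k\notin\{j_1,\dots,j_r\}$. This group has rank $n+a-r+1$, so $d\le n+a-c^2s+1$.

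Adding the two bounds gives exactly the rank bound you need. What makes the diagonal count work is the averaged projector $\Pi_D$ together with the group structure, not your per-element formula $1-2|\mathrm{supp}_Z\cap[s]|/s$.
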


\begin{remark}
For simplicity, we define $\mathcal T_\epsilon$ using unitary Clifford+$T$ circuits.
The same asymptotic lower bounds also hold in the stronger adaptive Clifford+$T$ model,
where mid-circuit measurements, classical feed-forward, and mixed-state outputs are allowed,
after replacing $\epsilon$ by $\sqrt{6\epsilon}$, see \cite[Claim~4.5]{gosset2024quantum}. 
% Indeed, by \cite[Claim~4.5]{gosset2024quantum}, any such adaptive protocol with expected $T$-count $t$ and error $\epsilon$ yields a Clifford computation with Pauli postselection and $O(t)$ magic states that prepares a pure state within trace distance $O(\sqrt{\epsilon})$ of the target. The first bound then follows from the same counting argument, while the second follows by combining the counting argument with the $W_s$-state lower bound.
\end{remark}

\begin{proof}
For the first claim, let $m:=\lfloor \log_2 s\rfloor$. By \cite[Theorem~4.1]{gosset2024quantum}, there exists an $m$-qubit state $\ket{\phi}$ such that
\begin{align}
\mathcal T_\epsilon(\ket{\phi})
=
\Omega(\sqrt{2^m\log(1/\epsilon)}+\log(1/\epsilon))
=
\Omega(\sqrt{s\log(1/\epsilon)}+\log(1/\epsilon)).
\end{align}
Viewing $\ket{\phi}$ as an $n$-qubit state by appending $\ket{0^{n-m}}$ gives an $s$-sparse witness.

For the second claim, we split into the regimes $s=O(n)$ and $s=\Omega(n)$. The case $s=\Omega(n)$ with $s\le 2^{n/2}$ is proved in \cref{sebsec:large-s-regime}, see \cref{lem:sparse-counting-lower-bound}. The case $s=O(n)$ is proved in \cref{sebsec:small-s-regime}, see \cref{lem:w-state-lower-bound}. 
\end{proof}

\subsection{Large-$s$ regime}
\label{sebsec:large-s-regime}
% By \cite{gosset2024quantum}, some $n$-qubit state requires $\Omega(\sqrt{2^n\log 1/\epsilon}+\log 1/\epsilon)$ $T$ gates to prepare. This immediately yields the lower bound $\Omega(\sqrt{s\log 1/\epsilon}+\log 1/\epsilon)$ for $n$-qubit $s$-sparse states. 

We combine a packing argument with the following counting lemma.

\begin{lemma}[\cite{gosset2024quantum}]
\label{lem:unitary-counting}
Let $0<\epsilon<1/2$, and let $\mathcal F$ be a family of $n$-qubit pure states such that
$D_{\mathrm{tr}}(\ket{\psi},\ket{\phi})>2\epsilon$
for all distinct $\ket{\psi},\ket{\phi}\in \mathcal F$. Suppose every state in $\mathcal F$ admits an $\epsilon$-approximate preparation by a unitary Clifford+$T$ circuit with at most $t$ $T$ gates. Then $|\mathcal F|\le 2^{O(n^2+t^2)}$.
\end{lemma}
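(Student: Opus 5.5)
The plan is a counting argument. We map each state of $\mathcal F$ to a combinatorial object drawn from a set of size $2^{O(n^2+t^2)}$, and use the $2\epsilon$-separation to show that this map is injective. The main difficulty is that the number $a$ of ancillas is unbounded. So the object we count must depend only on $n$ and $t$, not on the particular circuit.

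\textbf{Step 1 (gadgetization).} Fix a unitary Clifford+$T$ circuit on $n+a$ qubits with at most $t$ $T$ gates. We replace each $T$ gate by the standard injection gadget. This gadget consumes a magic state $\ket{T}=T\ket{+}$, applies Cliffords, and postselects one qubit on $\ket{0}$; the Clifford correction on the successful branch can be omitted. The output $\ket{\phi}$ then becomes
\[
\ket{\phi}\propto(I\otimes\bra{0^t})\,V\,\bigl(\ket{0^{n+a}}\otimes\ket{T}^{\otimes t}\bigr)
\]
for a Clifford $V$ on $n+a+t$ qubits. Next we also project the $a$ ancillas onto $\ket{0^a}$ and normalize, which gives an $n$-qubit state $\ket{\phi'}$. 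This costs no accuracy. If $D_{\mathrm{tr}}(\ket{\phi},\ket{\psi}\otimes\ket{0^a})\le\epsilon$, then
\[
|\braket{\phi'|\psi}|=\frac{|\braket{\phi|\psi,0^a}|}{\|(I\otimes\bra{0^a})\ket{\phi}\|}\ge|\braket{\phi|\psi,0^a}|,
\]
so $D_{\mathrm{tr}}(\ket{\phi'},\ket{\psi})\le\epsilon$. The overlap is nonzero because $\epsilon<1$, so the projection does not vanish.

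\textbf{Step 2 (removing the dependence on $a$, the main obstacle).} The linear map $M:=(\bra{0^{a}}\otimes I_n\otimes\bra{0^t})V(\ket{0^{n+a}}\otimes\,\cdot\,)$ goes from the $t$ magic-input qubits to the $n$ output qubits. It is a composition of stabilizer preparations, a Clifford unitary, and Pauli-eigenstate postselections. I would show that its Choi vector, obtained by feeding in half of $t$ Bell pairs, is proportional to a stabilizer state $\ket{S}$ on $n+t$ qubits, or is zero. The reason is that a stabilizer state postselected onto computational-basis states of some qubits is again a stabilizer state on the remaining qubits. This is the standard closure of stabilizer states under Pauli measurement and partial projection, which I would verify via the stabilizer-group update rule. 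Consequently
\[
\ket{\phi'}\propto\bigl(I_n\otimes\bra{\overline{T}}^{\otimes t}\bigr)\ket{S},
\]
with $\overline T$ the complex conjugate. So $\ket{\phi'}$ is determined by $\ket{S}$ alone, with no trace of $a$ or of the internal structure of the circuit.

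\textbf{Step 3 (counting and injectivity).} The number of stabilizer states on $n+t$ qubits is $2^{O((n+t)^2)}=2^{O(n^2+t^2)}$. Hence the set $\mathcal C$ of candidate states $\ket{\phi'}$ reachable with at most $t$ $T$ gates satisfies $|\mathcal C|\le 2^{O(n^2+t^2)}$; summing over $t'\le t$ only adds a factor $t+1$. By Step 1, each $\ket{\psi}\in\mathcal F$ lies within trace distance $\epsilon$ of some candidate $c(\psi)\in\mathcal C$. Trace distance on pure states is a metric, so if $c(\psi)=c(\psi')$ then $D_{\mathrm{tr}}(\ket{\psi},\ket{\psi'})\le2\epsilon$. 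Separation then forces $\psi=\psi'$, so $c$ is injective, and $|\mathcal F|\le|\mathcal C|\le2^{O(n^2+t^2)}$.

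I expect the delicate point to be Step 2: making the postselection-closure argument rigorous, including degenerate zero-norm branches, which we simply discard. If that route becomes messy, the fallback is the structural fact that the circuit equals a product of at most $t$ Pauli $\pi/8$-rotations applied to a stabilizer state on $n+a$ qubits. One can then restrict to the at most $n+t$ qubits that the relevant Paulis and output actually touch, after a Clifford change of basis, and count those configurations directly.
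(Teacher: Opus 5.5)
Your proof is correct. The paper gives no argument of its own for this lemma; it imports the bound from \cite{gosset2024quantum}. There it is proved for the stronger adaptive model in two stages. First the protocol is converted into a Clifford computation with Pauli postselection on magic-state inputs. Then the number of resulting canonical forms is counted, which is the source of the additive $n^2$ in the exponent that the paper mentions in its small-$s$ discussion.

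Your argument follows the same two-stage strategy, specialized to unitary circuits and made self-contained:
\begin{itemize}
\item The injection identity $T_q=\sqrt2\,(I\otimes\bra{0})\,\mathrm{CNOT}_{q\to m}\,(I\otimes\ket{T})$ gives the postselected form exactly.
\item Projecting the ancillas onto $\ket{0^a}$ can only increase the overlap with $\ket{\psi}$.
\item Your Choi-vector identity $M\ket{T}^{\otimes t}=(I_n\otimes\bra{\overline{T}}^{\otimes t})\ket{\Phi_M}$ plays the role of the canonical form. Here $\ket{\Phi_M}$ is a multiple of an $(n+t)$-qubit stabilizer state, so $a$ disappears in one stroke and you get $2^{O((n+t)^2)}=2^{O(n^2+t^2)}$ candidate rays.
\end{itemize}

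Two small points. Nonvanishing of $\ket{\Phi_M}$ is automatic, since $M\ket{T}^{\otimes t}\neq 0$ for any circuit that actually achieves error $\epsilon<1$. So the degenerate branch you worry about never arises, and the fallback via Pauli $\pi/8$-rotations is unnecessary.

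As for what each route buys: yours is elementary, loses nothing in $\epsilon$, and uses only $\epsilon<1$ rather than $\epsilon<1/2$. The cited version covers mid-circuit measurements, classical feed-forward and mixed outputs, at the price of replacing $\epsilon$ by $\sqrt{6\epsilon}$. The paper's remark about the adaptive model relies on that extension.
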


We now construct a large packing of sparse states.

\begin{lemma}
\label{lem:sparse-uniform-packing}
Let $0<\epsilon\le 1/2$, and write $N:=2^n$.
Then there exists a family $\mathcal F$ of $n$-qubit $s$-sparse states such that every two distinct states in $\mathcal F$ have trace distance greater than $\epsilon$, and
\begin{align}
|\mathcal F|
\ge
\frac{1}{s+1}
(2e)^{-s/2}
\left(\frac{N}{s}\right)^{(1-2\epsilon^2)s}.
\end{align}
\end{lemma}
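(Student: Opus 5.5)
We take $\mathcal F$ to be a family of uniform superpositions over $s$-subsets. For $T\subseteq\{0,1\}^n$ with $|T|=s$, let $\ket{\psi_T}:=s^{-1/2}\sum_{x\in T}\ket{x}$. Then $|\braket{\psi_T|\psi_{T'}}|=|T\cap T'|/s$, so
\begin{align}
D_{\mathrm{tr}}(\ket{\psi_T},\ket{\psi_{T'}})=\sqrt{1-(|T\cap T'|/s)^2}.
\end{align}
This distance exceeds $\epsilon$ if and only if $|T\cap T'|<s\sqrt{1-\epsilon^2}$. Since $\sqrt{1-\epsilon^2}\ge 1-\epsilon^2$, it suffices that $|T\cap T'|<(1-\epsilon^2)s$. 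The task therefore reduces to a combinatorial packing statement: we need a large collection of $s$-subsets of an $N$-set whose pairwise intersections are smaller than $k:=\lceil(1-\epsilon^2)s\rceil$.

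We build this collection greedily (Gilbert--Varshamov style). Pick subsets one at a time, each time discarding every $s$-subset that meets a chosen one in at least $k$ elements. For a fixed $T$, the number of $s$-subsets $T'$ with $|T\cap T'|\ge k$ is
\begin{align}
B:=\sum_{j\ge k}\binom{s}{j}\binom{N-s}{s-j}\le (s+1)\max_{j\ge k}\binom{s}{j}\binom{N}{s-j}\le (s+1)\,2^s\binom{N}{s-k}.
\end{align}
The greedy procedure thus produces at least $\binom{N}{s}/B$ sets. To bound this ratio from below, use $\binom{N}{s}\ge (N/s)^s$ and $\binom{N}{s-k}\le (eN/(s-k))^{s-k}$, where $s-k\le\epsilon^2 s$. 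The powers of $N$ then give roughly $N^{k}$ times factors of the form $s^{-s}(s-k)^{s-k}e^{-(s-k)}$, and we aim to collect these into the stated expression $(s+1)^{-1}(2e)^{-s/2}(N/s)^{(1-2\epsilon^2)s}$.

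The main obstacle is matching the exact constants. The slack in the exponent $(1-2\epsilon^2)s$, as opposed to $(1-\epsilon^2)s$, and the factor $(2e)^{-s/2}$ have to absorb three things: the $2^s$ coming from $\binom{s}{j}$, the loss $(s/(s-k))^{s-k}$, and the ceiling in $k$. Taking the worst-case choice of $s-k$, the loss term $(s/(s-k))^{s-k}e^{s-k}$ is maximized at $s-k=s$, but we only need it bounded by $(N/s)^{\epsilon^2 s}$ times $(2e)^{s/2}/2^s$-type factors. That is where $\epsilon\le 1/2$ (so $\epsilon^2 s\le s/4$) and the trivial case $N/s$ small get used. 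If the crude $2^s$ bound proves too lossy, we will bound $\sum_{j\ge k}\binom{s}{j}\binom{N}{s-j}$ more carefully via $\binom{s}{j}=\binom{s}{s-j}\le (es/(s-j))^{s-j}$ and maximize over $i=s-j\le\epsilon^2 s$ directly.
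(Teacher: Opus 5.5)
\paragraph{Where the proposal falls short.} Your setup is the same as the paper's: uniform superpositions over $s$-subsets, a greedy (Gilbert--Varshamov) packing, and $|\mathcal F|\ge\binom{N}{s}/B$, where $B$ is the ball size. Relaxing the radius to $i_0:=s-k=\lfloor\epsilon^2 s\rfloor$ instead of the exact $\lfloor s(1-\sqrt{1-\epsilon^2})\rfloor$ is harmless. The gap is that the one step with real content is left open: bounding $B$ so that exactly the constants $(2e)^{-s/2}$ and $(1-2\epsilon^2)s$ come out. Your primary estimate cannot deliver this.

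Take $\epsilon=1/2$ and $4\mid s$. With $\binom{s}{j}\le 2^s$ and $\binom{N-s}{i}\le\binom{N}{i}$, your chain gives $\frac{1}{s+1}2^{-s}(4e)^{-s/4}(N/s)^{3s/4}$. This dominates the target $\frac{1}{s+1}(2e)^{-s/2}(N/s)^{s/2}$ only when $N/s\ge 16/e\approx 5.89$. The regime where the target is at most $1$, so that a single state suffices, is $N/s\le 2e(s+1)^{2/s}$, which tends to $2e\approx 5.44$. So for large $s$ there is a window covered by neither argument (e.g.\ $n=13$, $s=1400$).

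Your fallback, $\binom{s}{i}\binom{N-s}{i}\le (e^2sN/i^2)^i$, yields $|\mathcal F|\ge\frac{1}{s+1}(N/s)^{s-i_0}\bigl(\frac{i_0}{es}\bigr)^{2i_0}$. This does close the argument together with the trivial case, but only after checking $\frac{\ln(2e)}{2}\cdot\frac{1-y}{1-2y}\ge 2y(1-\ln y)$ for $y=\epsilon^2\in(0,1/4]$. The slack there is only about $0.07$, near $y\approx 0.23$. That case analysis is not in your proposal.

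\paragraph{The missing idea.} The paper's proof removes all of this by bounding each ball term by a single binomial. Dropping $i$ elements of $S$ and adding $i$ elements of $[N]\setminus S$ determines a $2i$-subset of $[N]$, so $\binom{s}{i}\binom{N-s}{i}\le\binom{N}{2i}$. Since $2i_0\le s/2\le N/2$, monotonicity gives $B\le (i_0+1)\binom{N}{2i_0}$. Together with $\binom{N}{2i_0}\le (eN/2i_0)^{2i_0}$, this gives
\[
\frac{\binom{N}{s}}{B}\ \ge\ \frac{1}{s+1}\Bigl(\frac{2i_0}{es}\Bigr)^{2i_0}\Bigl(\frac{N}{s}\Bigr)^{s-2i_0}.
\]
Because $x\mapsto (x/e)^x$ is decreasing on $(0,1]$ and $2i_0/s\le 1/2$, the middle factor is at least $(2e)^{-s/2}$. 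Also $s-2i_0\ge (1-2\epsilon^2)s$.

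This estimate deliberately gives up a factor $(N/s)^{i_0}$ in the exponent in exchange for constants independent of $N/s$. That is exactly why the statement has $1-2\epsilon^2$ rather than $1-\epsilon^2$, and why no case split on $N/s$ is needed. The paper runs this with the exact radius $\lfloor s(1-\sqrt{1-\epsilon^2})\rfloor$; it works verbatim with your $i_0$.
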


\begin{proof}
For each $s$-subset $S\subseteq [N]$, let $\ket{u_S}:=\frac{1}{\sqrt s}\sum_{j\in S}\ket{j}$ and $\mathcal U:=\{\ket{u_S}: S\subseteq [N],\ |S|=s\}$.
Set $r:=\lfloor s(1-\sqrt{1-\epsilon^2})\rfloor$. For $S,T\subseteq [N]$ with $|S|=|T|=s$, we have $\braket{u_S|u_T}=|S\cap T|/s$, and hence
\begin{align}
D_{\mathrm{tr}}(\ket{u_S},\ket{u_T})\le \epsilon
\qquad \Longleftrightarrow \qquad
s-|S\cap T|\le r.
\end{align}
Fix $S$. The number of sets $T$ such that $D_{\mathrm{tr}}(\ket{u_S},\ket{u_T})\le \epsilon$ is
\begin{align}
N_\epsilon
:=
\sum_{i=0}^{r}\binom{s}{i}\binom{N-s}{i}.
\end{align}

For each $0\le i\le r$, $\binom{s}{i}\binom{N-s}{i}\le \binom{N}{2i}$, since choosing $i$ elements of $S$ and $i$ elements of $[N]\setminus S$ determines a $2i$-subset of $[N]$. Since $\epsilon\le 1/2$, we have $r\le \epsilon^2 s\le s/4$, so $2r\le s/2\le N/2$. Hence the binomial coefficients are increasing on $\{0,\dots,2r\}$, and
\begin{align}
N_\epsilon
\le
\sum_{i=0}^{r}\binom{N}{2i}
\le
(r+1)\binom{N}{2r}.
\end{align}

We now greedily construct a family $\mathcal F\subseteq \mathcal U$ whose elements are pairwise more than $\epsilon$ apart. Start with $\mathcal V:=\mathcal U$ and $\mathcal F:=\emptyset$. While $\mathcal V\neq \emptyset$, choose any state $\ket{u_S}\in \mathcal V$, add it to $\mathcal F$, and delete from $\mathcal V$ every state $\ket{u_T}$ satisfying
\begin{align}
D_{\mathrm{tr}}(\ket{u_S},\ket{u_T})\le \epsilon.
\end{align}
At each step, the number of deleted states is at most $N_\epsilon$, so
\begin{align}
|\mathcal F|
\ge
\frac{|\mathcal U|}{N_\epsilon}
=
\frac{\binom{N}{s}}{N_\epsilon}.
\end{align}
By construction, every two distinct states in $\mathcal F$ have trace distance greater than $\epsilon$.

Using $\binom{N}{s}\ge (N/s)^s$ and $\binom{N}{2r}\le (eN/2r)^{2r}$, we obtain
\begin{align}
\frac{\binom{N}{s}}{N_\epsilon}
\ge
\frac{1}{r+1}
\left(\frac{2r}{es}\right)^{2r}
\left(\frac{N}{s}\right)^{s-2r}.
\end{align}
Because $2r/s\le 1/2$ and the function $x\mapsto (x/e)^x$ is decreasing on $(0,1]$,
\begin{align}
\left(\frac{2r}{es}\right)^{2r}
=
\left[\left(\frac{2r/s}{e}\right)^{2r/s}\right]^s
\ge
\left(\frac{1}{\sqrt{2e}}\right)^s.
\end{align}
Since $r+1\le s+1$ and $s-2r\ge (2\sqrt{1-\epsilon^2}-1)s\ge (1-2\epsilon^2)s$,
\begin{align}
\frac{\binom{N}{s}}{N_\epsilon}
\ge
\frac{1}{s+1}
(2e)^{-s/2}
\left(\frac{N}{s}\right)^{(1-2\epsilon^2)s}.
\end{align}
This proves the claim.
\end{proof}

Combining the packing lemma with the counting lemma now gives the desired lower bound in the large-$s$ regime.

\begin{lemma}
\label{lem:sparse-counting-lower-bound}
There exists an absolute constant $C>0$ such that for every $Cn \le s \le 2^{n/2}$ and $0<\epsilon\le \frac{1}{6}$,
$\mathcal T_\epsilon^{\mathrm{spar}}(n,s)=\Omega(\sqrt{ns})$.
\end{lemma}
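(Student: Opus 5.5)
Let $t:=\mathcal T_\epsilon^{\mathrm{spar}}(n,s)$. The plan is to compare the two counting results already available: the packing of \cref{lem:sparse-uniform-packing} and the upper bound on the number of cheaply preparable states in \cref{lem:unitary-counting}. We apply \cref{lem:sparse-uniform-packing} with separation parameter $2\epsilon\le 1/3\le 1/2$. This gives a family $\mathcal F$ of $n$-qubit $s$-sparse states that are pairwise at trace distance greater than $2\epsilon$, and whose size satisfies
\begin{align}
\log|\mathcal F|\ \ge\ (1-8\epsilon^2)\,s\log\frac{N}{s}-\frac{s}{2}\log(2e)-\log(s+1).
\end{align}
Every member of $\mathcal F$ is $s$-sparse, so by the definition of $t$ it admits an $\epsilon$-approximate preparation using at most $t$ $T$ gates. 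Since $\epsilon\le 1/6<1/2$, \cref{lem:unitary-counting} applies and yields $\log|\mathcal F|\le K(n^2+t^2)$ for an absolute constant $K$.

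One technical point needs care. \cref{lem:unitary-counting} is stated for $n$-qubit states, while the circuits may use ancillas and output $\ket{\psi}\otimes\ket{0^a}$. I would invoke the lemma in the form proved in \cite{gosset2024quantum}, which permits ancillas (their counting runs over Clifford+$T$ circuits with arbitrary workspace). Concretely, I would cite that the $n^2$ term there counts stabilizer data on the $n$ output qubits plus the magic-state register, so unboundedly many ancillas do not change the bound. If this turns out to be delicate, the fallback is to view each $\ket{\psi}\otimes\ket{0^a}$ as a state on a fixed number of qubits. The separation is preserved under tensoring with $\ket{0^a}$, and a crude argument bounds the relevant $a$ by $\mathrm{poly}(t)$ without loss.

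Next we lower-bound $\log|\mathcal F|$. Since $\epsilon\le 1/6$, we have $1-8\epsilon^2\ge 7/9$. Since $s\le 2^{n/2}$, we have $\log(N/s)\ge n/2$. Together these give $\log|\mathcal F|\ge \tfrac{7}{18}ns-O(s)-O(\log s)$. For $n$ larger than an absolute constant this is at least $\tfrac{1}{4}ns$; small $n$ is absorbed into constants. Combining with the counting bound gives $\tfrac14 ns\le K(n^2+t^2)$.

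Now we choose $C$ so that $s\ge Cn$ forces $K n^2\le \tfrac18 ns$, that is, $C\ge 8K$. Then $t^2\ge ns/(8K)$, so $t=\Omega(\sqrt{ns})$. The main obstacle is the ancilla/model issue in the previous paragraph. The numerical comparison itself is routine. The constants need only be checked so that the $n^2$ term is dominated exactly in the regime $s\ge Cn$.
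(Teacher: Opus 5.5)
Your proposal is correct and is essentially the paper's proof. Both play the packing of \cref{lem:sparse-uniform-packing} against the count of \cref{lem:unitary-counting}, use $s\le 2^{n/2}$ to get an $\Omega(ns)$ exponent, and use $s\ge Cn$ so that the $n^2$ term cannot absorb it; the paper separates at $3\epsilon$ rather than your $2\epsilon$, which makes no difference since the packing lemma already gives strict separation. Your ancilla concern is already handled inside \cref{lem:unitary-counting}, because the paper's notion of $\epsilon$-approximate preparation allows arbitrary $a$; your fallback should be dropped, since it would need $a=O(t)$ rather than merely $\mathrm{poly}(t)$ to avoid weakening the bound.
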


\begin{proof}
Set $\delta_0:=3\epsilon$ and $N:=2^n$. By \cref{lem:sparse-uniform-packing}, there exists a family $\mathcal F$ of $n$-qubit $s$-sparse states such that every two distinct states in $\mathcal F$ have trace distance greater than $\delta_0$, and
\begin{align}
\label{eq:sparse-packing-size}
|\mathcal F|
\ge
\frac{1}{s+1}
(2e)^{-s/2}
\left(\frac{N}{s}\right)^{(1-2\delta_0^2)s}.
\end{align}

Set $t:=\mathcal T_{\epsilon}^{\mathrm{spar}}(n,s)$. By definition of $t$, every state in $\mathcal F$ admits an $\epsilon$-approximate preparation using at most $t$ $T$ gates. Since $\delta_0=3\epsilon>2\epsilon$, \cref{lem:unitary-counting} gives
\begin{align}
2^{O(n^2+t^2)}
\ge
|\mathcal F|.
\end{align}
Taking logarithms gives
\begin{align}
n^2+t^2
=
\Omega\!\left((1-2\delta_0^2)s(n-\log s)-\frac{s}{2}\log 2e-\log (s+1)\right).
\end{align}
If $s\le 2^{n/2}$, then the right-hand side is $\Omega(sn)$. Choosing the absolute constant $C$ in the statement large enough, it follows that $t=\Omega(\sqrt{ns})$ whenever $s\ge Cn$.
\end{proof}

\subsection{Small-$s$ regime}
\label{sebsec:small-s-regime}
The counting argument weakens when $s=O(n)$, because the canonical-form bound contributes an additive $n^2$ term in the exponent. For this regime we exploit the structure of $W$ states.

Let $\mathcal P_m$ denote the Pauli group on $m$ qubits. For an $m$-qubit pure state $\ket{\phi}$, define $\stab(\ket{\phi}):=\{g\in \mathcal P_m: g\ket{\phi}=\ket{\phi}\}$. For a stabilizer subgroup $A\le \mathcal P_m$, that is, an abelian subgroup with $-I\notin A$, let $\rank(A):=\log_2|A|$ and $\Pi_A:=\frac{1}{|A|}\sum_{g\in A}g$. Then $\Pi_A$ is the orthogonal projector onto the common $+1$ eigenspace of $A$.

For an $n$-qubit pure state $\ket{\psi}$, define its stabilizer nullity by $\nu(\ket{\psi}):=n-\rank(\stab(\ket{\psi}))$. The exact $T$-count of $\ket{\psi}$ is at least $\nu(\ket{\psi})$~\cite{beverland2020lower}. Now consider $\ket{W_s}\otimes \ket{0^{n-s}}$, where $\ket{W_s}:=\frac{1}{\sqrt{s}}\sum_{j=1}^s \ket{e_j}$ and $\ket{e_j}$ denotes the computational basis state of Hamming weight $1$ whose unique $1$ occurs in position $j$. Since $\nu(\ket{W_s}\otimes \ket{0^{n-s}})=s-1$, exact preparation requires at least $s-1$ $T$ gates~\cite{vilmart2025resource}. We show that a linear lower bound survives approximation: $\epsilon$-approximate preparation still requires $\Omega((1-\epsilon^2)s)$ $T$ gates.

We begin with a reformulation of approximate preparation as an optimization problem over stabilizer projectors.

\begin{lemma}
\label{lem:rank-under-trace-distance}
For every $m$-qubit pure state $\ket{\psi}$ and every $0\le \epsilon<1$,
\begin{align}
\max_{\ket{\phi}:\,D_{\mathrm{tr}}(\ket{\psi},\ket{\phi})\le \epsilon}
\rank(\stab(\ket{\phi}))
=
\max_{A:\,\|\Pi_A\ket{\psi}\|_2^2\ge 1-\epsilon^2}
\rank(A).
\end{align}
\end{lemma}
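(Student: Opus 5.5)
We prove the equality by showing that each side is at most the other.

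\emph{Left side at most right side.} Fix $\ket{\phi}$ with $D_{\mathrm{tr}}(\ket{\psi},\ket{\phi})\le\epsilon$, so that $|\braket{\phi|\psi}|^2\ge 1-\epsilon^2$, and set $A:=\stab(\ket{\phi})$. This $A$ is a stabilizer subgroup: its elements commute on $\ket{\phi}$, and $-I\notin A$. Since $\ket{\phi}$ lies in the $+1$ eigenspace of $A$, the projector satisfies $\Pi_A\ket{\phi}=\ket{\phi}$, hence $\Pi_A\succeq\ket{\phi}\!\bra{\phi}$. It follows that
\begin{align}
\|\Pi_A\ket{\psi}\|_2^2=\bra{\psi}\Pi_A\ket{\psi}\ge|\braket{\phi|\psi}|^2\ge 1-\epsilon^2 .
\end{align}
So $A$ is feasible for the right-hand maximization, and $\rank(A)=\rank(\stab(\ket{\phi}))$.

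\emph{Right side at most left side.} Fix a stabilizer subgroup $A$ with $\|\Pi_A\ket{\psi}\|_2^2\ge 1-\epsilon^2>0$. Define the normalized projection
\begin{align}
\ket{\phi}:=\frac{\Pi_A\ket{\psi}}{\|\Pi_A\ket{\psi}\|_2}.
\end{align}
Then $|\braket{\phi|\psi}|^2=\bra{\psi}\Pi_A\ket{\psi}=\|\Pi_A\ket{\psi}\|_2^2\ge 1-\epsilon^2$, which gives $D_{\mathrm{tr}}(\ket{\psi},\ket{\phi})\le\epsilon$. Every $g\in A$ satisfies $g\Pi_A=\Pi_A$, because multiplication by $g$ permutes the elements of $A$. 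Hence $g\ket{\phi}=\ket{\phi}$, so $A\le\stab(\ket{\phi})$ and $\rank(\stab(\ket{\phi}))\ge\rank(A)$.

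\emph{Attainment and the main check.} Both maxima range over finite sets of ranks, in $\{0,\dots,m\}$, and both feasible sets are nonempty, since $\ket{\phi}=\ket{\psi}$ and $A=\{I\}$ qualify. The maxima are therefore attained, and the two inequalities give the claimed equality. The only point needing care is the identity $\Pi_A\succeq\ket{\phi}\!\bra{\phi}$ (equivalently $g\Pi_A=\Pi_A$), which relies on $\Pi_A$ being the projector onto the common $+1$ eigenspace of $A$. This is exactly what the hypothesis $-I\notin A$ together with commutativity of $A$ guarantees, as recalled before the lemma.
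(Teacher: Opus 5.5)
Your proof is correct and follows the same two-direction argument as the paper: you take $A=\stab(\ket{\phi})$ for one inequality, and the normalized projection $\Pi_A\ket{\psi}/\|\Pi_A\ket{\psi}\|_2$ for the other. The only differences are cosmetic. You phrase the first direction via $\Pi_A\succeq\ket{\phi}\!\bra{\phi}$ rather than the paper's ``normalized projection maximizes overlap within $\operatorname{im}(\Pi_A)$'' argument. You also check explicitly that the maxima are attained and that $\stab(\ket{\phi})$ is a valid stabilizer subgroup.
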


\begin{proof}
Write $L$ and $R$ for the left-hand side and the right-hand side.

To prove $L\le R$, fix $\ket{\phi}$ with $D_{\mathrm{tr}}(\ket{\psi},\ket{\phi})\le \epsilon$ and set $A:=\stab(\ket{\phi})$. Since $\ket{\phi}\in \operatorname{im}(\Pi_A)$, the normalized projection of $\ket{\psi}$ onto $\operatorname{im}(\Pi_A)$, denoted by $\ket{\xi}:=\Pi_A\ket{\psi}/\|\Pi_A\ket{\psi}\|_2$, is the unit vector in that subspace with maximum overlap with $\ket{\psi}$. Hence
\begin{align}
D_{\mathrm{tr}}(\ket{\psi},\ket{\xi})
\le
D_{\mathrm{tr}}(\ket{\psi},\ket{\phi})
\le
\epsilon.
\end{align}
Using the trace-distance formula for pure states gives $\|\Pi_A\ket{\psi}\|_2^2\ge 1-\epsilon^2$, so $A$ is feasible for $R$ and therefore $\rank(\stab(\ket{\phi}))=\rank(A)\le R$. Taking the maximum over $\ket{\phi}$ yields $L\le R$.

For the reverse inequality, fix any stabilizer subgroup $A$ with $\|\Pi_A\ket{\psi}\|_2^2\ge 1-\epsilon^2$, and define $\ket{\chi}:=\Pi_A\ket{\psi}/\|\Pi_A\ket{\psi}\|_2$. Then
\begin{align}
D_{\mathrm{tr}}(\ket{\psi},\ket{\chi})
=
\sqrt{1-\|\Pi_A\ket{\psi}\|_2^2}
\le
\epsilon.
\end{align}
Every element of $A$ fixes every vector in $\operatorname{im}(\Pi_A)$, so $A\subseteq \stab(\ket{\chi})$. Hence $\rank(A)\le \rank(\stab(\ket{\chi})) \le L$. Taking the maximum over feasible $A$ gives $R\le L$.
\end{proof}

The next lemma shows that any stabilizer subgroup having noticeable overlap with the $W$ state must have bounded rank.

\begin{lemma}
\label{lem:w-state-large-stabilizer-subgroup}
Let $c\in(0,1]$. If $\frac{4}{c^2}\le s\le n$ and $A\le \mathcal P_n$ satisfies $\|\Pi_A(\ket{W_s}\otimes \ket{0^{n-s}})\|_2\ge c$, then
\begin{align}
\rank(A)
\le
n-c^2s+1+\log_2\frac{2}{c^2}.
\end{align}
\end{lemma}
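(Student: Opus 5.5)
The plan is to split $A$ into its diagonal part and its $X$-part and bound each separately. Write $\ket{\psi}:=\ket{W_s}\otimes\ket{0^{n-s}}$. Every $g\in A$ has the form $g=\omega X^{a}Z^{b}$ with $\omega$ a phase. The map $g\mapsto a$ is a homomorphism $A\to\mathbb F_2^n$; call its image $V$ and its kernel $D$, the diagonal subgroup. Then $\rank(A)=\rank(D)+\dim V$.

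\textbf{Step 1: the support of $\Pi_A\ket{\psi}$.} Since $D\le A$, we have $\Pi_A\le\Pi_D$. The projector $\Pi_D$ is diagonal: it projects onto the span of the computational basis states in an affine subspace $K\subseteq\mathbb F_2^n$ of dimension $n-\rank(D)$. Hence $\Pi_A\ket{\psi}=\Pi_A\Pi_D\ket{\psi}$, and this vector only involves the strings $e_j$ with $j\in J:=\{j\le s: e_j\in K\}$. In particular $c^2\le\|\Pi_D\ket{\psi}\|_2^2=|J|/s$, so $|J|\ge c^2 s$.

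\textbf{Step 2: bounding $\rank(D)$.} The vectors $\{e_j\}_{j\in J}$ are linearly independent, so their affine span has dimension $|J|-1$. This span lies in $K$, so $|J|-1\le n-\rank(D)$. Therefore
\[
\rank(D)\le n-|J|+1\le n-c^2s+1 .
\]

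\textbf{Step 3: bounding $\dim V$.} This is the main obstacle, because Step 2 already spends the whole budget except the additive $\log_2(2/c^2)$. I would expand
\[
\|\Pi_A\ket{\psi}\|_2^2=\frac{1}{|V|}\sum_{a\in V}\bra{\psi}g_a\Pi_D\ket{\psi},
\]
where $g_a\in A$ is any element with $X$-part $a$. A term is nonzero only if $a=0$ or $a=e_i+e_j$ with $i,j\in J$. If $e_i+e_j,\,e_j+e_k\in V$ then $e_i+e_k\in V$, so the weight-$2$ elements of $V$ supported in $J$ come from an equivalence relation on $J$. Let the classes have sizes $k_1,k_2,\dots$. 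Each class contributes at most $k_\alpha^2/s$ in absolute value, which gives
\[
c^2\le\frac{1}{s|V|}\sum_\alpha k_\alpha^2 .
\]
The aim is to trade this against Step 2. A class of size $k$ inside $V$ forces $k-1$ independent weight-$2$ vectors into $V$. Because $A$ is abelian, each such vector must commute with every element of $D$, and this should cut $\rank(D)$ down by the corresponding amount. In other words, I would sharpen Step 2 to $\rank(D)\le n-|J|+1-(\text{number of merged coordinates})$. The only part of $V$ that is not paid for in this way is what raises $|V|$ above $\max_\alpha k_\alpha$, and that excess gets absorbed into $\log_2$ of $\sum_\alpha k_\alpha^2/(s\,c^2)$.

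\textbf{Step 4: assembling the bound.} Combining the refined bounds should give $\rank(A)\le n-|J|+1+\log_2(|J|/(c^2 s))+\log_2 2$. Using $c^2 s\le|J|\le s$ together with a short monotonicity argument in $|J|$, one then obtains $n-c^2s+1+\log_2\frac{2}{c^2}$. The hypothesis $s\ge 4/c^2$ enters here, to keep the function $x\mapsto -x+\log_2 x$ in its decreasing range.

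If the commutation bookkeeping in Step 3 proves awkward, the fallback is to replace $J$ by the subset of indices whose amplitude in $\Pi_A\ket{\psi}$ is at least $c/(2\sqrt{s})$, which is a Markov-type heavy set. A constant fraction of the norm is concentrated on this subset. Running Steps 1–2 on it gives the $-c^2 s$ term up to constants, and the normalization loss is what produces the $\log_2(2/c^2)$ term.
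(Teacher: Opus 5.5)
\textbf{The gap is the trade-off in Step~3.} Your Steps~1 and~2 are correct and give the same bound $\rank(D)\le n-|J|+1$ as the paper. Your expansion $c^2\le\frac{1}{s|V|}\sum_\alpha k_\alpha^2$ in Step~3 is also correct. What fails is the proposed sharpening $\rank(D)\le n-|J|+1-(\text{number of merged coordinates})$, which is false.

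Commuting with an element $\omega X_iX_jZ^{b}\in A$ forces $b'_i=b'_j$ for every $\pm Z^{b'}\in D$. For $i,j\in J$, however, this is already implied by $D$ fixing both $\ket{e_i}\otimes\ket{0^{n-s}}$ and $\ket{e_j}\otimes\ket{0^{n-s}}$, which is exactly the constraint used in Step~2. So commutativity imposes nothing new on $D$.

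Concretely, take $A=\langle -Z_1\cdots Z_s,\ Z_{s+1},\dots,Z_n,\ X_1X_2\rangle$. Then:
\begin{itemize}
\item $J=\{1,\dots,s\}$ and $\rank(D)=n-s+1=n-|J|+1$, even though coordinates $1$ and $2$ are merged;
\item $\|\Pi_A(\ket{W_s}\otimes\ket{0^{n-s}})\|_2^2=\frac{1}{2}+\frac{1}{s}$, so $A$ satisfies the hypotheses with $c^2=\frac12$ and $s\ge 8$.
\end{itemize}
The fallback does not rescue the argument either. A Markov-type heavy set recovers the $c^2s$ term only up to a constant factor, whereas the stated bound needs coefficient exactly $1$ on $c^2 s$.

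\textbf{The fix needs no trade-off.} The additive $\log_2(2/c^2)$ is enough to pay for $\dim V$ by itself, and Step~2 can be used unchanged. In your expansion, each unordered pair $\{i,j\}$ inside a class yields a distinct nonzero vector $e_i+e_j\in V$. Hence $\sum_\alpha\binom{k_\alpha}{2}\le|V|-1$, and so $\sum_\alpha k_\alpha^2\le|J|+2(|V|-1)$. This gives
\[
c^2\le\frac{1}{|V|}+\frac{2}{s}\le\frac{1}{|V|}+\frac{c^2}{2},
\]
that is, $\dim V\le\log_2(2/c^2)$. Adding this to Step~2 proves the lemma.

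This is precisely the paper's argument. The paper bounds $|\bra{\psi}g\ket{\psi}|\le 2/s$ for every non-diagonal $g\in A$ and obtains $c^2\le 2^{-t}\|\Pi_D\ket{\psi}\|_2^2+2/s\le 2^{-t}+2/s$.

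If you keep the $|J|$ term in your expansion, you even get $|V|\le 2|J|/(c^2s)$, which is exactly your Step~4 target. Your monotonicity argument (using $|J|\ge c^2s\ge 4$) then yields the slightly stronger bound $\rank(A)\le n-c^2s+2$.
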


\begin{proof}
Set $\ket{\psi}:=\ket{W_s}\otimes \ket{0^{n-s}}$, let
\begin{align}
\mathcal Z_n
:=
\left\{
\pm Z_1^{a_1}\cdots Z_n^{a_n} : a_1,\dots,a_n\in\{0,1\}
\right\},
\end{align}
and write $D:=A\cap \mathcal Z_n$, $d:=\rank(D)$, and $t:=\rank(A)-\rank(D)$. Since the common $+1$ eigenspace of $A$ is contained in that of $D$, we have $\|\Pi_D\ket{\psi}\|_2\ge \|\Pi_A\ket{\psi}\|_2\ge c$.

Because $D$ is diagonal, $\Pi_D$ acts diagonally in the computational basis. Let $r$ be the number of states among $\ket{e_1}\otimes \ket{0^{n-s}},\dots,\ket{e_s}\otimes \ket{0^{n-s}}$ fixed by every element of $D$. Then $\|\Pi_D\ket{\psi}\|_2^2=r/s$, so $r\ge c^2s$. After relabeling the first $s$ qubits if necessary, we may assume that the surviving basis states are $\ket{e_1}\otimes \ket{0^{n-s}},\dots,\ket{e_r}\otimes \ket{0^{n-s}}$.

A diagonal Pauli $g=\varepsilon \prod_{j=1}^n Z_j^{a_j}$ fixes $\ket{e_1}\otimes \ket{0^{n-s}},\dots,\ket{e_r}\otimes \ket{0^{n-s}}$ iff $\varepsilon(-1)^{a_1}=\cdots=\varepsilon(-1)^{a_r}=1$, equivalently $a_1=\cdots=a_r$ and $\varepsilon=(-1)^{a_1}$. Thus the diagonal Paulis that fix these states form a subgroup of rank $n-r+1$, generated by $-Z_1\cdots Z_r,Z_{r+1},\dots,Z_n$. Since $D$ is contained in this subgroup,
\begin{align}
d\le n-r+1\le n-c^2s+1.
\end{align}

We next bound the contribution of non-diagonal Paulis. Let $g\in \mathcal P_n\setminus \mathcal Z_n$, and let $F$ be the set of qubits on which $g$ acts by $X$ or $Y$. If $F$ intersects $\{s+1,\dots,n\}$, then $g(\ket{e_j}\otimes \ket{0^{n-s}})$ lies outside the support of $\ket{\psi}$ for every $j\in[s]$. Otherwise $F\subseteq [s]$. For a weight-$1$ basis state $\ket{e_j}$, the first $s$ qubits of $g(\ket{e_j}\otimes \ket{0^{n-s}})$ have Hamming weight $|F|+1$ when $j\notin F$ and $|F|-1$ when $j\in F$. Thus it can lie in the support of $\ket{\psi}$ only when $|F|=2$ and $j\in F$, so at most two basis states are mapped back into the support. Therefore
\begin{align}
\label{eq:non-diagonal-w-overlap}
\left|\bra{\psi}g\ket{\psi}\right|
\le
\frac{2}{s}.
\end{align}

It remains to bound the full projector overlap. Expanding $\|\Pi_A\ket{\psi}\|_2^2$ gives
\begin{align}
\|\Pi_A\ket{\psi}\|_2^2
&=
\bra{\psi}\Pi_A\ket{\psi} \\
&=
\frac{1}{|A|}\sum_{g\in D}\bra{\psi}g\ket{\psi}
+
\frac{1}{|A|}\sum_{g\in A\setminus D}\bra{\psi}g\ket{\psi}.
\end{align}
Since $|A|=2^{d+t}$ and $|D|=2^d$, the contribution of $D$ equals
\begin{align}
\frac{1}{|A|}\sum_{g\in D}\bra{\psi}g\ket{\psi}
=
2^{-t}\bra{\psi}\Pi_D\ket{\psi}
=
2^{-t}\|\Pi_D\ket{\psi}\|_2^2.
\end{align}
Using \cref{eq:non-diagonal-w-overlap}, we obtain
\begin{align}
c^2
&\le
\|\Pi_A\ket{\psi}\|_2^2 \\
&\le
2^{-t}\|\Pi_D\ket{\psi}\|_2^2
+
\frac{1}{|A|}\sum_{g\in A\setminus D}\left|\bra{\psi}g\ket{\psi}\right| \\
&\le
2^{-t}\|\Pi_D\ket{\psi}\|_2^2
+
\frac{|A|-|D|}{|A|}\cdot \frac{2}{s} \\
&\le
2^{-t}+\frac{2}{s}.
\end{align}
Since $s\ge 4/c^2$, we have $2/s\le c^2/2$, and therefore $2^{-t}\ge c^2/2$, or equivalently $t\le \log_2(2/c^2)$. Thus, $\rank(A)=d+t\le n-c^2s+1+\log_2\frac{2}{c^2}$.
\end{proof}

We can now combine the previous two lemmas to obtain the approximate-preparation lower bound for $W$ states.

\begin{lemma}
\label{lem:w-state-lower-bound}
Assume $0\le \epsilon\le \frac{1}{2}$ and $\frac{4}{1-\epsilon^2}\le s\le n$.
\begin{align}
\mathcal T_\epsilon(\ket{W_s}\otimes \ket{0^{n-s}})
=
\Omega((1-\epsilon^2)s).
\end{align}
\end{lemma}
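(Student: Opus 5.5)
The plan is to combine \cref{lem:rank-under-trace-distance} and \cref{lem:w-state-large-stabilizer-subgroup} with the stabilizer-nullity lower bound of \cite{beverland2020lower}, taking care of the ancilla register.

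Fix a unitary Clifford+$T$ circuit with $t$ $T$ gates acting on $n+a$ qubits. It outputs $\ket{\phi}$ with $D_{\mathrm{tr}}(\ket{\phi},\ket{\psi}\otimes\ket{0^a})\le\epsilon$, where $\ket{\psi}:=\ket{W_s}\otimes\ket{0^{n-s}}$. Because $\ket{\phi}$ is produced exactly by this circuit from $\ket{0^{n+a}}$, the bound of \cite{beverland2020lower} gives $t\ge \nu(\ket{\phi})=(n+a)-\rank(\stab(\ket{\phi}))$. That bound holds since each $T$ gate raises the nullity by at most one and Clifford gates preserve it.

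It remains to upper bound $\rank(\stab(\ket{\phi}))$. Apply \cref{lem:rank-under-trace-distance} with $m=n+a$ and target $\ket{\psi}\otimes\ket{0^a}$. This produces a stabilizer subgroup $A\le\mathcal P_{n+a}$ with $\rank(A)\ge\rank(\stab(\ket{\phi}))$ and $\|\Pi_A(\ket{\psi}\otimes\ket{0^a})\|_2^2\ge 1-\epsilon^2$.

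Next observe that $\ket{\psi}\otimes\ket{0^a}=\ket{W_s}\otimes\ket{0^{n+a-s}}$ is exactly the state covered by \cref{lem:w-state-large-stabilizer-subgroup}, with $n$ replaced by $n+a$. That lemma's hypothesis only needs $4/c^2\le s\le n+a$. Set $c:=\sqrt{1-\epsilon^2}$; then $c^2\ge 3/4$ and the assumption $s\ge 4/(1-\epsilon^2)$ supplies exactly what is required. The lemma then yields
\begin{align}
\rank(A)\le (n+a)-(1-\epsilon^2)s+1+\log_2\frac{2}{1-\epsilon^2}.
\end{align}
Combining the two bounds, the $n+a$ terms cancel and
\begin{align}
t\ge (1-\epsilon^2)s-1-\log_2\frac{2}{1-\epsilon^2}\ge (1-\epsilon^2)s-1-\log_2\frac{8}{3}.
\end{align}
The last inequality uses $\epsilon\le 1/2$.

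The main obstacle is that this additive constant must be absorbed into the $\Omega$. With $s\ge 4/(1-\epsilon^2)$ we get $(1-\epsilon^2)s\ge4$, while the subtracted constant is below $2.42$. Hence $t\ge(1-\epsilon^2)s-2.42\ge 0.39\,(1-\epsilon^2)s$, since $2.42\le 0.61\cdot 4\le 0.61\,(1-\epsilon^2)s$. This gives $t=\Omega((1-\epsilon^2)s)$.

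Two points need checking in the write-up. First, the ancillas really do cancel, which they do because both the nullity and the rank bound scale with the total qubit count $n+a$. Second, the nullity lower bound of \cite{beverland2020lower} applies to exact preparation of $\ket{\phi}$ from $\ket{0^{n+a}}$ by a unitary circuit, which is exactly our setting.
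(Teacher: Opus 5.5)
Your proposal is correct and follows essentially the same route as the paper. You lower-bound $t$ by the stabilizer nullity of the $(n+a)$-qubit output, use the easy direction of the rank-under-trace-distance lemma to get a stabilizer subgroup with overlap at least $1-\epsilon^2$ with $\ket{W_s}\otimes\ket{0^{n+a-s}}$, and apply the $W$-state rank lemma with $n$ replaced by $n+a$, so the ancillas cancel. Your explicit absorption of the constant $1+\log_2\frac{2}{1-\epsilon^2}\le 2.42$ via $(1-\epsilon^2)s\ge 4$ makes rigorous a step the paper leaves implicit in its final ``$=\Omega((1-\epsilon^2)s)$''.
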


\begin{proof}
Let $\ket{\psi}:=\ket{W_s}\otimes \ket{0^{n-s}}$, and let $t:=\mathcal T_\epsilon(\ket{\psi})$. By definition of $t$, there exist an integer $a\ge 0$ and an $(n+a)$-qubit pure state $\ket{\phi}$ prepared by a unitary Clifford+$T$ circuit with $t$ $T$ gates such that
\begin{align}
D_{\mathrm{tr}}(\ket{\phi},\ket{\psi}\otimes \ket{0^a})\le \epsilon.
\end{align}
Set $\ket{\psi'}:=\ket{\psi}\otimes \ket{0^a}=\ket{W_s}\otimes \ket{0^{n-s+a}}$. Since $\frac{4}{1-\epsilon^2}\le s\le n\le n+a$, \cref{lem:w-state-large-stabilizer-subgroup} applies to $\ket{\psi'}$. The exact $T$-count of a pure state is at least its stabilizer nullity. Therefore, by \cref{lem:rank-under-trace-distance} and \cref{lem:w-state-large-stabilizer-subgroup},
\begin{align}
t
&\ge
\min_{\ket{\varphi}:\,D_{\mathrm{tr}}(\ket{\psi'},\ket{\varphi})\le \epsilon}
\Bigl((n+a)-\rank(\stab(\ket{\varphi}))\Bigr) \\
&=
(n+a)-
\max_{\ket{\varphi}:\,D_{\mathrm{tr}}(\ket{\psi'},\ket{\varphi})\le \epsilon}
\rank(\stab(\ket{\varphi})) \\
&=
(n+a)-
\max_{A:\,\|\Pi_A\ket{\psi'}\|_2^2\ge 1-\epsilon^2}
\rank(A) \\
&\ge
(1-\epsilon^2)s-1-\log_2\frac{2}{1-\epsilon^2}
=
\Omega((1-\epsilon^2)s).
\end{align}
\end{proof}

\section{Conclusion}
\label{sec:conclusion-framework}

We studied sparse quantum state preparation in the fault-tolerant Clifford+$T$ model. Our main upper bound shows that any $n$-qubit $s$-sparse state can be prepared up to constant error using $\widetilde{O}(\min\{s,\ n^{3/4}\sqrt{s}\})$ $T$ gates. Thus, while the previous linear $O(s)$ bound remains optimal in the small-support regime, it is not inherent once the support becomes sufficiently large. The key technical ingredient is a support-aware synthesis theorem for sparse Boolean functions, which may be of independent interest beyond the state-preparation setting.

We also established lower bounds of $\Omega(\min\{s,\sqrt{ns}\})$ for every $0<\epsilon\le 1/6$ and $2\le s\le 2^{n/2}$. In particular, the linear upper bound is tight when $s=O(n)$, while for $\widetilde{\Omega}(n^{3/2}) \le s \le 2^{n/2}$ our upper and lower bounds differ by at most a factor of $\widetilde{O}(n^{1/4})$. An important open question is to close this remaining gap, and more specifically to determine whether the crossover around $s=n^{3/2}$ up to polylogarithmic factors and the factor $n^{3/4}$ arise from the present algorithm or reflect inherent limitations of fault-tolerant sparse quantum state preparation.

\bibliographystyle{alpha}
\bibliography{refs}

@article{babbush2018encoding,
  title={Encoding electronic spectra in quantum circuits with linear T complexity},
  author={Babbush, Ryan and Gidney, Craig and Berry, Dominic W and Wiebe, Nathan and McClean, Jarrod and Paler, Alexandru and Fowler, Austin and Neven, Hartmut},
  journal={Physical Review X},
  volume={8},
  number={4},
  pages={041015},
  year={2018},
  publisher={APS}
}

@article{vilmart2025resource,
  title={Resource-Efficient Synthesis of Sparse Quantum States},
  author={Vilmart, Renaud and Ty, Sunheang and Mang, Chetra},
  journal={arXiv preprint arXiv:2508.05386},
  year={2025}
}

@article{rupprecht2026sparse,
  title={Sparse quantum state preparation with improved Toffoli cost},
  author={Rupprecht, Felix and W{\"o}lk, Sabine},
  journal={arXiv preprint arXiv:2601.09388},
  year={2026}
}

@article{litinski2019game,
  title={A game of surface codes: Large-scale quantum computing with lattice surgery},
  author={Litinski, Daniel},
  journal={Quantum},
  volume={3},
  pages={128},
  year={2019},
  publisher={Verein zur F{\"o}rderung des Open Access Publizierens in den Quantenwissenschaften}
}

@article{beverland2020lower,
  title={Lower bounds on the non-Clifford resources for quantum computations},
  author={Beverland, Michael and Campbell, Earl and Howard, Mark and Kliuchnikov, Vadym},
  journal={Quantum Science \& Technology},
  volume={5},
  number={3},
  pages={035009},
  year={2020},
  publisher={IOP Publishing}
}

@article{wang2025faster,
  title={Faster State Preparation with Randomization},
  author={Wang, Yue and Zhang, Xiao-Ming and Yuan, Xiao and Zhao, Qi},
  journal={arXiv e-prints},
  pages={arXiv--2510},
  year={2025}
}

@article{harrow2025randomized,
  title={Randomized truncation of quantum states},
  author={Harrow, Aram W and Lowe, Angus and Witteveen, Freek},
  journal={arXiv preprint arXiv:2510.08518},
  year={2025}
}

@book{nielsen2010quantum,
  title={Quantum computation and quantum information},
  author={Nielsen, Michael A and Chuang, Isaac L},
  year={2010},
  publisher={Cambridge university press}
}

@article{bravyi2012magic,
  title={Magic-state distillation with low overhead},
  author={Bravyi, Sergey and Haah, Jeongwan},
  journal={Physical Review A—Atomic, Molecular, and Optical Physics},
  volume={86},
  number={5},
  pages={052329},
  year={2012},
  publisher={APS}
}

@article{khattar2025rise,
  title={Rise of conditionally clean ancillae for efficient quantum circuit constructions},
  author={Khattar, Tanuj and Gidney, Craig},
  journal={Quantum},
  volume={9},
  pages={1752},
  year={2025},
  publisher={Verein zur F{\"o}rderung des Open Access Publizierens in den Quantenwissenschaften}
}

@article{luo2025space,
  title={Space-time tradeoff for sparse quantum state preparation},
  author={Luo, Jingquan and Li, Guanzhong and Li, Lvzhou},
  journal={arXiv preprint arXiv:2506.16964},
  year={2025}
}

@inproceedings{gosset2024quantum,
  title={Quantum state preparation with optimal T-count},
  author={Gosset, David and Kothari, Robin and Wu, Kewen},
  booktitle={Proceedings of the 2026 Annual ACM-SIAM Symposium on Discrete Algorithms (SODA)},
  pages={3378--3406},
  year={2026},
  organization={SIAM}
}

@article{low2024trading,
  title={Trading T gates for dirty qubits in state preparation and unitary synthesis},
  author={Low, Guang Hao and Kliuchnikov, Vadym and Schaeffer, Luke},
  journal={Quantum},
  volume={8},
  pages={1375},
  year={2024},
  publisher={Verein zur F{\"o}rderung des Open Access Publizierens in den Quantenwissenschaften}
}

@article{zhang2024circuit,
  title={Circuit complexity of quantum access models for encoding classical data},
  author={Zhang, Xiao-Ming and Yuan, Xiao},
  journal={npj Quantum Information},
  volume={10},
  number={1},
  pages={42},
  year={2024},
  publisher={Nature Publishing Group UK London}
}

@InProceedings{luo2024circuit,
  author =	{Li, Lvzhou and Luo, Jingquan},
  title =	{{Nearly Optimal Circuit Size for Sparse Quantum State Preparation}},
  booktitle =	{52nd International Colloquium on Automata, Languages, and Programming (ICALP 2025)},
  pages =	{113:1--113:19},
  series =	{Leibniz International Proceedings in Informatics (LIPIcs)},
  ISBN =	{978-3-95977-372-0},
  ISSN =	{1868-8969},
  year =	{2025},
  volume =	{334},
  editor =	{Censor-Hillel, Keren and Grandoni, Fabrizio and Ouaknine, Jo\"{e}l and Puppis, Gabriele},
  publisher =	{Schloss Dagstuhl -- Leibniz-Zentrum f{\"u}r Informatik},
  address =	{Dagstuhl, Germany},
  URL =		{https://drops.dagstuhl.de/entities/document/10.4230/LIPIcs.ICALP.2025.113},
  URN =		{urn:nbn:de:0030-drops-234900},
  doi =		{10.4230/LIPIcs.ICALP.2025.113}
}

@article{mozafari2022efficient,
  title={Efficient deterministic preparation of quantum states using decision diagrams},
  author={Mozafari, Fereshte and De Micheli, Giovanni and Yang, Yuxiang},
  journal={Physical Review A},
  volume={106},
  number={2},
  pages={022617},
  year={2022},
  publisher={APS}
}

@article{de2022double,
  title={Double sparse quantum state preparation},
  author={de Veras, Tiago ML and da Silva, Leon D and da Silva, Adenilton J},
  journal={Quantum Information Processing},
  volume={21},
  number={6},
  pages={204},
  year={2022},
  publisher={Springer}
}

@article{mao2024towards,
  title={Toward optimal circuit size for sparse quantum state preparation},
  author={Mao, Rui and Tian, Guojing and Sun, Xiaoming},
  journal={Physical Review A},
  volume={110},
  number={3},
  pages={032439},
  year={2024},
  publisher={APS}
}

@article{ramacciotti2023simple,
  title={Simple quantum algorithm to efficiently prepare sparse states},
  author={Ramacciotti, Debora and Lefterovici, Andreea I and Rotundo, Antonio F},
  journal={Physical Review A},
  volume={110},
  number={3},
  pages={032609},
  year={2024},
  publisher={APS}
}

@article{harrow2009quantum,
  title={Quantum algorithm for linear systems of equations},
  author={Harrow, Aram W and Hassidim, Avinatan and Lloyd, Seth},
  journal={Physical Review Letters},
  volume={103},
  number={15},
  pages={150502},
  year={2009},
  publisher={APS}
}

@article{childs2017quantum,
  title={Quantum algorithm for systems of linear equations with exponentially improved dependence on precision},
  author={Childs, Andrew M and Kothari, Robin and Somma, Rolando D},
  journal={SIAM Journal on Computing},
  volume={46},
  number={6},
  pages={1920--1950},
  year={2017},
  publisher={SIAM}
}

@article{low2019hamiltonian,
  title={Hamiltonian simulation by qubitization},
  author={Low, Guang Hao and Chuang, Isaac L},
  journal={Quantum},
  volume={3},
  pages={163},
  year={2019},
  publisher={Verein zur F{\"o}rderung des Open Access Publizierens in den Quantenwissenschaften}
}

@article{kerenidis2019q,
  title={q-means: A quantum algorithm for unsupervised machine learning},
  author={Kerenidis, Iordanis and Landman, Jonas and Luongo, Alessandro and Prakash, Anupam},
  journal={Advances in Neural Information Processing Systems},
  volume={32},
  year={2019}
}

@inproceedings{shende2005synthesis,
  title={Synthesis of quantum logic circuits},
  author={Shende, Vivek V and Bullock, Stephen S and Markov, Igor L},
  booktitle={Proceedings of the 2005 Asia and South Pacific Design Automation Conference},
  pages={272--275},
  year={2005}
}

@article{plesch2011quantum,
  title={Quantum-state preparation with universal gate decompositions},
  author={Plesch, Martin and Brukner, {\v{C}}aslav},
  journal={Physical Review A},
  volume={83},
  number={3},
  pages={032302},
  year={2011},
  publisher={APS}
}

@article{iten2016quantum,
  title={Quantum circuits for isometries},
  author={Iten, Raban and Colbeck, Roger and Kukuljan, Ivan and Home, Jonathan and Christandl, Matthias},
  journal={Physical Review A},
  volume={93},
  number={3},
  pages={032318},
  year={2016},
  publisher={APS}
}

@inproceedings{gleinig2021efficient,
  title={An efficient algorithm for sparse quantum state preparation},
  author={Gleinig, Niels and Hoefler, Torsten},
  booktitle={Proceedings of the 58th ACM/IEEE Design Automation Conference},
  pages={433--438},
  year={2021},
  organization={IEEE}
}

@article{malvetti2021quantum,
  title={Quantum circuits for sparse isometries},
  author={Malvetti, Emanuel and Iten, Raban and Colbeck, Roger},
  journal={Quantum},
  volume={5},
  pages={412},
  year={2021},
  publisher={Verein zur F{\"o}rderung des Open Access Publizierens in den Quantenwissenschaften}
}

@article{de2020circuit,
  title={Circuit-based quantum random access memory for classical data with continuous amplitudes},
  author={de Veras, Tiago ML and De Araujo, Ismael CS and Park, Daniel K and da Silva, Adenilton J},
  journal={IEEE Transactions on Computers},
  volume={70},
  number={12},
  pages={2125--2135},
  year={2020},
  publisher={IEEE}
}

@article{zhang2022quantum,
  title={Quantum state preparation with optimal circuit depth: Implementations and applications},
  author={Zhang, Xiao-Ming and Li, Tongyang and Yuan, Xiao},
  journal={Physical Review Letters},
  volume={129},
  number={23},
  pages={230504},
  year={2022},
  publisher={APS}
}

% \newpage
% \appendices

\end{document}